\newtheoremstyle{mytheorem}
  {0pt}
  {0pt}
  {\itshape}
  {}
  {\bfseries}
  {:}
  {0.5em}
  {}
\theoremstyle{mytheorem}
\newtheorem{theorem}{Theorem}
\newtheorem{lemma}{Lemma}
\newtheorem{corollary}{Corollary}
\newtheorem{claim}{Claim}
\crefname{claim}{claim}{claims}
\newtheorem{definition}{Definition}
\newtheorem{remark}{Remark}
\newtheorem{example}{Example}
\newcommand{\deEp}{{\delta_\epsilon}}
\newcommand{\Z}{\mathbb{Z}}
\newcommand{\R}{\mathbb{R}}
\newcommand{\N}{\mathbb{N}}
\newcommand{\bw}{\mathbf{w}}
\newcommand{\bu}{\mathbf{u}}
\newcommand{\bh}{\mathbf{h}}
\newcommand{\by}{\mathbf{y}}
\newcommand{\bz}{\mathbf{z}}
\newcommand{\bH}{\mathbf{H}}
\newcommand{\bB}{\mathbf{B}}
\newcommand{\bA}{\mathbf{A}}
\newcommand{\bI}{\mathbf{I}}
\newcommand{\bS}{\mathbf{S}}
\newcommand{\bD}{\mathbf{D}}
\newcommand{\bU}{\mathbf{U}}
\newcommand{\bSig}{\mathbf{\Sigma}}
\newcommand{\bR}{\mathbf{R}}
\newcommand{\bx}{\mathbf{x}}
\newcommand{\bM}{\mathbf{M}}
\newcommand{\ba}{\mathbf{a}}
\newcommand{\bb}{\mathbf{b}}
\newcommand{\be}{\mathbf{e}}
\newcommand{\defined}{\stackrel{\triangle}{=}}
\newcommand{\off}[1]{}
\crefname{algocf}{alg.}{algs.}
\Crefname{algocf}{Algorithm}{Algorithms}
\begin{document}
\title{Integer Forcing: Effective SNR Distribution and Practical Block-Based Schemes}
\author{%
  \IEEEauthorblockN{Ron Meiry, Omer Gurewitz and Asaf Cohen}\\
  \IEEEauthorblockA{%
    Department of Communication Systems Engineering\\
Ben-Gurion University of the Negev\\
    Email: \{ronmeir@post., gurewitz@, coasaf@\}bgu.ac.il}
  }
\maketitle
\begin{abstract}
Integer Forcing (IF) is a novel linear receiver architecture, where instead of separating the codewords sent by each transmitter, and decoding them individually, forces integer-valued linear combinations at each receive antenna, and decodes the linear combinations. The original codewords are obtained by inverting the integer-valued matrix. While demonstrating superior performance, IF requires complex optimization in order to find the optimal linear combinations, and demands either using multi-level nested lattice codes, or reducing the rates of all transmitters to equal the weakest one. Finally, the distribution of the resulting effective SNR is hard to evaluate.

In this paper, we first give simple upper and lower bounds on the effective SNR of a single linear combination in IF. These expressions allow us to easily bound the distribution of the effective SNR for any given linear combination used. We then suggest two simple block-based IF schemes. These schemes, while sub-optimal, significantly reduce the complexity of the optimization process, and, more importantly, do not require reducing the rates of all transmitters, as decoding is done block-wise. Finally, we bound the distribution of the effective SNR of the decoding schemes, and show via simulations the superiority of block-wise schemes at low SNR.
\end{abstract}

\section{Introduction}\label{sec:intro}	
Linear receivers, such as the \emph{Zero Forcing} (ZF) and the \emph{Minimum Mean Square Error} (MMSE) receivers, where developed to reduce the receiver complexity relatively to \emph{Maximum Likelihood} (ML), and are commonly used in \emph{Multiple Input Multiple Output} (MIMO) communication.  The \emph{Integer Forcing} (IF)  receiver \cite{zhan2014integer} is a more recent technique, which uses \emph{Compute and Forward} (CnF) \cite{nazer2011compute} as a building block and allows the receiver to decode linear combinations of the sent symbols. It can be considered as a two phased decoding process. First, choose a full ranked matrix representing linear combinations of the transmitted messages. Next, at each virtual antenna, decode the corresponding linear combination. Different linear combinations result in different rates at which the decoder is able to decode. The strength of IF lies in the decoder's ability of to find a suitable matrix. \Cref{fig:IF_scheme} depicts the receiver architecture. Note that setting $\bA=\bI$ and $\bB$ to be the pseudoinverse of $\bH$, ZF is achieved, and by setting $\bB= \left(\frac{1}{P} \bI + \bH^T\bH\right)^{-1}\bH^T$, MMSE is achieved. 
Since $\bA$ can be any full ranked matrix and not necessarily the identity matrix, IF can be superior to ZF or MMSE. In fact, IF achieves capacity up to a constant gap \cite{ordentlich2015precoded}.

However, along side with the fact that choosing the best linear combination for any channel realization is the strength of IF, it is also its main drawback. Computing the optimal integer-valued matrix $\bA$ is exponential in its size. Hence, an algorithm to choose a suboptimal $\bA$ which has good performance is required. In other words, in order to be able to use IF in practice, the number of linear combinations (represented as vectors) being checked must be limited. Moreover, IF requires either using nested lattice codes, to allow the receiver to decode linear combinations where each symbol is encoded at a different rate, or restricting all transmitting antennas to the lowest rate. Thus, it is either a complexity burden, or a hit on the sum-rate.


\subsection{Main Contribution}
The contribution of this work is twofold. In the first part, we present lower and upper bounds on the effective SNR of a linear combination in IF. Then, under a fixed linear combination, we calculate the distribution of the lower bound. The distribution of the effective SNR in IF is interesting by its own right, but also essential for the analysis of the schemes presented in the second part of this work.

The second part offers two distributed IF-based schemes: \emph{Block-IF} (B-IF) and \emph{Norm-Bounded-IF} (NB-IF). Both enforce $\bA$ to be a block matrix, thus, $\bA$ can be found efficiently. Moreover, when block matrices are used, the transmitted messages mix only with other messages from the same block, hence, each block can use a different rate. Furthermore, both schemes are lower and upper bounded and the distribution of the lower bound from the first part is used for the analysis.

Finally, using simulations, we show that for relatively small transmission powers, e.g., less then 3dB when $M_R$=$M_T$=4, both B-IF and NB-IF are not only more efficient than IF, but also achieve higher rates. In fact, our numerical results show that as long as the transmission power $P$ is smaller than 12dB, NB-IF is better than the upper bound on IF which was demonstrated by Ordentlich \emph{et al.} \cite{ordentlich2015precoded}.
\subsection{Related Works}
A large body of work is available on MIMO techniques. A survey can be found in \cite{mietzner2009multiple}. Herein, we only briefly mention the most relevant ones.
The optimal receiver (Joint ML) requires searching for the most likely set of transmitted streams. For the general case, this problem is very complicated.  In order to reduce the computation in exchange for rate, linear receivers such as ZF and MMSE were developed. The performance of ZF, including the distribution of the SNR, is well known. For MMSE, it is known only for special cases. Analysis of performance of ZF and MMSE can be found in \cite{jiang2011performance}.

\emph{Computer and Forward} (CnF) was presented in \cite{nazer2011compute} as a relaying method which allows relays to decode linear combinations of the codewords sent. CnF is based on a nested lattice codebook, which allows decoding linear combinations of the transmitted streams. After a linear combination decoded by a relay, it is forward to the next relay or to the destination.
\emph{Integer Forcing} (IF) was first presented in \cite{zhan2014integer} and builds on the concepts in CnF. It is a MIMO linear receiver which instead of decoding the original codewords, decodes linear combinations of them. 
In \cite{zhan2010integer}, a distributed architectures which reduces the complexity at the receiver for IF implementation is presented. This work also investigates successive interference cancellation (SIC) for IF.
Upper and lower bound on the rate of IF can be found in \cite{ordentlich2015precoded}. \cite{ordentlich2013successive} shows how to achieve the capacity using successive IF, when allowing different transmission rates. In addition, \cite{ordentlich2013successive} shows that using successive IF is more fair than MMSE-SIC, which also achieves the capacity, since it supports a variety of rate tuples, including those achievable in MMSE-SIC. 

At the heart of IF is the choice of the coefficients for the linear combinations to be decoded. \cite{mejri2013practical} presents practical methods to choose $\bA$ base of the LLL algorithm and by bounding the search to a sphere of a given radius. In \cite{richter2012efficient} a \emph{branch-and-bound} based algorithm for CnF is provided. The algorithm calculates the coefficient vector, which results in the highest computation rate at a single node.
In \cite{wei2013integer}, practical and efficient suboptimal algorithms to design the IF coefficient matrix are given. Those algorithms are based on the slowest descent method. In order to design the IF coefficient matrix with integer elements, first a feasible searching set is generated based on the slowest descent method. Then, integer vectors within the searching set are picked in order to construct the full rank IF coefficient matrix.

\section{Preliminaries}
\subsection{Model}
We consider an up-link MIMO Channel with $M_T$ transmitting antennas and a base station (BS) equipped with $M_R$ receiving antennas. We assume that $2\leq M_T \leq M_R$.  When $M_T$ users are transmitting simultaneously, each sub channel from a transmitting antenna to a receiving antenna is represented by a channel coefficient $h_{ij} \in \R$.
Throughout, we use boldface lowercase to refer vectors, e.g., $\bh_i = [h_{i1},h_{i2},\ldots,h_{i M_R}]^T \in \R^{M_R}$ represents the channel vector of the $i^{th}$ user to the BS. Boldface uppercase are used to refer to matrices, e.g., the channel matrix between the users to the BS is denoted by $\bH = [\bh_1,\bh_2,\ldots,\bh_{M_T}] \in \R^{M_R \times M_T}$. It is common to assume that the entries of $\bH$ are i.i.d.\ Normal RVs. We assume unit variance. We further assume that the receiver has channel state information, i.e., the matrix $\bH$ is known to the receiver. The signal transmitted by the $i^{th}$ user is  $x_i \in \R$. At the receiving antennas, we assume i.i.d.\ Normal noise with variance $N_0$, which is denoted by $\bz \in \R^{M_R}$. Accordingly, $\by \in \R^{M_R}$, the received vector at the BS is $\by=\bH \bx+\bz$,
where $\bx = [x_1, \ldots, x_{M_T}]^T$ is the transmitted vector. The average transmission power of each user is bounded by $P$. 
\subsection{Integer-Forcing Linear Receiver}\label{sc:IF_intro}
\begin{figure}
	\centering
		\includegraphics[width=0.7\textwidth]{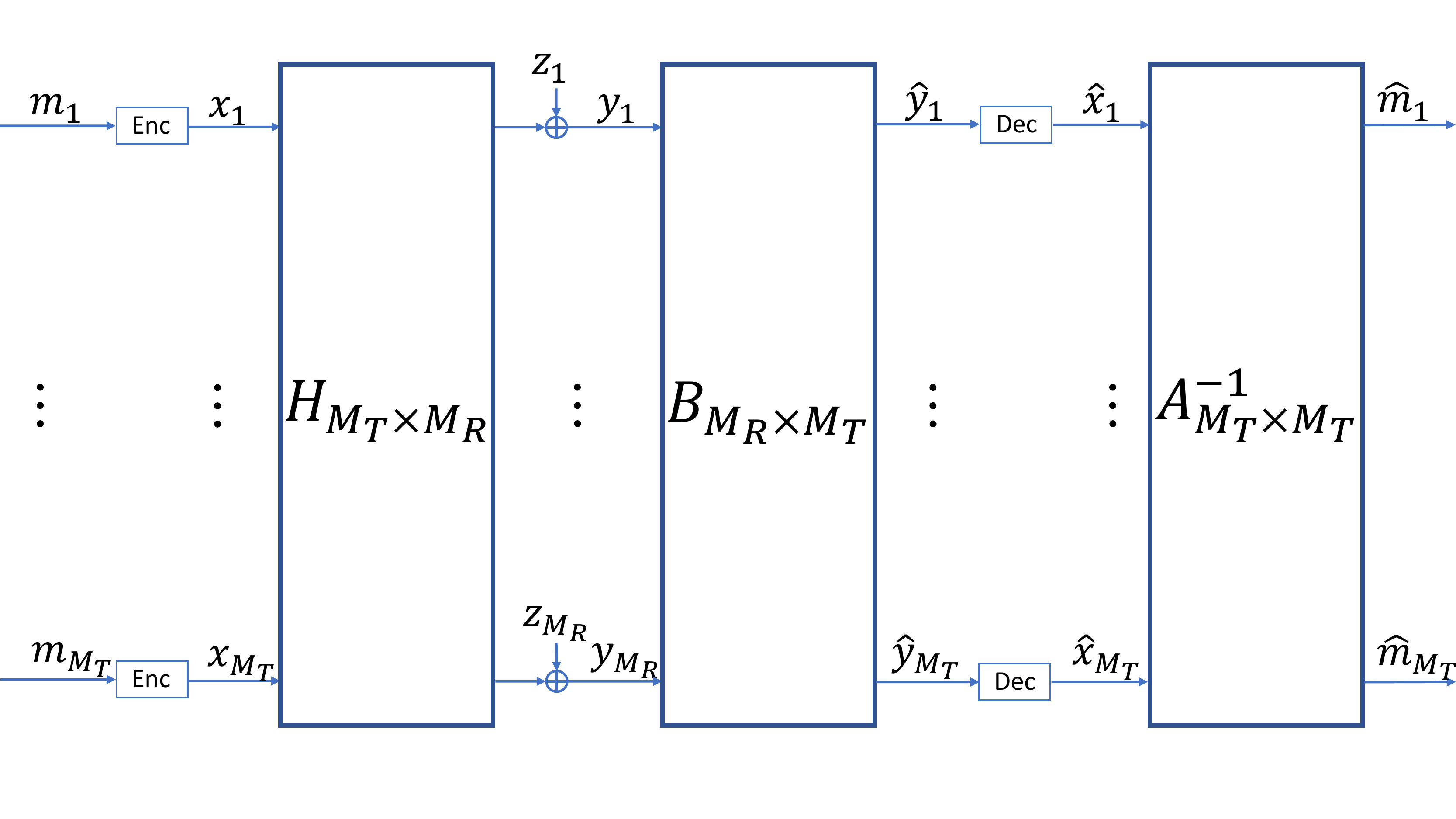}
	\caption{IF scheme block diagram} 
	\label{fig:IF_scheme}
\end{figure}
The IF linear receiver decodes linear combinations of codewords in each of its receiving antennas. The coefficients of the combinations are represented as elements of a matrix $\bA$, and are chosen by the receiver, based on the channel matrix. Thus, CSI is required at the receiver. Decoding the linear combinations is based on CnF, a building block of IF. Specifically, the codewords are taken from a lattice codebook \cite{zamir2014lattice}, hence each linear combination is a codeword by itself, and can be decoded (if the noise is below a certain level). After the recovery process of each linear combination, the originally transmitted vector can be found iff all the vectors of coefficients at the receiver form a full rank matrix $\bA^{M_T\times M_T}$. Figure \ref{fig:IF_scheme} depicts the scheme: the matrix $\bB$ is used to approximate $\bA$ at the decoder's input, then after decoding each stream, $\bA^{-1}$ is used. 

Let $\ba_k\in\Z^{M_T}$ denote the integer vector (of coefficients) of the $k^{th}$ linear combination. The effective SNR of IF is \cite{zhan2014integer}
\begin{equation}\label{eq:IF_eq_snr_eff_a_and_b}
SNR_{eff}(\bH,\ba_k,\bb_k) =  \frac{P}{\normSqr[\bb_k]+P\normSqr[\bH^T\bb_k-\ba_k]}.
\end{equation} 
\Cref{eq:IF_eq_snr_eff_a_and_b} gives insight on choosing $\ba_k$ and $\bb_k$.  For example, if $P$ is large, $\ba_k$ and $\bb_k$ should be chosen s.t. $\normSqr[\bH^T\bb_k-\ba_k]$ is small, since this expression is multiplied by $P$. However, if $P$ is small, the important summand is $\normSqr[\bb_k]$. In \cite[Section F]{zhan2014integer} it is shown that 
\begin{equation}
\forall k: \|\ba_k\| < 1+ \sqrt{P}\lambda_{max}(\bH),
\label{eq:bound_the_norm_of_a}
\end{equation}
where $\lambda_{max}(\bH)$ is maximal eigenvalue is $\bH^T\bH$. Otherwise, the rate of IF is zero.

Given $\ba_k$ and $\bH$, $\bb_k$ can be computed. Hence, the effective SNR, and the corresponding rate achieved by IF using a coefficient vector $\ba_k$, are given by \cite{ordentlich2015precoded}
\begin{equation}
SNR_{eff}(\bH,\ba_k) = \left({\ba_k^T\left(\bI+P\bH^T\bH\right)^{-1}\ba_k}\right)^{-1} \label{eq:IF_eq_snr_eff}
\end{equation} 
and 
\begin{equation}
R_{IF}(\bH,\ba_k) = \frac{1}{2}\log_2 SNR_{eff}(\bH,\ba_k).
\end{equation} 
In order to decode the original messages, $M_T$ linearly independent vectors are needed. Those vectors can represented as a matrix $\bA = [\ba_1,\ldots,\ba_{M_T}] \in \Z^{M_T\times M_T}$, s.t. $rank(\bA)=M_T$. 

Basic IF is designed such that all the messages are taken from the same codebook and can be linearly combined using integer coefficients. As a result, all the transmitters are forced to use the same codebook, hence the same transmission rate. Thus, the effective SNR and the corresponding sum rate of IF given $\bA$ are:
\begin{equation*}
SNR_{eff}(\bH,\bA) \defined \min_{\ba \in \bA}SNR_{eff}(\bH,\ba)
\end{equation*}
and 
\begin{equation*}
R_{IF}(\bH,\bA) \defined M_T \cdot \min_{\ba \in \bA} R(\bH,\ba).
\end{equation*}
This means that the optimal IF effective SNR and sum rate, are given by 
\begin{equation*}
SNR_{eff}(\bH) \defined \max_{\bA \in \Z^{M_T\times M_T}} SNR_{eff}(\bH,\bA),
\end{equation*}
 and 
\begin{equation*}
R_{IF}(\bH) \defined \max_{\bA \in \Z^{M_T\times M_T}} R(\bH,\bA), 
\end{equation*}
where $rank(\bA)=M_T$. An upper bound on the effective SNR of IF is presented in \cite{ordentlich2015precoded}: 
\begin{equation*}
SNR_{eff}(\bH) \leq \min_{\ba \in \Z^{M_T}\setminus 0} \ba^T(\bI+P\bH^T\bH)\ba.
\end{equation*}
In fact, using different rates at each transmitter is possible, yet requires much more complicated methods. In \cite{nazer2011compute}, a demonstration and an achievable region for decoding two different vectors using successive cancellation is presented. This idea can be generalized to more than two vectors. In \cite{ordentlich2013successive} the Successive Integer-Forcing (S-IF) method, which is based on noise cancellation, is presented.

\section{Bounds on the Effective SNR of a Single Vector of Coefficients}\label{sec:bounds_on_SNR_eff}
In this section, we present lower and upper bounds on the effective SNR for a single vector. Using the bounds, we are able to characterize the \emph{distribution of the effective SNR} for a fixed vector. This characterization gives insight on the achievable performance in general, and is useful for the analysis of the schemes presented in \Cref{sec:practical_sub_opt_IF}.
	
Our first result is a lower bound.
\begin{theorem}\label{lem:LB_lower_bounds_SNR_eff}
Let $\ba \in \Z^{M_T}$ s.t.\ $\normSqr[\ba]>0$. Then,
\begin{equation}
SNR_{eff}(\bH,\ba) \geq {\left({  \ba^T\left({P\bH^T\bH}\right)^{-1}\ba }\right)^{-1}}.
\end{equation}
\end{theorem}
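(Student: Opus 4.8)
The plan is to start from the closed-form expression \eqref{eq:IF_eq_snr_eff}, namely $SNR_{eff}(\bH,\ba) = \bigl(\ba^T(\bI + P\bH^T\bH)^{-1}\ba\bigr)^{-1}$, and reduce the claimed inequality to a comparison of two quadratic forms. Both $\bI + P\bH^T\bH$ and $P\bH^T\bH$ are positive definite (the latter because, under the i.i.d.\ Normal channel model with $M_T \le M_R$, $\bH$ has full column rank almost surely), so $\ba^T(\bI + P\bH^T\bH)^{-1}\ba$ and $\ba^T(P\bH^T\bH)^{-1}\ba$ are both strictly positive for $\ba \neq 0$. Taking reciprocals, the statement $SNR_{eff}(\bH,\ba) \ge \bigl(\ba^T(P\bH^T\bH)^{-1}\ba\bigr)^{-1}$ is therefore equivalent to
\begin{equation*}
\ba^T\left(\bI + P\bH^T\bH\right)^{-1}\ba \;\le\; \ba^T\left(P\bH^T\bH\right)^{-1}\ba .
\end{equation*}

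The key step is the observation that $\bI + P\bH^T\bH \succeq P\bH^T\bH$ in the Löwner order, since their difference is $\bI \succeq 0$. Because matrix inversion is operator-antitone on the cone of positive definite matrices, this gives $(\bI + P\bH^T\bH)^{-1} \preceq (P\bH^T\bH)^{-1}$, and sandwiching by $\ba$ on both sides yields exactly the inequality above. For a fully self-contained argument one can instead diagonalize $\bH^T\bH = \bU\bLambda\bU^T$ with $\bLambda = \mathrm{diag}(\lambda_1,\dots,\lambda_{M_T})$ and $\lambda_i > 0$, set $\bv = \bU^T\ba$, and write both forms as sums: $\sum_i v_i^2/(1 + P\lambda_i)$ on the left and $\sum_i v_i^2/(P\lambda_i)$ on the right. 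Since $1 + P\lambda_i > P\lambda_i > 0$ for every $i$, the inequality holds term by term, hence for the sums, and flipping reciprocals completes the proof.

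I do not anticipate a genuine obstacle here; the only point requiring care is ensuring $P\bH^T\bH$ is invertible so that the right-hand side is well defined — this is where the hypothesis $M_T \le M_R$ enters, and under the paper's channel model it holds with probability one (the hypothesis $\|\ba\|^2 > 0$ in turn guarantees the left-hand reciprocal is finite and nonzero). Once invertibility is granted, the result is a one-line consequence of operator monotonicity of the inverse, or equivalently of the termwise comparison after diagonalization.
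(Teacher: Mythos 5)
Your proof is correct. The reduction to $\ba^T(\bI + P\bH^T\bH)^{-1}\ba \le \ba^T(P\bH^T\bH)^{-1}\ba$ is valid (both quadratic forms are strictly positive for $\ba \neq 0$ since $\bH^T\bH$ is almost surely positive definite under the model), and either of your two justifications — operator antitonicity of the inverse on the positive definite cone, or the explicit diagonalization giving $\sum_i v_i^2/(1+P\lambda_i) \le \sum_i v_i^2/(P\lambda_i)$ termwise — closes the argument. Your route is more direct than the paper's. The paper instead proves an \emph{exact identity} for the ratio: writing $\bS = \bH^T\bH$ and $\epsilon = 1/P$, it shows (via its Lemma 1) that $\ba^T\bS^{-1}\ba / \ba^T(\epsilon\bI+\bS)^{-1}\ba = 1 + \sum_{k\ge 1}\delta(\epsilon)^k$ with $0<\delta(\epsilon)<1$, and the lower bound falls out because the series is nonnegative. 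The underlying computation in the paper's Claim 1 is the same eigenvalue-wise comparison you perform, so the two arguments share their engine; what the paper's heavier machinery buys is that the \emph{same} series identity, once the summand is bounded by $\epsilon\, tr(\bS^{-1})$, immediately yields the matching upper bound of Theorem 2. Your argument proves Theorem 1 more economically but would have to be supplemented separately to obtain that upper bound.
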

The intuition behind \Cref{lem:LB_lower_bounds_SNR_eff} comes from the fact that  $\bH^T\bH$ is almost surely a positive definite matrix. Hence, $\frac{\left({\ba^T \left(\bI+P\bH^T\bH\right)^{-1}\ba}\right)^{-1}}{\left({  \ba^T\left({P\bH^T\bH}\right)^{-1}\ba }\right)^{-1}}$
 is a ratio between quadratic forms of the inverses of positive definite matrices. Intuitively, $\bI$, which appears in the effective SNR, enlarges the singular values of the corresponding quadratic form. The complete proof can be found in \Cref{subsec:construction_of_LB}.

\begin{definition}\label{def:LB_def}
$LB(\bH,\ba) \defined \left({  \ba^T(P\bH^T\bH)^{-1}\ba }\right)^{-1} = \frac{P}{\ba^T\bS^{-1}\ba}$,
\end{definition}
 where $\bS = \bH^T\bH$. When $\ba$ is fixed, the distribution of this lower bound can be computed. 
\begin{corollary}\label{corl:LB_distribution}
For a fixed $\ba$, the lower bound on the effective SNR has the following gamma distribution
\begin{align}
LB(\bH,\ba) \sim \Gamma\left(\frac{M_R-M_T+1}{2},\frac{2P}{\normSqr[\ba]}\right). \label{eq:LB_distribution}
\end{align}
\end{corollary}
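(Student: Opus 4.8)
The plan is to identify $\bS=\bH^T\bH$ as a Wishart matrix and then reduce the statement to a classical fact about quadratic forms in an inverse Wishart. Since the entries of $\bH$ are i.i.d.\ standard normal and $M_R\ge M_T$, the matrix $\bS$ follows a Wishart distribution $\mathcal{W}_{M_T}(M_R,\bI)$, which is a.s.\ positive definite. By \Cref{def:LB_def}, $LB(\bH,\ba)=P/(\ba^T\bS^{-1}\ba)$, so it suffices to determine the law of the scalar $\ba^T\bS^{-1}\ba$ for a fixed nonzero $\ba$.

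First I would use the orthogonal invariance of $\mathcal{W}_{M_T}(M_R,\bI)$: for any orthogonal $\bQ$, the matrix $\bH\bQ^T$ again has i.i.d.\ standard normal entries, hence $\bQ\bS\bQ^T$ has the same law as $\bS$. Writing $\ba^T\bS^{-1}\ba=(\bQ\ba)^T(\bQ\bS\bQ^T)^{-1}(\bQ\ba)$ and choosing $\bQ$ with $\bQ\ba=\|\ba\|\,\be_1$, the problem reduces to the distribution of $\|\ba\|^2\,(\bS^{-1})_{11}$, i.e.\ of the top-left entry of an inverse Wishart matrix.

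Next I would invoke the Schur-complement identity $(\bS^{-1})_{11}=\bigl(S_{11}-\bS_{12}\bS_{22}^{-1}\bS_{21}\bigr)^{-1}$, where $S_{11},\bS_{12},\bS_{21},\bS_{22}$ are the blocks of $\bS$ partitioned after the first row and column. The key step — and essentially the only nontrivial one — is to show that $S_{11\cdot2}:=S_{11}-\bS_{12}\bS_{22}^{-1}\bS_{21}\sim\chi^2_{M_R-M_T+1}$. The cleanest route is to write $\bS=\bG^T\bG$ with $\bG=[\bg_1,\dots,\bg_{M_T}]\in\R^{M_R\times M_T}$ having i.i.d.\ $\mathcal{N}(0,1)$ entries, and to note that $S_{11\cdot2}=\|\bg_1\|^2-\bg_1^T\bG_2(\bG_2^T\bG_2)^{-1}\bG_2^T\bg_1$, with $\bG_2=[\bg_2,\dots,\bg_{M_T}]$, is precisely the squared length of the projection of $\bg_1$ onto the orthogonal complement of $\mathrm{span}(\bg_2,\dots,\bg_{M_T})$. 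Conditioned on $\bg_2,\dots,\bg_{M_T}$, that complement is a.s.\ an $(M_R-M_T+1)$-dimensional subspace, and since $\bg_1$ is independent of the conditioning and isotropic, the projection is a standard Gaussian vector in that subspace; hence $S_{11\cdot2}\sim\chi^2_{M_R-M_T+1}$ conditionally, and therefore also unconditionally.

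Finally I would assemble the pieces: $\ba^T\bS^{-1}\ba=\|\ba\|^2/S_{11\cdot2}$ with $S_{11\cdot2}\sim\chi^2_{M_R-M_T+1}$, so
\[
LB(\bH,\ba)=\frac{P}{\ba^T\bS^{-1}\ba}=\frac{P}{\|\ba\|^2}\,S_{11\cdot2}.
\]
Since $\chi^2_k=\Gamma(k/2,2)$ and multiplying a $\Gamma(\alpha,\theta)$ variable by $c>0$ yields a $\Gamma(\alpha,c\theta)$ variable, this is exactly $\Gamma\!\bigl(\tfrac{M_R-M_T+1}{2},\,\tfrac{2P}{\|\ba\|^2}\bigr)$, proving \Cref{corl:LB_distribution}. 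The main obstacle is the Schur-complement/$\chi^2$ step; everything else is the Wishart invariance, a matrix identity, and the scaling property of the gamma family.
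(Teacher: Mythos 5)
Your argument is correct and is essentially the paper's proof made self-contained: the paper simply cites \cite[Proposition 8.9]{eaton} for the fact that $\ba^T\bS^{-1}\ba \sim \normSqr[\ba]/\chi^2_{M_R-M_T+1}$ for a Wishart $\bS=\bH^T\bH$, and your orthogonal-invariance reduction, Schur-complement identity, and projection argument constitute the standard proof of exactly that proposition. The final step converting the scaled $\chi^2$ to $\Gamma\left(\frac{M_R-M_T+1}{2},\frac{2P}{\normSqr[\ba]}\right)$ matches the paper's use of the $\chi^2$--$\Gamma$ connection, so no gap remains.
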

The proof is a consequence of \cite[proposition 8.9]{eaton}, combined with the connection between the $\chi^2$ and the $\Gamma$ distributions.

Note that due to the $\frac{1}{\normSqr[\ba]}$ factor, without CSI the best vector of coefficients for the lower bound is a unit vector $\be_i$. Hence, the best choice without CSI is $\bA = \bI$. In that case, the lower bound is distributed exactly like ZF \cite[Lemma 1]{li2006distribution}, and IF becomes MMSE. Note also that the dimensions of $\bH$ do not effect the distribution, only the difference between the number of rows and the number of columns.

We now turn to a corresponding upper bound.
\begin{theorem}\label{trm:UB_eff_snr}
Let $\ba\in\Z^{M_T}$ s.t $\normSqr[\ba]>0$. Then, 
\begin{equation}
SNR_{eff}(\bH,\ba) \leq P\left({\ba^T{\bS}^{-1}\ba }\right)^{-1} \sum_{k=0}^\infty \left(\frac{tr(\bS^{-1})}{P}\right)^k . 
\end{equation}
\end{theorem}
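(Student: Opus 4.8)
The plan is to reduce the claim to a one-line estimate on the eigenvalues of $\bS=\bH^T\bH$. Since $\bH$ has full column rank almost surely, $\bS$ is symmetric positive definite; write its spectral decomposition $\bS=\bU\,\mathrm{diag}(\lambda_1,\dots,\lambda_{M_T})\,\bU^T$ with all $\lambda_i>0$, and set $\bc=\bU^T\ba$, so $\|\bc\|^2=\|\ba\|^2$. Then $\bI+P\bS$ and $\bS^{-1}$ are diagonal in this basis, so
\begin{equation*}
SNR_{eff}(\bH,\ba)^{-1}=\ba^T(\bI+P\bS)^{-1}\ba=\sum_{i=1}^{M_T}\frac{c_i^2}{1+P\lambda_i},
\qquad
LB(\bH,\ba)^{-1}=\frac1P\,\ba^T\bS^{-1}\ba=\sum_{i=1}^{M_T}\frac{c_i^2}{P\lambda_i}.
\end{equation*}
Thus it suffices to lower bound the first sum in terms of the second. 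I would first note that the infinite geometric series in the statement is only a convenient weakening: all its terms are nonnegative, so $\sum_{k=0}^{\infty}(tr(\bS^{-1})/P)^k\geq 1+tr(\bS^{-1})/P$, and it is therefore enough to prove
\begin{equation*}
SNR_{eff}(\bH,\ba)\leq LB(\bH,\ba)\left(1+\frac{tr(\bS^{-1})}{P}\right),
\end{equation*}
which, incidentally, shows the statement carries content only when $tr(\bS^{-1})<P$.

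The key step is the elementary bound, for each $i$,
\begin{equation*}
\frac{1}{1+P\lambda_i}=\frac{1}{P\lambda_i}\cdot\frac{1}{1+\tfrac{1}{P\lambda_i}}\;\geq\;\frac{1}{P\lambda_i}\cdot\frac{1}{1+\tfrac{tr(\bS^{-1})}{P}},
\end{equation*}
where the inequality uses $\tfrac{1}{P\lambda_i}=\tfrac{1}{P}\cdot\tfrac{1}{\lambda_i}\leq\tfrac{1}{P}\sum_{j}\tfrac{1}{\lambda_j}=\tfrac{tr(\bS^{-1})}{P}$, i.e.\ that each eigenvalue of the positive definite matrix $\bS^{-1}$ is at most its trace. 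Multiplying by $c_i^2\geq 0$ and summing over $i$ gives
\begin{equation*}
SNR_{eff}(\bH,\ba)^{-1}=\sum_i\frac{c_i^2}{1+P\lambda_i}\;\geq\;\frac{1}{1+tr(\bS^{-1})/P}\sum_i\frac{c_i^2}{P\lambda_i}=\frac{LB(\bH,\ba)^{-1}}{1+tr(\bS^{-1})/P},
\end{equation*}
and taking reciprocals yields the displayed bound, hence the theorem via the series comparison above. Note that this needs no assumption on the size of $tr(\bS^{-1})/P$.

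A coordinate-free rendering of the same computation (probably how I would actually write it up) is: from $\bS^{-1}\preceq tr(\bS^{-1})\,\bI$ one gets $(\bI+\tfrac1P\bS^{-1})^{-1}\succeq(1+tr(\bS^{-1})/P)^{-1}\bI$; multiplying this by the commuting positive semidefinite matrix $\tfrac1P\bS^{-1}$ and using the identity $(\bI+P\bS)^{-1}=\tfrac1P\bS^{-1}(\bI+\tfrac1P\bS^{-1})^{-1}$ gives $(\bI+P\bS)^{-1}\succeq(P+tr(\bS^{-1}))^{-1}\bS^{-1}$, which is exactly the needed bound on the quadratic form $\ba^T(\cdot)\ba$.

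I do not expect a genuine obstacle. The only mildly subtle point is that one should resist expanding $(\bI+P\bS)^{-1}=\sum_{k\geq1}(-1)^{k+1}(P\bS)^{-k}$ and trying to match the geometric series term by term: that Neumann expansion has alternating signs, which makes the required lower bound clumsy, whereas the single per-eigenvalue estimate above (equivalently, $\lambda_{min}(\bS)\geq 1/tr(\bS^{-1})$) delivers the result directly — and in fact with the slightly sharper constant $1+tr(\bS^{-1})/P$ in place of $1/(1-tr(\bS^{-1})/P)$.
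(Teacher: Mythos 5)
Your proof is correct, and it takes a genuinely different route from the paper's. The paper obtains this theorem as an immediate corollary of \Cref{clm:LB_ratio_IF_vs_my_known_dist} with $\epsilon=1/P$: it first establishes the \emph{exact} geometric-series identity $\frac{\ba^T\bS^{-1}\ba}{\ba^T(\epsilon\bI+\bS)^{-1}\ba}=1+\sum_{k\geq 1}\delta(\epsilon)^k$ with $\delta(\epsilon)=\epsilon\,\frac{\ba^T\bS^{-1}(\bS+\epsilon\bI)^{-1}\ba}{\ba^T\bS^{-1}\ba}$ (\Cref{lemma:ratio_IF_vs_my_known_dist}), and then bounds $\delta(\epsilon)\leq\epsilon\, tr(\bS^{-1})$ term by term via the quadratic-form inequality $\ba^T\bS^{-2}\ba\leq tr(\bS^{-1})\,\ba^T\bS^{-1}\ba$ (\Cref{clm:UB_LB_ratio_aS2a_to_aS1a}). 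You instead apply the single spectral estimate $1/(P\lambda_i)\leq tr(\bS^{-1})/P$ (equivalently $\lambda_{min}(\bS)\geq 1/tr(\bS^{-1})$, the same elementary fact underlying the paper's trace bound) directly to each eigen-component of $\ba^T(\bI+P\bS)^{-1}\ba$, which yields the multiplicative constant $1+tr(\bS^{-1})/P$; since every term of the series in the statement is nonnegative, $1+tr(\bS^{-1})/P\leq\sum_{k\geq 0}\left(tr(\bS^{-1})/P\right)^k$ and the theorem follows. Your reduction step is valid in all cases (when $tr(\bS^{-1})\geq P$ the series diverges and the claimed bound is vacuous anyway, as the paper itself notes after \cref{eq:def_of_C}). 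Your route is more elementary --- it bypasses both the series identity and the Cauchy--Schwarz step --- and it in fact proves a strictly stronger statement, since $LB(\bH,\ba)\,(1+tr(\bS^{-1})/P)$ remains a finite, meaningful bound even when $tr(\bS^{-1})\geq P$. What the paper's formulation buys in exchange is that the single identity of \Cref{lemma:ratio_IF_vs_my_known_dist}, with all terms $\delta(\epsilon)^k\geq 0$, simultaneously delivers the lower bound of \Cref{lem:LB_lower_bounds_SNR_eff} and the two-sided sandwich in \cref{eq:LB_and_UB_to_SNR_eff}, whereas your argument addresses only the upper bound.
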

\begin{proof}[Proof of \Cref{trm:UB_eff_snr}]
The proof is directly from \Cref{clm:LB_ratio_IF_vs_my_known_dist} by setting $\epsilon = \frac{1}{P}.$
\end{proof}
Note that the lower bound times an infinite series is exactly the upper bound. Thus, setting 
\begin{equation}
C(\bH,P) \defined \sum_{k=0}^\infty \left(\frac{tr(\bS^{-1})}{P}\right)^k,
\label{eq:def_of_C}
\end{equation}
 we have 
\begin{align*}
P\left({\ba^T{\bS}^{-1}\ba }\right)^{-1} \sum_{k=0}^\infty \left(\frac{tr(\bS^{-1})}{P}\right)^k=LB(\bH,\ba) \cdot C(\bH,P)
\end{align*}
and hence
\begin{align}\label{eq:LB_and_UB_to_SNR_eff}
LB(\bH,\ba) \leq SNR_{eff}(\bH,\ba)
& \leq C(\bH,P)\cdot LB(\bH,\ba).
\end{align}

On one hand, there always exists a non-zero probability for the series above not to converge. In that case, the upper bound is meaningless. On the other hand, it is not hard to see that when $tr(\bS^{-1})<P$, $C(\bH,P) =\frac{1}{1-\frac{tr(\bS^{-1})}{P}}$.
Hence, 
\begin{equation}
1\leq \frac{SNR_{eff}(\bH,\ba)}{LB(\bH,\ba)} \leq \frac{1}{1-\frac{tr(\bS^{-1})}{P}}.
\label{eq:LB_and_UB_eq_r_tight_for_high_P}
\end{equation}
\begin{remark}\label{rm:UB_is_tight_when_P_goes_infinity}
Note that when $P\rightarrow \infty$, $\frac{tr(\bS^{-1})}{P} \longrightarrow 0$. Thus, when $P\rightarrow\infty$, the lower and upper bounds are tight. \Cref{clm:find_P_st_C_smaller_then_t_WHP}, which can be found in \Cref{subsec:construction_of_LB}, can be used bound the probability that $C(\bH,P)$ does not converge.
\end{remark}

\begin{figure}
    \centering
    \begin{subfigure}{0.47\textwidth}
        \includegraphics[width=\textwidth]{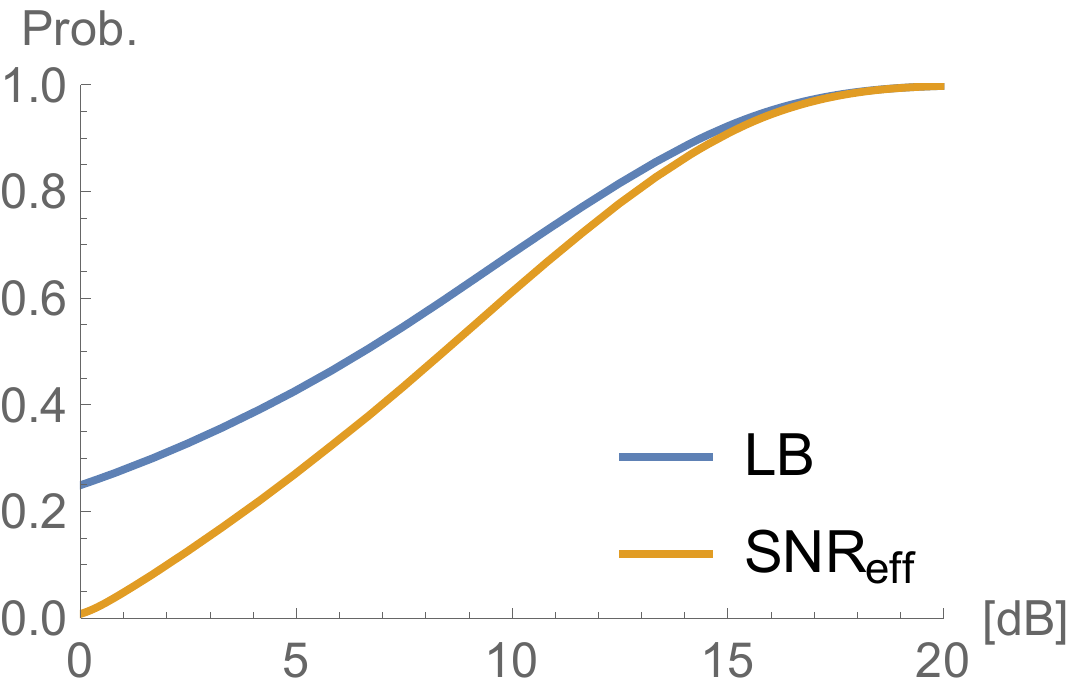}
        \caption{Lower bound on effective SNR.}
        \label{fig:LB_SNR_eff_Mt_2_Mr_2}
    \end{subfigure}
    \begin{subfigure}{0.47\textwidth}
        \includegraphics[width=\textwidth]{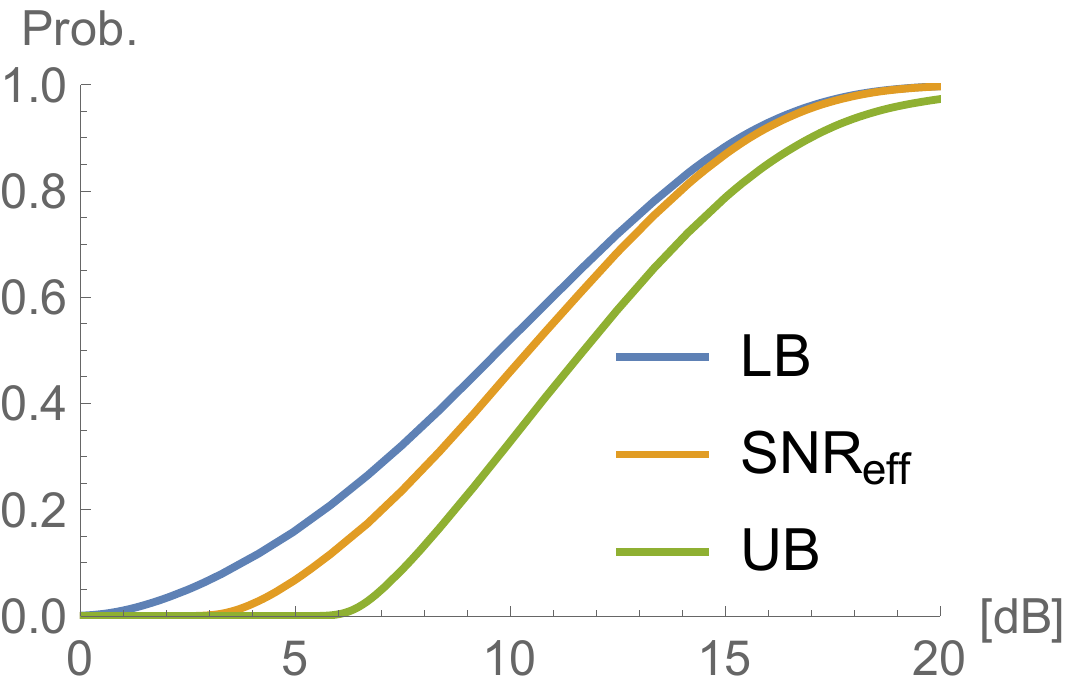}
        \caption{Lower and upper bounds for a finite upper bound.}
        \label{fig:UB_SNR_eff_Mt_2_Mr_2}
    \end{subfigure}
   	\caption{CDF of the $SNR_{eff}(\bH,\be_1)$ together with its lower and upper bounds for $M_R=M_T=2$ and when $P=10dB$.  }
	\label{fig:LBAndUB_SNR_eff_Mt_2_Mr_2}
\end{figure}

\Cref{fig:LBAndUB_SNR_eff_Mt_2_Mr_2} depicts the CDF of the effective SNR for $M_R=M_T=2$ and for $P=10 dB$ along side with its upper and lower bounds. Note that even though the same parameters where used in both sub-figures, their behavior is different, since in \cref{fig:UB_SNR_eff_Mt_2_Mr_2} we plot only the channels for which $C(\bH,P)< \infty$, i.e., $tr(\bS^{-1}) = \sum\frac{1}{d_i}< P$. Higher singular values are associated with higher capacity,  thus demanding that the sum of their \emph{inverses} is smaller than $P$ is basically enforcing usage of better channels.

\section{Practical IF-Based Schemes}\label{sec:practical_sub_opt_IF}	
In this section, we present two block-based IF schemes. I.e., $\bA$ is enforced to be a block matrix. In the first scheme, the members of each block are forced to transmit using the same rate. In the second, this constraint is removed. Indeed, in order to decode the streams, successive cancellation and a nested lattice codebook are required. However, when $\bA$ is forced to be a block matrix, streams may mix only at the same block. \emph{This constraint, in addition to it's advantages in reducing the complexity of successive cancellation, reduces the search to blocks only, thus cuts down the search complexity, and, moreover, simplifies the inversion of $\bA$}.

\subsection{Block IF (B-IF)}\label{SC:B-IF}
The vectors of $\bA$ determine the linear combinations to be decoded. Hence, the maximal block size upper bounds the number of streams mixed with each other. 
$\bA$ is forced to be a block matrix with block size of $n$. Such a scheme for $n=2$ was discussed in \cite[Section 3]{zhan2010integer}. Using the same techniques as in \cite{zhan2014integer,nazer2011compute}, we let different blocks transmit in different rates. 

The effective SNR of the $i^{th}$ block, $SNR_{eff}^i(\bH,\bA)$, is defined as the effective SNR achieved by the weakest vector of coefficient in this block. Finding the optimal $\bA$ can be done in complexity which is polynomial in both $M_T$ and $P$, instead of exponential is $M_T$ and polynomial in $P$ \cite{sahraei2015polynomially}. 

Denote the block matrix $\bA$ with blocks sizes $n$  by $\bA  = diag\{\bA_n^1,\ldots,\bA_n^k\}$, where $k = \left\lceil\frac{M_T}{n}\right\rceil$ and $\{\bA_n^i\}$ are block matrices with block size $n$. Then, the rate of the $i^{th}$ block is 
\begin{align*}
R_{B-IF}^i(\bH,\bA) = n \cdot \frac{1}{2}\log_2 SNR_{eff}^i(\bH,\bA)  =\frac{n}{2}\log_2 SNR_{eff}^i(\bH,\bA),\end{align*} 
which is the rate achieved by the worst, i.e., $n^{th}$, vector of coefficient selected in this block such that the rank of the block is full. Hence, for $K$ blocks,  the rate of B-IF is $R_{B-IF}(\bH,\bA) = \sum_{i=1}^K R_{B-IF}^i(\bH,\bA)$.

Note that an upper bound on B-IF can be achieved by taking the $n^{th}$ best vector of coefficients per block. If it is linearly independent in rest of the block, the rate of this block is achieved by the $n^{th}$ vector of coefficients. However, if they are linearly dependent, it is an upper bound, because under vector is needed, since $\bA$ must be a full ranked matrix.

 In order to construct a lower bound, we can choose the  vectors of coefficients out of a constant small set; $\{\be_1^n,\be_2^n,\be_1^n+\be_2^n,\be_1^n-\be_2^n\}$, where $\be_j^n\in \N^n$ is the $j^{th}$ vector of the standard basis. For example, if $n=2$, the set becomes $\begin{bmatrix}
1 & 0 & 1 & 1\\
0 & 1 & 1 & -1
\end{bmatrix}$. Note that any two vectors chosen from this set are linearly independent. Hence, the rate achieved by the second best vector is an achievable rate, thus, it is a lower bound on B-IF.

\begin{lemma}\label{lemma:existence_of_region_BIF_better_IF}
Let $\lambda_{max}(\bH)$ be the maximal eigenvalue of $\bH^T\bH$. Then, for any $P\leq \lambda_{max}^{-2}(\bH)$, the rate achieve by B-IF at least as good as the rate achieved by IF.
\end{lemma}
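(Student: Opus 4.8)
The plan is to show that in the regime $P\le\lambda_{max}^{-2}(\bH)$ the optimal IF coefficient matrix is a signed permutation matrix, so that $R_{IF}(\bH)=R_{IF}(\bH,\bI)$, i.e.\ the equal‑rate point dictated by the weakest standard‑basis stream; since $\bI$ is also a legal (block‑diagonal) choice for B-IF, and B-IF may rate each block independently, B-IF can only do at least as well. Concretely, the proof splits into (i) pinning down the optimal IF matrix at low power, and (ii) a one‑line comparison of the two rate formulas evaluated at $\bA=\bI$.

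For step (i) I would first bound the effective SNR of non‑unit versus unit coefficient vectors. Write $\bS=\bH^T\bH$ and $\bM=\bI+P\bS$, so by \eqref{eq:IF_eq_snr_eff} $SNR_{eff}(\bH,\ba)=(\ba^T\bM^{-1}\ba)^{-1}$. From $\bM^{-1}\succeq\lambda_{max}(\bM)^{-1}\bI$ and $\lambda_{max}(\bM)=1+P\lambda_{max}(\bS)$ we get $SNR_{eff}(\bH,\ba)\le(1+P\lambda_{max}(\bS))/\|\ba\|^2$ for every $\ba$; any integer vector not of the form $\pm\be_i$ has $\|\ba\|^2\ge 2$, hence such vectors satisfy $SNR_{eff}(\bH,\ba)\le(1+P\lambda_{max}(\bS))/2$. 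On the other hand $(\bM^{-1})_{ii}\le\lambda_{min}(\bM)^{-1}=(1+P\lambda_{min}(\bS))^{-1}$ gives $SNR_{eff}(\bH,\be_i)\ge 1+P\lambda_{min}(\bS)\ge 1$ for all $i$. Reading $\lambda_{max}(\bH)$ as the largest singular value of $\bH$, so that $\lambda_{max}^{2}(\bH)=\lambda_{max}(\bS)$, the hypothesis $P\le\lambda_{max}^{-2}(\bH)$ says precisely $P\lambda_{max}(\bS)\le 1$, whence $(1+P\lambda_{max}(\bS))/2\le 1$. Thus, in this regime, every non‑unit integer vector obeys $SNR_{eff}(\bH,\ba)\le 1\le SNR_{eff}(\bH,\be_i)$ for all $i$.

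Using this, if a full‑rank $\bA=[\ba_1,\dots,\ba_{M_T}]$ contains a column $\ba_{k_0}\neq\pm\be_i$ then $\min_k SNR_{eff}(\bH,\ba_k)\le SNR_{eff}(\bH,\ba_{k_0})\le 1\le\min_i SNR_{eff}(\bH,\be_i)$, so $R_{IF}(\bH,\bA)\le R_{IF}(\bH,\bI)$; and a full‑rank $\bA$ whose columns are all of the form $\pm\be_i$ is a signed permutation matrix, for which $R_{IF}(\bH,\bA)=R_{IF}(\bH,\bI)$ since $SNR_{eff}$ is invariant under signs and permutations of the entries of $\ba$. Hence $R_{IF}(\bH)=R_{IF}(\bH,\bI)=\tfrac{M_T}{2}\log_2 s^\ast$ with $s^\ast:=\min_i SNR_{eff}(\bH,\be_i)$. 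For step (ii), fix any block size, with block index sets $B_1,\dots,B_K$ of sizes $n_1,\dots,n_K$ summing to $M_T$; B-IF may take $\bA=\bI$, which is block diagonal with full‑rank identity blocks, and then $SNR_{eff}^i(\bH,\bI)=\min_{j\in B_i}SNR_{eff}(\bH,\be_j)\ge s^\ast$, so $R_{B-IF}(\bH)\ge R_{B-IF}(\bH,\bI)=\sum_{i=1}^{K}\tfrac{n_i}{2}\log_2 SNR_{eff}^i(\bH,\bI)\ge\sum_{i=1}^{K}\tfrac{n_i}{2}\log_2 s^\ast=\tfrac{M_T}{2}\log_2 s^\ast=R_{IF}(\bH)$, which is the claim (and it holds for every block size).

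The main obstacle is step (i): the stated norm bound \eqref{eq:bound_the_norm_of_a} only yields $\|\ba\|<2$, hence $\|\ba\|^2\in\{1,2,3\}$, and therefore does not by itself rule out the norm‑$\sqrt2$ and norm‑$\sqrt3$ vectors; it is the quantitative eigenvalue estimate $SNR_{eff}(\bH,\ba)\le(1+P\lambda_{max}(\bS))/\|\ba\|^2$, together with the exact meaning of the hypothesis on $P$, that forces the IF optimum to be a signed permutation. One should also dispose of the degenerate case $\lambda_{max}(\bH^T\bH)<1$ (a very weak channel, in which reading $\lambda_{max}(\bH)$ literally as an eigenvalue of $\bH^T\bH$ makes the hypothesis impose essentially no restriction on $P$); there $(1+P\lambda_{max}(\bS))/2\le1$ can fail, so that case is treated separately or simply excluded.
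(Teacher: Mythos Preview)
Your argument is correct and in fact more careful than the paper's. Step~(ii) is essentially identical: both observe that once the optimal IF matrix is forced to be $\bI$ (up to signs and permutations), B-IF can adopt $\bA=\bI$ block-wise and only gain from rating each block separately. The difference is in step~(i). The paper simply invokes the norm bound~\eqref{eq:bound_the_norm_of_a}, which under the hypothesis yields $\|\ba_k\|<2$, and then asserts ``since $\|\ba_k\|\in\N$'' to conclude $\|\ba_k\|=1$ and hence $\bA=\bI$; as you correctly observe, this assertion is false for the Euclidean norm of an integer vector (e.g.\ $\|(1,1)\|=\sqrt2<2$), so $\|\ba\|<2$ alone leaves $\|\ba\|^2\in\{2,3\}$ unexcluded. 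Your Rayleigh-quotient bounds $SNR_{eff}(\bH,\ba)\le(1+P\lambda_{\max}(\bS))/\|\ba\|^2\le1$ for $\|\ba\|^2\ge2$, versus $SNR_{eff}(\bH,\be_i)\ge 1+P\lambda_{\min}(\bS)\ge1$, close this gap directly and do not require~\eqref{eq:bound_the_norm_of_a} at all. The only cost is the notational issue you already flag: your comparison needs $P\lambda_{\max}(\bS)\le1$, which under the lemma's stated convention $\lambda_{\max}(\bH)=\lambda_{\max}(\bH^T\bH)$ follows from the hypothesis only when $\lambda_{\max}(\bS)\ge1$; reading $\lambda_{\max}(\bH)$ as the top singular value of $\bH$ (so that $\lambda_{\max}^2(\bH)=\lambda_{\max}(\bS)$) makes the hypothesis exactly what you need and is the more natural normalization here.
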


\begin{proof}[Proof of \Cref{lemma:existence_of_region_BIF_better_IF}]
According to \cref{eq:bound_the_norm_of_a}; $\|\ba_k\| < 1+ \sqrt{P}\lambda_{max}(\bH)$.
Hence, by the lemma's conditions, and since $\|\ba_k\| \in \N$, we have that $\forall k: \|\ba_k\|=1 $. Thus, $\bA=\bI$. In IF, all the senders must transmit using the same rate, i.e., the rate achieved by the worst unit vector. Note that the identity matrix is one of the blocks checked in B-IF. Hence, for the lemmas's condition, the lower bound on B-IF will choose the identity matrix for each block and $\bA$ as $\bI$. However, unlike IF, in B-IF different blocks may transmit in different rates. Thus, the only case IF and B-IF achieve the same rate is if the effective SNR for all the unit vectors is equal. If at least one of the blocks in B-IF achieves a different effective SNR, the lower bound on B-IF is better than IF, which happens almost surly.
\end{proof}

\begin{corollary}
For any realization of $\bH$, there exists a region of transmission powers such that B-IF is better than IF.
\end{corollary}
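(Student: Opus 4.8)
The plan is to obtain this as an immediate consequence of \Cref{lemma:existence_of_region_BIF_better_IF}. Fix an arbitrary realization of $\bH$. Since $\bH$ is a fixed finite real matrix, the maximal eigenvalue $\lambda_{max}(\bH)$ of $\bH^T\bH$ is finite, and it is strictly positive as soon as $\bH\neq 0$ (true almost surely under the i.i.d.\ Normal model, and the case $\bH=0$ is the degenerate channel with zero rate for every scheme). Hence the set of powers
\[
\cP(\bH) \defined \left(0,\ \lambda_{max}^{-2}(\bH)\right]
\]
is a nonempty interval, and this is the region I would exhibit.

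The first step is to apply \Cref{lemma:existence_of_region_BIF_better_IF} verbatim: every $P\in\cP(\bH)$ satisfies $P\le\lambda_{max}^{-2}(\bH)$, so on $\cP(\bH)$ the rate of B-IF is at least that of IF. The second step upgrades this to a strict gain for a generic channel, reusing the mechanism already exposed in the proof of that lemma: on $\cP(\bH)$ the norm constraint \Cref{eq:bound_the_norm_of_a} forces $\|\ba_k\|=1$ for every admissible coefficient vector, so the only full-rank integer matrix available to IF is, up to column signs and permutation, the identity $\bI$; IF therefore runs all $M_T$ streams at the single rate $\frac{M_T}{2}\log_2\min_i SNR_{eff}(\bH,\be_i)$, whereas B-IF may assign each (identity) block its own rate and achieves $\sum_i \frac{1}{2}\log_2 SNR_{eff}(\bH,\be_i)$. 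These two quantities are equal only if all the $SNR_{eff}(\bH,\be_i)$ coincide, an event of measure zero in $\bH$; hence for almost every realization the separation on $\cP(\bH)$ is strict.

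I do not expect a genuine obstacle: the content is entirely carried by \Cref{lemma:existence_of_region_BIF_better_IF}. The only point that warrants a line of care is the non-degeneracy of the exhibited region --- namely that $\lambda_{max}(\bH)$ is finite and positive, so that $\cP(\bH)$ is a true sub-interval of $(0,\infty)$ rather than empty --- and the strictness claim, which requires no new probabilistic input since it is exactly the almost-sure statement used inside the lemma's proof.
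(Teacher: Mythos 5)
Your proposal is correct and follows essentially the same route as the paper, which presents this corollary as an immediate consequence of \Cref{lemma:existence_of_region_BIF_better_IF} without a separate proof: you exhibit the region $\left(0,\lambda_{max}^{-2}(\bH)\right]$ on which the norm constraint forces $\bA=\bI$, and reuse the lemma's almost-sure strictness argument. Your added care about the region being nonempty and about the distinction between "at least as good for every realization" and "strictly better almost surely" is a small but welcome tightening of what the paper leaves implicit.
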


\subsection{Norm-Bounded-IF (NB-IF)}\label{sc:BB_IF}
In the second scheme, not only $\bA$ is restricted to be a block matrix, we also bound the norm of each row. As it turns out, at low SNR this is beneficial. Moreover, this allows us to use very simple coefficient vectors, ones for which we are able to bound the distribution of the resulting SNR, and show the superiority of this distribution to ZF and MMSE, while still keeping the complexity much lower than IF. 

Assume the block size, $n$, equals $2$. Clearly, the concept can be extended to other (small) values of $n$. Hence, for even $M_T$, there are $\frac{M_T}{2}$ 
blocks. In the suggested scheme, each block is chosen out of the following $\binom {4}{2} =6$ options: 
\begin{align*}
\begin{bmatrix} 1 & 0 \\ 0 & 1 \end{bmatrix} 
\begin{bmatrix} 1 & 1 \\ 0 & 1 \end{bmatrix}
\begin{bmatrix} 1 & 1 \\ 0 & -1 \end{bmatrix}
\begin{bmatrix} 0 & 1 \\ 1 & 1 \end{bmatrix}
\begin{bmatrix} 0 & 1 \\ 1 & -1 \end{bmatrix}
\begin{bmatrix} 1 & 1 \\ 1 & -1 \end{bmatrix}.
\end{align*}
 Thus, the norms are bounded by $2$, allowing simple yet effective linear combinations. Moreover, using the techniques in \cite{ordentlich2013successive,nazer2011compute} with this simple block structure one can easily use non equal rates and a nested lattice. Nevertheless, the most important benefit of using simple, bounded norm vectors, is the ability to analyse the effective SNR using the tools derived in \Cref{sec:bounds_on_SNR_eff}.

Specifically, in NB-IF, one chooses the best two vectors of coefficients for the $i^{th}$ block out of $\mathcal{A}_i^{M_T}$, which contains four equations. This amounts to $6$ choices. We first lower bound NB-IF by reducing this amount to four, then bound the resulting distribution. Let \small$\mathcal{A}_{LB_1}^{(2)} \defined \begin{bmatrix} 1 & 1 \\ 0 & 1 \end{bmatrix} ;  \mathcal{A}_{LB_2}^{(2)} \defined \begin{bmatrix} 0 & 1 \\ 1 & -1 \end{bmatrix}$\normalsize. Then, the vectors coefficients for each block such that the first is chosen from $\mathcal{A}_{LB_1}^{(2)}$ and the second from $\mathcal{A}_{LB_2}^{(2)}$ is a lower bound on NB-IF, and we have the following.

\begin{remark}
For any realization of $\bH$, there exists a region of transmission powers such that NB-IF is better than IF. This corollary is following directly from the proof of \Cref{lemma:existence_of_region_BIF_better_IF}.
\end{remark}
In \Cref{sbsec:IF_block_schemes}, \Cref{alg:opt_a_BB_IF} which can be used to find  $\bA_{BB-IF}^{M_T}$ matrix for NB-IF in $O(n)$ is presented.
\begin{theorem}\label{thm:CDF_of_LB_is_bounded_by_time_sharing}
Let $M_R \geq M_T \geq 2 \in \N$, $K \defined M_R-M_T+2$, $\phi \sim F_{(K,K)}$, where $F$ is the $\mathcal{F}$-distribution and $\epsilon = \epsilon(t) \in  \left(0,\frac{1}{\sqrt{2}} \right)$. Denote 
\small
\[
\rho(\epsilon) \defined 
\frac{4}{\pi} \int_0^{\frac{\pi}{4}} \left[F_\phi \left( 2(1+ \cos2 t) \cdot \left(\frac{\epsilon+1}{2\epsilon+1}\right)^2  \right) - F_{\phi}(1) \right]dt
\]
\normalsize
and $X=LB(\bH,\be_1)$, $Y=LB(\bH,\be_1-\be_2)$, where $\be_1$ and $\be_2$ are the two unit vectors. We define the condition $a_\epsilon$ as $a_{\epsilon}: Y > X(1+ \epsilon)$, and we denote the CDF of the effective SNR of MMSE and NB-IF by $F_{MMSE}(t)$ and $F_{NB-IF}(t)$ respectively. Then, 
\begin{align*}
F_{NB-IF}(t) \leq 
\min 
\left\{F_{MMSE}(t) ,  F_X(t)  - \rho(\epsilon)\left[ F_{X\vert a_\epsilon}(t)-F_{X\vert a_\epsilon}\left(\frac{t}{1+\epsilon}\right) \right] \right\}.
\end{align*}
\normalsize
\end{theorem}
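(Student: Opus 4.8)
The statement is a minimum of two bounds; I would establish each separately and then intersect. For the half $F_{NB-IF}(t)\le F_{MMSE}(t)$: the NB‑IF candidate list for each pair of coordinates contains the identity block, and IF with $\bA=\bI$ is exactly MMSE — equivalently, using the coefficient vector $\be_1$ delivers, via \eqref{eq:IF_eq_snr_eff}, the MMSE effective SNR of that stream. Hence for every realization of $\bH$ the effective SNR produced by NB‑IF is at least that of MMSE, so it stochastically dominates it and the CDF inequality follows.

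For the second half I would pass first to the lower‑bounding scheme built from $\mathcal A_{LB_1}^{(2)}$ and $\mathcal A_{LB_2}^{(2)}$: its relevant feature is that for the block in question it always keeps both $\be_1$ and $\be_1-\be_2$ in play, so — using the non‑equal‑rate / successive‑cancellation construction announced for NB‑IF, so that a block is not throttled by its weakest coefficient vector — the scalar $W$ whose CDF is $F_{NB-IF}$ (or a lower bound on it) satisfies $W\ge\max\{SNR_{eff}(\bH,\be_1),SNR_{eff}(\bH,\be_1-\be_2)\}$. By \Cref{lem:LB_lower_bounds_SNR_eff} and \Cref{def:LB_def}, $SNR_{eff}(\bH,\be_1)\ge X$ and $SNR_{eff}(\bH,\be_1-\be_2)\ge Y$, so $W\ge\max\{X,Y\}$. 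Let $\tilde X$ equal $X$ on $\overline{a_\epsilon}$ and $(1+\epsilon)X$ on $a_\epsilon$. By the definition of $a_\epsilon$, on $a_\epsilon$ one has $Y>(1+\epsilon)X$, so $\max\{X,Y\}=Y>\tilde X$ there, while on $\overline{a_\epsilon}$, $\max\{X,Y\}\ge X=\tilde X$; hence $W\ge\tilde X$ pointwise and
\begin{align*}
F_{NB-IF}(t)&\le P(\tilde X\le t)=P\left(X\le t,\ \overline{a_\epsilon}\right)+P\left(X\le\tfrac{t}{1+\epsilon},\ a_\epsilon\right)\\
&=F_X(t)-P(a_\epsilon)\left[F_{X\vert a_\epsilon}(t)-F_{X\vert a_\epsilon}\left(\tfrac{t}{1+\epsilon}\right)\right].
\end{align*}
Since the bracket is nonnegative, weakening $P(a_\epsilon)$ to any lower bound $\rho(\epsilon)\le P(a_\epsilon)$ keeps the inequality valid, and that is precisely the claimed bound; so it remains to verify $\rho(\epsilon)\le P(a_\epsilon)$.

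For that last point, write $\bS=\bH^T\bH$ and let $\bM$ be the $2\times2$ principal submatrix of $\bS^{-1}$ on coordinates $\{1,2\}$. By the standard Wishart/Schur‑complement identity, $\bM^{-1}\sim W_2(K,\bI)$ with $K=M_R-M_T+2$; write $\bM^{-1}=\bG^T\bG$ with $\bG$ a $K\times2$ matrix of i.i.d.\ standard Gaussians with columns $\mathbf g_1,\mathbf g_2$. Then $X=P\det(\bG^T\bG)/\|\mathbf g_2\|^2$ and $Y=P\det(\bG^T\bG)/\|\mathbf g_1+\mathbf g_2\|^2$, so $a_\epsilon$ is exactly the event $\|\mathbf g_2\|^2>(1+\epsilon)\|\mathbf g_1+\mathbf g_2\|^2$. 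This depends only on the magnitude ratio $\|\mathbf g_1\|/\|\mathbf g_2\|$ and on the angle between $\mathbf g_1$ and $\mathbf g_2$, which are independent: the squared ratio is $F_{(K,K)}$‑distributed (it is the variable $\phi$), and the angle carries the usual law of the angle between two isotropic directions in $\R^K$. Conditioning on the angle turns $a_\epsilon$ into a quadratic inequality for the ratio whose solution set is an interval; integrating the $F_{(K,K)}$‑mass of that interval against the angle law, then replacing the interval's left endpoint (which one checks is $\le 1$) by $1$, which only discards probability, and using $2(1+\cos2t)=4\cos^2 t$ to repackage the geometry, produces exactly $\rho(\epsilon)$ as a lower bound for $P(a_\epsilon)$. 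The hypothesis $\epsilon\in(0,1/\sqrt2)$ is precisely what makes $4\cos^2 t\,\big(\tfrac{\epsilon+1}{2\epsilon+1}\big)^2\ge 1$ throughout $[0,\tfrac{\pi}{4}]$, so the integrand $F_\phi(\cdot)-F_\phi(1)$ stays nonnegative. Intersecting with the MMSE half completes the proof.

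The main obstacle is the very first step of the second half: pinning down which scalar the CDF $F_{NB-IF}$ actually tracks and checking that the $\mathcal A_{LB_1}^{(2)}/\mathcal A_{LB_2}^{(2)}$ scheme really delivers $\max\{SNR_{eff}(\bH,\be_1),SNR_{eff}(\bH,\be_1-\be_2)\}$ for it — because a block's rate is set by its \emph{weakest} vector, this relies essentially on the successive‑cancellation / non‑equal‑rate structure rather than on a naive keep‑the‑best argument. The only other non‑routine piece is the bookkeeping in the $\rho(\epsilon)$ integral: verifying that the left endpoint of the conditional interval is $\le 1$ for all admissible $(\epsilon,t)$, so that enlarging it to $1$ is legitimate, and that the trigonometric repackaging into the displayed form is exact.
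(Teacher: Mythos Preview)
Your proposal tracks the paper's proof closely in overall architecture: the MMSE half via the identity block (the paper's Claim~4, phrased through its auxiliary DSV scheme), the passage from $SNR_{eff}$ to $\max\{X,Y\}$ via \Cref{lem:LB_lower_bounds_SNR_eff} (the paper's $DSV_{LB}$), the $a_\epsilon$ case split yielding $F_X(t)-P(a_\epsilon)[\cdots]$ (Claim~8), and finally $P(a_\epsilon)\ge\rho(\epsilon)$ via an angle/ratio decomposition (Claim~9). The execution differs in a few places. You invoke the Schur--complement/Wishart identity $\bM^{-1}\sim W_2(K,\bI)$ directly and write $X,Y$ in terms of the Gaussian columns $\mathbf g_1,\mathbf g_2$; the paper instead proves a separate ``channel reduction'' lemma (Claims~5--7) that replaces $\bH$ by a $2\times K$ matrix $\tilde\bH$ and then works with the entries $m_{ij}$ of $\tilde\bS^{-1}$. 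Your $\tilde X$ construction compresses what the paper does by introducing an auxiliary event $b:m_{12}>0$ alongside $a_\epsilon$ and tracking both through a longer case analysis. For the $\rho(\epsilon)$ step you propose solving the quadratic in the ratio and then weakening the left endpoint to $1$; the paper instead inserts a cruder majorization, replacing $m_2+\delta_\epsilon m_1$ by $(\delta_\epsilon+1)\max\{m_1,m_2\}$, before passing to the angle $\theta$ between $\tilde\bh_1,\tilde\bh_2$ and the $F$-distributed ratio $\phi$. Both routes land on the same integrand, and both implicitly treat $\theta$ as uniform on the circle when producing the $4/\pi$ prefactor---which is exactly right for $K=2$ and is what the paper does without further comment. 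Your flagged concern about which scalar $F_{NB\text{-}IF}$ tracks is apt: the paper resolves it only implicitly, by passing to DSV (a per-stream ``pick the best of two'' rule) and arguing block-by-block that its CDF dominates $F_{NB\text{-}IF}$.
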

The proof for \Cref{thm:CDF_of_LB_is_bounded_by_time_sharing} can be found in \Cref{subsection:proofThm3}

\subsection{Results}
We briefly give some numerical results and simulations to shed light on the performance of the block-based schemes. Throughout, only IF schemes are restricted to have equal rates.

\begin{figure}
	\centering
		\includegraphics[width=0.7\textwidth]{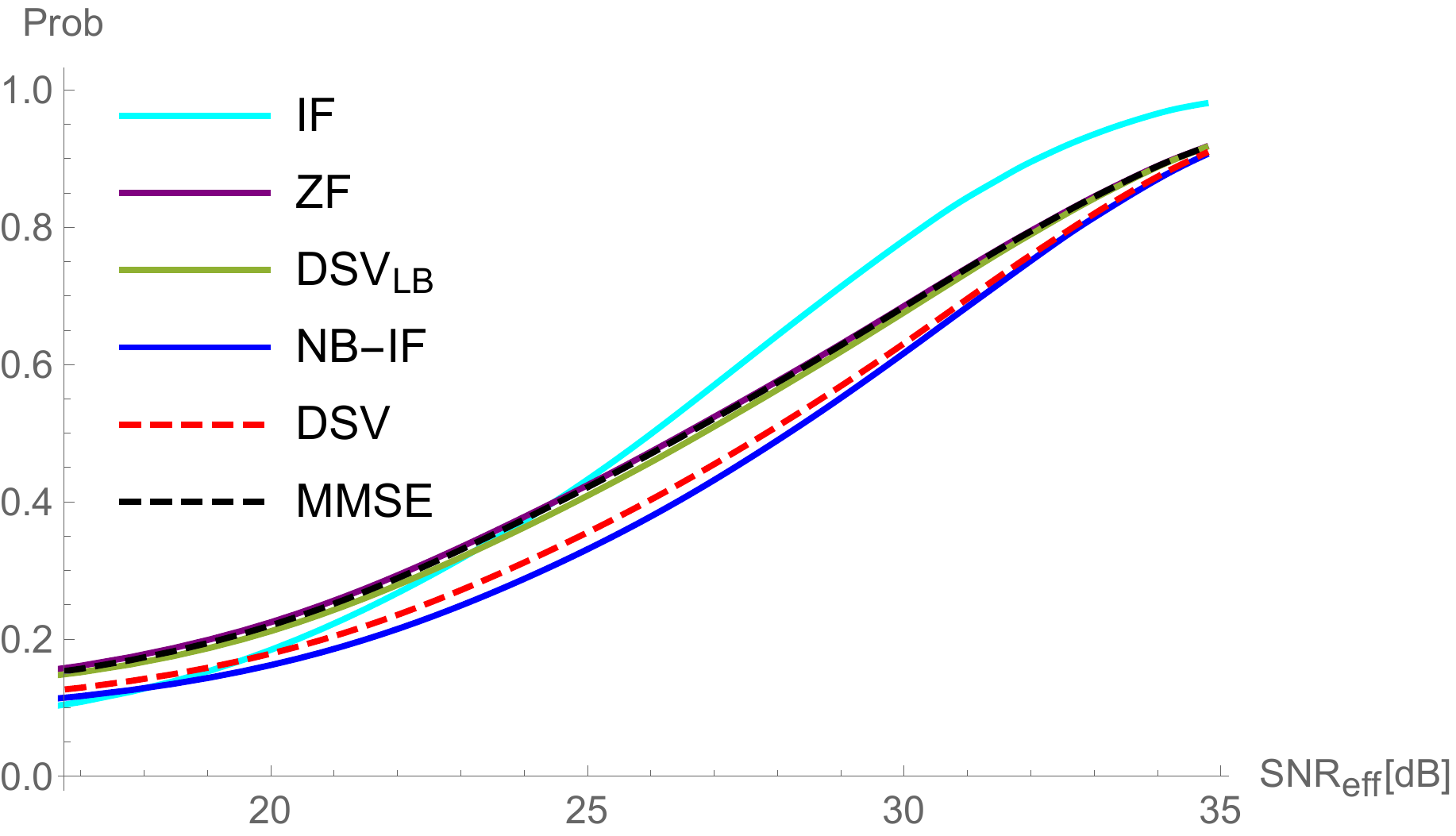}
	\caption{CDFs of IF, ZF,NB-IF, DSV, MMSE and the lower bound on DSV from \Cref{thm:CDF_of_LB_is_bounded_by_time_sharing}, where $M_R=M_T=2$. for $P=30dB$.}
	\label{fig:all_schemes_P_30db_iters_50k}
\end{figure}

\begin{figure}
	\centering
		\includegraphics[width=0.7\textwidth]{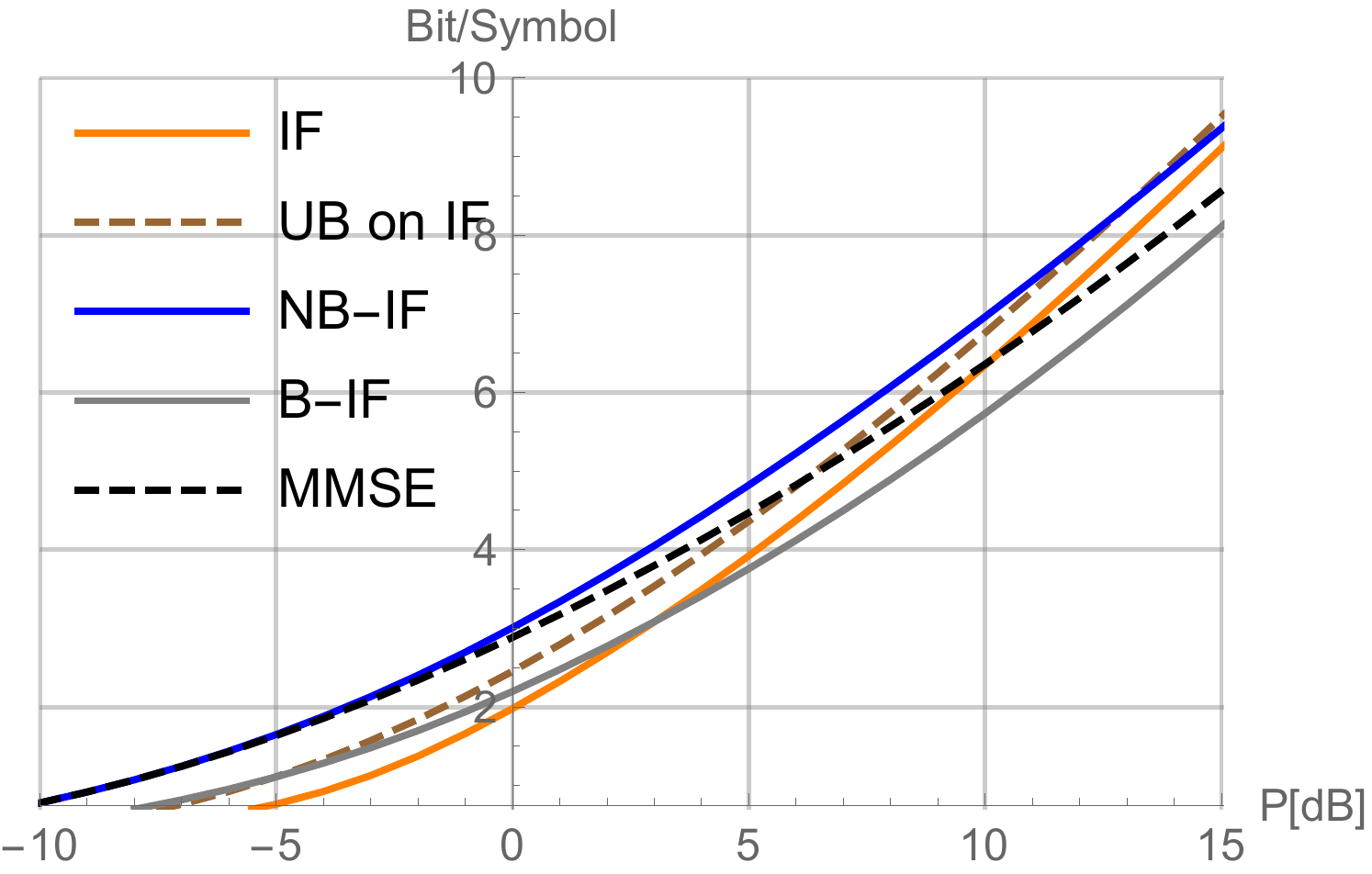}
	\caption{Average rates, i.e., the sum rate divided $M_T$, achieved by IF, B-IF, NB-IF and their bounds for $M_R=M_T=4$. The upper bound on IF (brown dashed line) is according to the upper bound in \cite{ordentlich2015precoded}.}
	\label{fig:B-IF_and_NB-IF}
\end{figure}
\Cref{fig:B-IF_and_NB-IF} depicts the performances of IF, B-IF and NB-IF together with the upper bound on IF. For any transmission power smaller than 3dB, not only is B-IF is better then IF in terms of computational complexity, performance is superior as well.
 
For lower transmission powers, we can see that NB-IF preforms better than IF. This happens because for low values of $P$, vectors with small norm are chosen by IF. However, unlike NB-IF, IF must use equal rates. For example, if both schemes use the same vectors, NB-IF has better performance. Note that when the transmission power is very small, e.g., $P<-5dB$, the graphs for MMSE and NB-IF become one. This phenomenon is not surprising, since for very low values of $P$ NB-IF selects only unit vectors, which is the same as MMSE.

At higher transmission powers, a different behavior is observed; MMSE becomes weaker and weaker relatively to the other schemes since the channel may ``support" vectors with higher norms. As long as $P<12dB$, NB-IF is better than the upper bound on IF. Thus, bounding the norm by $2$ but allowing different rates is better. For the higher values of $P$, we can see that IF becomes better and better as well. For values of $P$ higher than the ones presented, i.e., around $20dB$, IF becomes better than BB-IF. Hence, the gain from allowing the usage of higher norms is more important than the gain from transmitting in various rates.

\section{Appendix}\label{sec:appendix}
\subsection{Bounds' Construction}\label{subsec:construction_of_LB} 

\begin{proof}[Proof of \Cref{lem:LB_lower_bounds_SNR_eff}]
Let \begin{align}  \bS = \bH^T\bH,\end{align}  and let $\epsilon=\frac{1}{P}$.
Note that $\bS$ is a symmetric positive-definite matrix almost surely. Hence,
\begin{align}
&\frac{SNR_{eff}(\bH,\ba)}{\left({  \ba^T\left({P\bH^T\bH}\right)^{-1}\ba }\right)^{-1}}\\
& \stackrel{(a)}{=} \frac{ \left({ \ba^T {\left( \bI+P\bH^T\bH   \right)}^{-1} \ba}\right)^{-1}}{\left({  \ba^T\left({P\bH^T\bH}\right)^{-1}\ba }\right)^{-1}}\\
&= \frac{ \left({ \ba^T {\left( \bI+ \frac{1}{\epsilon}\bS   \right)}^{-1} \ba}\right)^{-1}}{\left({  \ba^T\left({\frac{1}{\epsilon}\bS}\right)^{-1}\ba }\right)^{-1}} \\
&= \frac{\frac{1}{\epsilon} \cdot   \left({ \ba^T {\left( \epsilon \bI+ \bS   \right)}^{-1} \ba}\right)^{-1}}{\frac{1}{\epsilon} \cdot  
\left({  \ba^T{\bS}^{-1}\ba }\right)^{-1}} \\
&= \frac{\ba^T{\bS}^{-1}\ba }{\ba^T {\left( \epsilon \bI+ \bS   \right)}^{-1} \ba}. \label{eq:ratio_SNR_eff_div_LB}\\
& \stackrel{(b)}{=}
 1  + \sum_{k=1}^\infty{ \left( \epsilon \cdot \frac{ \ba^T\bS^{-1}\left(\bS + \epsilon \bI\right)^{-1}\ba}{\ba^T\bS^{-1}\ba} \right)  }^k \label{eq:ratio_eff_snr_and_lb}\\
&  \geq 1,
\end{align}
where (a) follows from \cref{eq:IF_eq_snr_eff} and (b) is due to \Cref{lemma:ratio_IF_vs_my_known_dist}.  Note that the argument in the sum is always non negative since $\epsilon$ is positive and the fraction is ratio between to quadratic forms of almost surly positive definite matrices.
\end{proof}

\begin{lemma}
\label{lemma:ratio_IF_vs_my_known_dist}
Let  $\bM$ be a symmetric positive-definite matrix and $\ba \in \R^n$ s.t $\normSqr[\ba]>0$. Then, for any $\epsilon\in \R^+$,
 \begin{align}\frac{\ba^T\bM^{-1}\ba}{\ba^T\left({ \epsilon\bI+\bM}\right)^{-1}\ba }
=
1+ \sum_{k=1}^\infty{ \left( \epsilon \cdot\frac{ \ba^T\bM^{-1}\left(\bM + \epsilon \bI\right)^{-1}\ba}{\ba^T\bM^{-1}\ba} \right)  }^k.
\end{align}
\end{lemma}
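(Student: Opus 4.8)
The plan is to reduce the claimed series identity to an elementary geometric series: first peel off a resolvent identity to rewrite the left-hand side as $1/(1-q)$ for an explicit scalar $q$, then verify $0<q<1$ so that $\frac{1}{1-q}=1+\sum_{k\geq1}q^{k}$.

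First I would invoke the second resolvent identity in the form
\[
\bM^{-1}-(\bM+\epsilon\bI)^{-1}=\bM^{-1}\bigl[(\bM+\epsilon\bI)-\bM\bigr](\bM+\epsilon\bI)^{-1}=\epsilon\,\bM^{-1}(\bM+\epsilon\bI)^{-1},
\]
which is valid since $\bM\succ0$ makes both $\bM$ and $\bM+\epsilon\bI$ invertible for $\epsilon>0$. Sandwiching this between $\ba^{T}$ and $\ba$ and dividing through by $\ba^{T}\bM^{-1}\ba$ (which is strictly positive because $\bM^{-1}\succ0$ and $\ba\neq0$) gives
\[
\frac{\ba^{T}(\bM+\epsilon\bI)^{-1}\ba}{\ba^{T}\bM^{-1}\ba}=1-q,\qquad q\defined\epsilon\cdot\frac{\ba^{T}\bM^{-1}(\bM+\epsilon\bI)^{-1}\ba}{\ba^{T}\bM^{-1}\ba},
\]
so the left-hand side of the lemma equals $1/(1-q)$, and $q$ is precisely the quantity being raised to the $k$-th power in the stated sum.

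It then remains to show $0<q<1$. For positivity: $\bM^{-1}$ and $(\bM+\epsilon\bI)^{-1}$ are rational functions of the single matrix $\bM$, hence commute, so their product is symmetric; as a product of two commuting positive-definite matrices it is positive definite, whence $\ba^{T}\bM^{-1}(\bM+\epsilon\bI)^{-1}\ba>0$ and $q>0$. For $q<1$: since $\bM+\epsilon\bI\succ\bM\succ0$, inversion reverses the order, $(\bM+\epsilon\bI)^{-1}\prec\bM^{-1}$, and evaluating the quadratic forms at $\ba\neq0$ yields $\ba^{T}(\bM+\epsilon\bI)^{-1}\ba<\ba^{T}\bM^{-1}\ba$, i.e.\ $1-q>0$. (Alternatively, diagonalize $\bM=\bU^{T}\mathrm{diag}(d_{1},\dots,d_{n})\bU$ with $d_{i}>0$, put $\bv=\bU\ba$, and observe directly that $\sum_{i}v_{i}^{2}/d_{i}-\epsilon\sum_{i}v_{i}^{2}/\bigl(d_{i}(d_{i}+\epsilon)\bigr)=\sum_{i}v_{i}^{2}/(d_{i}+\epsilon)$, which exhibits both $q>0$ and $q<1$ term by term.) With $q\in(0,1)$ in hand, the geometric series $\frac{1}{1-q}=1+\sum_{k\geq1}q^{k}$ closes the argument — and incidentally certifies that the infinite sum in the statement converges.

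The argument is essentially routine; the only point needing a moment's care is the claim that $\bM^{-1}(\bM+\epsilon\bI)^{-1}$ is positive definite, which rests on the fact that its two factors commute (so the product is symmetric, with eigenvalues the positive products of the factors' eigenvalues). Everything else is the resolvent identity, monotonicity of matrix inversion on the positive-definite cone, and convergence of a geometric series.
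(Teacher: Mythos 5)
Your proof is correct and follows essentially the same route as the paper's: both establish the identity $\ba^{T}(\epsilon\bI+\bM)^{-1}\ba=\ba^{T}\bM^{-1}\ba-\epsilon\,\ba^{T}\bM^{-1}(\bM+\epsilon\bI)^{-1}\ba$, show the resulting scalar lies in $(0,1)$, and expand $1/(1-q)$ as a geometric series. The only difference is cosmetic: you obtain the key identity in one line from the second resolvent identity, whereas the paper verifies it by diagonalizing $\bM$ (your parenthetical alternative is exactly the paper's computation).
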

In order to prove \Cref{lemma:ratio_IF_vs_my_known_dist}, \Cref{clm:difference_btwn_invSNR_and_ainvSa,clm:UB_LB_ratio_aS2a_to_aS1a} are needed.

\begin{claim}
\label{clm:difference_btwn_invSNR_and_ainvSa}
Let  $\bM\in \R^{n \times n}$ be a symmetric positive-definite matrix and let $\ba \in \R^{n}$ s.t $\normSqr[\ba]>0$. Then, $\forall \epsilon > 0$:
\begin{align*}
 \ba^T\left({ \epsilon\bI+\bM}\right)^{-1}\ba 
=  \ba^T\bM^{-1}\ba  -  \epsilon \cdot \ba^T\bM^{-1}\left(\bM + \epsilon \bI\right)^{-1}\ba
\end{align*}
\end{claim}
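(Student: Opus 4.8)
The plan is to reduce the claim to the elementary \emph{resolvent (push-through) identity} for matrix inverses and then sandwich it between $\ba^T$ and $\ba$. Nothing probabilistic or analytic is involved; it is a one-line algebraic manipulation, so I do not expect a genuine obstacle — the only point requiring a word of justification is that all the inverses in the statement actually exist.

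First I would observe that since $\bM$ is symmetric positive-definite and $\epsilon>0$, the matrix $\epsilon\bI+\bM$ is again symmetric positive-definite, hence invertible; thus $\bM^{-1}$, $(\epsilon\bI+\bM)^{-1}$ and $\bM^{-1}(\bM+\epsilon\bI)^{-1}$ are all well defined. (The hypothesis $\|\ba\|^2>0$ plays no role in this particular claim; it is carried along because the parent statement, \Cref{lemma:ratio_IF_vs_my_known_dist}, divides by $\ba^T\bM^{-1}\ba$.)

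Next I would invoke the identity $\bX^{-1}-\bY^{-1}=\bX^{-1}(\bY-\bX)\bY^{-1}$, valid for any invertible $\bX,\bY$, with $\bX=\bM$ and $\bY=\bM+\epsilon\bI$, so that $\bY-\bX=\epsilon\bI$. This gives
\begin{align*}
\bM^{-1}-(\bM+\epsilon\bI)^{-1}
&= \bM^{-1}\bigl[(\bM+\epsilon\bI)-\bM\bigr](\bM+\epsilon\bI)^{-1}
= \epsilon\,\bM^{-1}(\bM+\epsilon\bI)^{-1},
\end{align*}
and rearranging yields $(\bM+\epsilon\bI)^{-1}=\bM^{-1}-\epsilon\,\bM^{-1}(\bM+\epsilon\bI)^{-1}$.

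Finally I would multiply this operator identity on the left by $\ba^T$ and on the right by $\ba$, obtaining
\[
\ba^T(\epsilon\bI+\bM)^{-1}\ba=\ba^T\bM^{-1}\ba-\epsilon\cdot\ba^T\bM^{-1}(\bM+\epsilon\bI)^{-1}\ba,
\]
which is exactly the asserted equality. The "hard part", such as it is, is merely bookkeeping: making sure the factorization $\bX^{-1}-\bY^{-1}=\bX^{-1}(\bY-\bX)\bY^{-1}$ is applied with the right choice of $\bX,\bY$ and that $\epsilon\bI$ commutes appropriately (it is scalar, so it does), after which the quadratic-form version follows immediately.
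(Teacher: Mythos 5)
Your proof is correct, and it takes a genuinely different route from the paper's. The paper diagonalizes $\bM=\bU^T\bD\bU$, writes $\bw=\bU\ba$, and computes the difference of quadratic forms term by term, $\frac{w_i^2}{d_i}-\frac{w_i^2}{\epsilon+d_i}=\frac{\epsilon w_i^2}{d_i(d_i+\epsilon)}$, before reassembling the sum into $\epsilon\,\ba^T\bM^{-1}(\bM+\epsilon\bI)^{-1}\ba$. You instead invoke the operator-level resolvent identity $\bX^{-1}-\bY^{-1}=\bX^{-1}(\bY-\bX)\bY^{-1}$ with $\bX=\bM$, $\bY=\bM+\epsilon\bI$, which is shorter, avoids any spectral decomposition, and in fact needs neither symmetry nor positive-definiteness --- only that both matrices are invertible. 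What the paper's component-wise computation buys is that the positivity of each summand $\frac{\epsilon w_i^2}{d_i(d_i+\epsilon)}$ is visible on the page, which is exactly what \Cref{crol:minum1_of_aSa_is_LB} and \Cref{crol:aSa_bigger_then_delta_of_epsilon} read off immediately afterwards; under your route those corollaries would require the extra (easy but nontrivial) observation that $\bM^{-1}(\bM+\epsilon\bI)^{-1}$ is itself positive definite because $\bM$ and $\bM+\epsilon\bI$ are commuting positive-definite matrices. You are also right that the hypothesis $\normSqr[\ba]>0$ is not needed for the identity itself, only for the division performed in \Cref{lemma:ratio_IF_vs_my_known_dist}.
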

\begin{proof}[Proof of \Cref{clm:difference_btwn_invSNR_and_ainvSa}]
By SVD Let $\bM = \bU^T\bD\bU$, let $d_i \defined [\bD]_{ii}$ and let $\bw \defined \bU\ba$. Then,
\begin{align*}
\ba^T\bM^{-1}\ba  - \ba^T\left({ \epsilon\bI+\bM}\right)^{-1}\ba 
&= \ba^T \left(\bM^{-1} -{\left({ \epsilon\bI+\bM}\right)^{-1}} \right) \ba \\
&= \ba^T \left(  (\bU^T\bD\bU)^{-1} -{\left({ \epsilon\bI+\bU^T\bD\bU}\right)^{-1} } \right) \ba \\
&= \ba^T\bU^T \left( \bD^{-1} - (\epsilon \bI + \bD)^{-1}  \right) \bU\ba\\
&=\bw^T \left( \bD^{-1} - (\epsilon \bI + \bD)^{-1}  \right) \bw \\
&= \sum_{i=1}^n {\left(\frac{w_i^2}{[\bD]_{ii}} -  \frac{w_i^2}{\epsilon + [\bD]_{ii}}  \right)} \\
&= \sum_{i=1}^n \frac{\epsilon w_i^2}{[\bD]_{ii} \left([\bD]_{ii} +\epsilon \right)} \\
&= \epsilon \sum_{i=1}^n \frac{ w_i^2}{d_i \left(d_i +\epsilon \right)}\\
&= \epsilon \cdot \bw^T \left(\bD^{-1} (\bD+\epsilon\bI)^{-1} \right)\bw\\
&= \epsilon \cdot \ba^T\bU^T \bD^{-1} (\bD+\epsilon\bI)^{-1} \bU\ba\\
&= \epsilon \cdot \ba^T\bM^{-1}\left(\bM + \epsilon \bI\right)^{-1}\ba.
\end{align*}
\end{proof}

\begin{corollary}
 $ \ba^T\bM^{-1}\ba >  \ba^T\left({ \epsilon\bI+\bM}\right)^{-1}\ba > 0$ \label{crol:minum1_of_aSa_is_LB}
\end{corollary}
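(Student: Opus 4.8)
The plan is to read off both inequalities from the identity already established in \Cref{clm:difference_btwn_invSNR_and_ainvSa}, splitting the statement into its two halves and treating each by an elementary positive-definiteness argument.

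First I would dispose of the right inequality $\ba^T(\epsilon\bI+\bM)^{-1}\ba > 0$. Since $\bM$ is symmetric positive-definite and $\epsilon > 0$, the matrix $\epsilon\bI+\bM$ is symmetric positive-definite, and therefore so is its inverse. As $\ba \neq 0$, the quadratic form of a positive-definite matrix evaluated at $\ba$ is strictly positive, which gives the bound immediately.

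Next I would establish the left inequality $\ba^T\bM^{-1}\ba > \ba^T(\epsilon\bI+\bM)^{-1}\ba$. By \Cref{clm:difference_btwn_invSNR_and_ainvSa} the difference of the two sides is exactly $\epsilon\cdot\ba^T\bM^{-1}(\bM+\epsilon\bI)^{-1}\ba$, so it suffices to check that this term is strictly positive. I would reuse the diagonalization from the proof of that claim: writing $\bM = \bU^T\bD\bU$ with $d_i \defined [\bD]_{ii} > 0$ and $\bw \defined \bU\ba$, the term equals $\epsilon\sum_{i=1}^n \frac{w_i^2}{d_i(d_i+\epsilon)}$. Every summand is nonnegative, and since $\bU$ is orthogonal and $\ba \neq 0$ we have $\bw \neq 0$, so at least one $w_i$ is nonzero and the sum is strictly positive.

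The argument is essentially immediate, so there is no serious obstacle; the only point deserving a moment of care is that $\bM^{-1}(\bM+\epsilon\bI)^{-1}$ is not manifestly symmetric, so one should not invoke positive-definiteness of its quadratic form without justification. This is cleanly handled by the diagonal reduction above, or equivalently by observing that $\bM^{-1}$ and $(\bM+\epsilon\bI)^{-1}$ are both functions of $\bM$ and hence commute, which makes their product symmetric positive-definite.
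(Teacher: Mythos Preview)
Your proposal is correct and follows exactly the route the paper intends: the corollary is stated without proof in the paper, as both inequalities are immediate from \Cref{clm:difference_btwn_invSNR_and_ainvSa} (whose proof already exhibits the difference as $\epsilon\sum_i w_i^2/[d_i(d_i+\epsilon)]$) together with positive-definiteness of $(\epsilon\bI+\bM)^{-1}$. Your extra remark on why $\bM^{-1}(\bM+\epsilon\bI)^{-1}$ yields a positive quadratic form is a nice point of rigor that the paper leaves implicit.
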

\begin{corollary}
 $ \ba^T\bM^{-1}\ba >  \epsilon \cdot \ba^T\bM^{-1}\left(\bM + \epsilon \bI\right)^{-1}\ba > 0 $ \label{crol:aSa_bigger_then_delta_of_epsilon}
\end{corollary}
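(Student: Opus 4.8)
The plan is to obtain \Cref{crol:aSa_bigger_then_delta_of_epsilon} as an immediate algebraic consequence of the identity in \Cref{clm:difference_btwn_invSNR_and_ainvSa} together with \Cref{crol:minum1_of_aSa_is_LB}, so that no fresh computation is required. \Cref{clm:difference_btwn_invSNR_and_ainvSa} states that
\[
\ba^T\left(\epsilon\bI+\bM\right)^{-1}\ba = \ba^T\bM^{-1}\ba - \epsilon\cdot\ba^T\bM^{-1}\left(\bM+\epsilon\bI\right)^{-1}\ba .
\]
First I would rearrange this to isolate the middle quantity of the corollary, giving
\[
\epsilon\cdot\ba^T\bM^{-1}\left(\bM+\epsilon\bI\right)^{-1}\ba = \ba^T\bM^{-1}\ba - \ba^T\left(\epsilon\bI+\bM\right)^{-1}\ba .
\]
In other words, the central term of the desired chain is exactly the difference whose positivity is the content of \Cref{crol:minum1_of_aSa_is_LB}.

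With this identity in hand, both inequalities follow directly. For the right inequality $\epsilon\cdot\ba^T\bM^{-1}(\bM+\epsilon\bI)^{-1}\ba > 0$, I substitute the identity and observe that the right-hand side equals $\ba^T\bM^{-1}\ba - \ba^T(\epsilon\bI+\bM)^{-1}\ba$, which is strictly positive because \Cref{crol:minum1_of_aSa_is_LB} gives $\ba^T\bM^{-1}\ba > \ba^T(\epsilon\bI+\bM)^{-1}\ba$. For the left inequality $\ba^T\bM^{-1}\ba > \epsilon\cdot\ba^T\bM^{-1}(\bM+\epsilon\bI)^{-1}\ba$, substituting the same identity reduces the claim to $\ba^T\bM^{-1}\ba > \ba^T\bM^{-1}\ba - \ba^T(\epsilon\bI+\bM)^{-1}\ba$, i.e.\ to $\ba^T(\epsilon\bI+\bM)^{-1}\ba > 0$, which is again supplied by \Cref{crol:minum1_of_aSa_is_LB}. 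Both halves of the chain are thus established.

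There is essentially no obstacle here, since the statement is a pure algebraic rearrangement of results already proved; the only point worth making explicit is that the strictness in \Cref{crol:minum1_of_aSa_is_LB}, which rests on $\bM$ and $\epsilon\bI+\bM$ being positive definite together with $\ba\neq 0$, is exactly what keeps both inequalities strict rather than merely non-strict. As an independent sanity check I could instead diagonalize $\bM=\bU^T\bD\bU$ exactly as in the proof of \Cref{clm:difference_btwn_invSNR_and_ainvSa}, set $\bw=\bU\ba$ and $d_i=[\bD]_{ii}>0$, so that $\ba^T\bM^{-1}\ba=\sum_i w_i^2/d_i$ and $\epsilon\cdot\ba^T\bM^{-1}(\bM+\epsilon\bI)^{-1}\ba=\sum_i \epsilon\,w_i^2/\big(d_i(d_i+\epsilon)\big)$; the termwise bounds $1/d_i > \epsilon/\big(d_i(d_i+\epsilon)\big) > 0$ hold for every $i$ precisely because $d_i>0$ and $\epsilon>0$, and since $\bw\neq 0$ at least one $w_i\neq 0$, yielding both strict inequalities after summation.
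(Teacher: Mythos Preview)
Your proposal is correct and matches the paper's intended approach: the corollary is stated immediately after \Cref{clm:difference_btwn_invSNR_and_ainvSa} and \Cref{crol:minum1_of_aSa_is_LB} with no separate proof, so it is meant to follow by exactly the algebraic rearrangement you give. The optional diagonalization sanity check you add is also fine and mirrors the computation inside the proof of \Cref{clm:difference_btwn_invSNR_and_ainvSa}.
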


\begin{definition}
Let $\bM\in\R^{n \times n}$ be  a symmetric positive-definite symmetric matrix, and let $d_i$ be the $i^{th}$ singular value of $\bM$. \end{definition}

Using SVD, $\bM= \bU^T\bD\bU$ where $\bU$ is a matrix built from $\bM$'s eigenvectors and $\bD = diag(\{d_i\})$. 
\begin{remark}\label{rmrk:sum_of_1_dev_SV_is_trance_of_inv_mat}
\begin{equation*}
\sum_{i=1}^n {1/d_i} = tr(\bM^{-1}).
\end{equation*}
\end{remark}

\begin{claim}\label{clm:UB_LB_ratio_aS2a_to_aS1a}
Let $\bM\in \R^{n \times n}$ be a symmetric positive-definite matrix. Then, for any $\ba \in \R^n$ s.t $\normSqr[\ba]>0$,
\begin{equation*}
 \frac{\ba^T\bM^{-2}\ba}{\ba^T\bM^{-1}\ba} \leq  tr(\bM^{-1}).
\end{equation*}
\end{claim}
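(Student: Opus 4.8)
The plan is to diagonalize $\bM$ via its SVD and reduce the inequality to an elementary statement about a weighted average of the singular values. Write $\bM = \bU^T\bD\bU$ with $\bD = \mathrm{diag}(d_1,\ldots,d_n)$, all $d_i > 0$, and set $\bw = \bU\ba$, so that $\normSqr[\bw] = \normSqr[\ba] > 0$. Then $\ba^T\bM^{-1}\ba = \sum_i w_i^2/d_i$ and $\ba^T\bM^{-2}\ba = \sum_i w_i^2/d_i^2$, and by \Cref{rmrk:sum_of_1_dev_SV_is_trance_of_inv_mat}, $tr(\bM^{-1}) = \sum_i 1/d_i$.

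With these substitutions the claimed bound becomes
\begin{align*}
\frac{\sum_{i=1}^n w_i^2/d_i^2}{\sum_{i=1}^n w_i^2/d_i} \leq \sum_{i=1}^n \frac{1}{d_i}.
\end{align*}
Next I would interpret the left-hand side as a weighted average: put $p_i \defined (w_i^2/d_i)\big/\big(\sum_j w_j^2/d_j\big)$, so the $p_i$ are nonnegative and sum to $1$, and observe that the left-hand side equals $\sum_i p_i \cdot (1/d_i)$. This is a convex combination of the values $\{1/d_i\}$, hence it is at most $\max_i (1/d_i)$, which in turn is at most $\sum_i 1/d_i$ since every term is positive. (If some $w_i = 0$ the corresponding term simply drops out and the argument is unaffected, as long as at least one $w_i \neq 0$, which holds because $\normSqr[\ba] > 0$.)

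Alternatively, and perhaps even more transparently for the write-up, one can clear denominators and prove $\big(\sum_i w_i^2/d_i^2\big) \leq \big(\sum_i w_i^2/d_i\big)\big(\sum_i 1/d_i\big)$ directly by noting that the right-hand side expands to $\sum_{i,j} w_i^2/(d_i d_j)$, which contains the diagonal terms $\sum_i w_i^2/d_i^2$ plus the nonnegative cross terms $\sum_{i\neq j} w_i^2/(d_i d_j) \geq 0$. I do not expect any real obstacle here — the only things to be careful about are that $\bM$ is positive definite so all $d_i > 0$ (guaranteeing the denominators are positive and the inequality direction is correct) and that the $w_i$ are not all zero; both are ensured by the hypotheses. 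I would present the weighted-average version as the main proof since it is the cleanest.
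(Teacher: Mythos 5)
Your proof is correct and follows essentially the same route as the paper's: both diagonalize $\bM$ via $\bM=\bU^T\bD\bU$, express $\ba$ in the eigenbasis, and reduce the claim to the scalar inequality $\bigl(\sum_i f_i/d_i^2\bigr)/\bigl(\sum_i f_i/d_i\bigr)\le\sum_i 1/d_i$ with nonnegative weights $f_i$ summing to one. The only divergence is in the last step: the paper closes by invoking the Cauchy--Schwarz inequality with the sequences $a_i=f_i/d_i$ and $b_i=1/d_i$, a justification it leaves rather terse, whereas your two finishes --- reading the left-hand side as a convex combination $\sum_i p_i(1/d_i)\le\max_i 1/d_i\le\sum_i 1/d_i$, or expanding $\bigl(\sum_i w_i^2/d_i\bigr)\bigl(\sum_j 1/d_j\bigr)=\sum_{i,j}w_i^2/(d_id_j)$ and discarding the nonnegative off-diagonal terms --- are both airtight and more elementary. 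There is no gap; either of your versions would in fact be a cleaner way to conclude the paper's own argument.
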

\begin{proof}[Proof of \Cref{clm:UB_LB_ratio_aS2a_to_aS1a}]
\begin{align*}
\frac{\bu_k^T\bM^{-2}\bu_k}{\bu_k^T\bM^{-1}\bu_k} &= \frac{\frac{1}{d_k^2}}{\frac{1}{d_k}} = \frac{1}{d_k}
\end{align*}
Since $span(\bU) = \bR^n$, $\ba$ can be written as a linear combination of the vectors in $\bU$. Let $\sum_{i=1}^n {\alpha_i\bu_i} = \ba$, then
\begin{align*}
\frac{\ba^T\bM^{-2}\ba}{\ba^T\bM^{-1}\ba} 
&= \frac{(\sum_{i=1}^n {\alpha_i\bu_i})^T\bM^{-2}(\sum_{i=1}^n {\alpha_i\bu_i})}{(\sum_{i=1}^n {\alpha_i\bu_i})^T\bM^{-1}(\sum_{i=1}^n {\alpha_i\bu_i})} \\
&=\frac{\sum_{i=1}^n{ \frac{\alpha_i^2 }{d_i^2}}}{\sum_{i=1}^n{ \frac{\alpha_i^2 }{d_i}}} \cdot \frac{1/\normSqr[\ba]}{1/\normSqr[\ba]}\\ 
&=
 \frac{\sum_{i=1}^n{ \frac{\alpha_i^2 }{\normSqr[\ba] } \cdot \frac{1}{d_i^2}}}{\sum_{i=1}^n{ \frac{\alpha_i^2 }{\normSqr[\ba] } \cdot \frac{1}{d_i}}} \\
& = \frac{\sum_{i=1}^n{ \left(\frac{\alpha_i^2}{\sum_j \alpha_j^2}\right) \cdot \frac{1}{d_i^2}}}{\sum_{i=1}^n{ \left(\frac{\alpha_i^2}{\sum_j \alpha_j^2}\right) \cdot \frac{1}{d_i}}}\\
& = \frac{\sum_{i=1}^nf_i/d_i^2}{\sum_{i=1}^nf_i/d_i}\\
& \stackrel{(a)}{\leq} tr(\bM^{-1}),
\end{align*}
where $(a)$ follows from  \cref{rmrk:sum_of_1_dev_SV_is_trance_of_inv_mat} and  using Cauchy--Schwarz inequality with the sequences  $a_i = \frac{f_i}{d_i}$ and $b_i=\frac{1}{d_1}$.
\end{proof}

\begin{corollary}\label{corl:bound_by_trace}
 \begin{align*}
\frac{\ba^T\bM^{-1}\left(\bM + \epsilon \bI\right)^{-1}\ba}{\ba^T\bM^{-1}\ba} 
 &\stackrel{(a)}{\leq} \frac{\ba^T\bM^{-1}\bM^{-1}\ba}{\ba^T\bM^{-1}\ba} \\
&= \frac{ \ba^T\bM^{-2}\ba}{\ba^T\bM^{-1}\ba}\\ 
 &\stackrel{(b)}{\leq} tr(\bM^{-1})
,
\end{align*}
where (a) is following from \Cref{clm:difference_btwn_invSNR_and_ainvSa} and (b) follows from \Cref{clm:UB_LB_ratio_aS2a_to_aS1a}.
\end{corollary}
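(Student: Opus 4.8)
The plan is to verify the two labeled steps (a) and (b) in the displayed chain, since their composition is exactly the asserted bound $\frac{\ba^T\bM^{-1}(\bM+\epsilon\bI)^{-1}\ba}{\ba^T\bM^{-1}\ba}\leq tr(\bM^{-1})$. Step (b) is a verbatim application of \Cref{clm:UB_LB_ratio_aS2a_to_aS1a} to the ratio $\ba^T\bM^{-2}\ba/\ba^T\bM^{-1}\ba$, so no new argument is needed there. All the content therefore sits in step (a), namely the quadratic-form inequality
\[
\ba^T\bM^{-1}(\bM+\epsilon\bI)^{-1}\ba \;\leq\; \ba^T\bM^{-2}\ba .
\]

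First I would reuse the eigendecomposition already set up in the proof of \Cref{clm:difference_btwn_invSNR_and_ainvSa}. Writing $\bM=\bU^T\bD\bU$ with $\bD=diag(d_1,\ldots,d_n)$ and each $d_i>0$ (valid since $\bM$ is positive definite a.s.), and putting $\bw=\bU\ba$, both $\bM^{-1}(\bM+\epsilon\bI)^{-1}$ and $\bM^{-2}$ are simultaneously diagonalized by $\bU$. Indeed, the proof of \Cref{clm:difference_btwn_invSNR_and_ainvSa} already records $\ba^T\bM^{-1}(\bM+\epsilon\bI)^{-1}\ba=\sum_{i} w_i^2/\bigl(d_i(d_i+\epsilon)\bigr)$, while the same basis gives $\ba^T\bM^{-2}\ba=\sum_{i} w_i^2/d_i^2$. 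The comparison is then termwise: because $\epsilon>0$ and $d_i>0$ we have $d_i(d_i+\epsilon)\geq d_i^2$, hence $1/\bigl(d_i(d_i+\epsilon)\bigr)\leq 1/d_i^2$, and summing against the nonnegative weights $w_i^2$ yields (a). Equivalently, at the operator level $\bM\prec\bM+\epsilon\bI$ gives $(\bM+\epsilon\bI)^{-1}\preceq\bM^{-1}$ on the positive-definite cone, and conjugating by $\bM^{-1/2}$ preserves this Loewner order, which is precisely inequality (a). Chaining (a) with \Cref{clm:UB_LB_ratio_aS2a_to_aS1a} then closes the argument.

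There is no serious obstacle here; the statement is essentially a one-line consequence of monotonicity of the matrix inverse together with the already-proved trace bound. The only point deserving care is phrasing step (a) as the \emph{quadratic-form} inequality above rather than as a Loewner inequality between the two product matrices themselves, since $\bM^{-1}(\bM+\epsilon\bI)^{-1}$ need not be symmetric; passing to the diagonalizing basis $\bU$ sidesteps that subtlety entirely. Positivity of every term also guarantees $\ba^T\bM^{-1}\ba>0$ whenever $\ba\neq 0$, so the ratio defining the left-hand side is well defined throughout.
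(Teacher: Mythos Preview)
Your proposal is correct and matches the paper's approach. The corollary in the paper carries its proof inline via the two citations: step (a) is exactly the quadratic-form monotonicity $(\bM+\epsilon\bI)^{-1}\preceq\bM^{-1}$ that \Cref{clm:difference_btwn_invSNR_and_ainvSa} (and its \Cref{crol:minum1_of_aSa_is_LB}) establishes---your termwise eigenbasis comparison simply unwinds that same argument---and step (b) is \Cref{clm:UB_LB_ratio_aS2a_to_aS1a} verbatim, as you note.
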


Now we can prove \Cref{lemma:ratio_IF_vs_my_known_dist}.
\begin{proof}[Proof of \Cref{lemma:ratio_IF_vs_my_known_dist}]

 Using \Cref{corl:bound_by_trace}, let as define 
\begin{equation*}
\delta(\epsilon) \defined \epsilon \cdot \frac{\ba^T\bM^{-1}\left(\bM + \epsilon \bI\right)^{-1}\ba}{\ba^T\bM^{-1}\ba}.
\end{equation*} 
Thus,
\begin{align*}
\frac{\ba^T\bM^{-1}\ba}{\ba^T\left({ \epsilon\bI+\bM}\right)^{-1}\ba } 
&= \frac{\ba^T\bM^{-1}\ba} {\ba^T\bM^{-1}\ba - \epsilon \cdot \ba^T\bM^{-1}\left(\bM + \epsilon \bI\right)^{-1}\ba}\\ 
&= 1+ \frac{\epsilon \cdot \ba^T\bM^{-1}\left(\bM + \epsilon \bI\right)^{-1}\ba}{\ba^T\bM^{-1}\ba - \epsilon \cdot \ba^T\bM^{-1}\left(\bM + \epsilon \bI\right)^{-1}\ba}\\
&= 1+\frac{ \epsilon \cdot\frac{ \ba^T\bM^{-1}\left(\bM + \epsilon \bI\right)^{-1}\ba}{\ba^T\bM^{-1}\ba}}{1-  \epsilon \cdot\frac{ \ba^T\bM^{-1}\left(\bM + \epsilon \bI\right)^{-1}\ba}{\ba^T\bM^{-1}\ba}}\\
&= 1+\frac{\delta(\epsilon)}{1-\delta(\epsilon)}.
\end{align*}
According to \Cref{crol:aSa_bigger_then_delta_of_epsilon}, $0 < \delta(\epsilon) < 1$, therefore, by the sum of geometric series  we have:
\begin{align*}
\frac{\ba^T\bM^{-1}\ba}{\ba^T\left({ \epsilon\bI+\bM}\right)^{-1}\ba } 
&=  1+ \sum_{k=1}^\infty{\delta(\epsilon)}^k\\
&= 1+ \sum_{k=1}^\infty{ \left(\epsilon \cdot \frac{ \ba^T\bM^{-1}\left(\bM + \epsilon \bI\right)^{-1}\ba}{\ba^T\bM^{-1}\ba} \right)  }^k.
\end{align*}
\end{proof}

\begin{claim}\label{clm:LB_ratio_IF_vs_my_known_dist}
Let $M\in\R^{n \times n}$ be a symmetric positive definite matrix. Then, for any $\ba \in \R^n$ s.t $\normSqr[\ba]>0$,
\begin{equation*}
\frac{\ba^T\bM^{-1}\ba}{\ba^T\left({ \epsilon\bI+\bM}\right)^{-1}\ba }  
\leq 
1+ \sum_{k=1}^\infty{\left(  \epsilon \cdot tr(\bM^{-1}) \right)}^k.
\end{equation*}
\end{claim}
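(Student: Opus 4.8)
The plan is to combine the exact series identity of \Cref{lemma:ratio_IF_vs_my_known_dist} with the trace bound of \Cref{corl:bound_by_trace}, comparing term by term. First I would invoke \Cref{lemma:ratio_IF_vs_my_known_dist} to rewrite the left-hand side exactly as $1 + \sum_{k=1}^\infty \delta(\epsilon)^k$, where $\delta(\epsilon) \defined \epsilon \cdot \ba^T\bM^{-1}(\bM + \epsilon\bI)^{-1}\ba / (\ba^T\bM^{-1}\ba)$. By \Cref{crol:aSa_bigger_then_delta_of_epsilon} together with the positive-definiteness of $\bM$, the quantity $\delta(\epsilon)$ is a non-negative real number (and in fact $\delta(\epsilon) < 1$), so the map $x \mapsto x^k$ is monotone on the relevant range.

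Next I would bound $\delta(\epsilon)$ from above: \Cref{corl:bound_by_trace} states precisely that $\ba^T\bM^{-1}(\bM + \epsilon\bI)^{-1}\ba / (\ba^T\bM^{-1}\ba) \leq tr(\bM^{-1})$, hence $0 \leq \delta(\epsilon) \leq \epsilon \cdot tr(\bM^{-1})$. Raising both (non-negative) sides to the $k$-th power and summing over $k \geq 1$ gives $\sum_{k=1}^\infty \delta(\epsilon)^k \leq \sum_{k=1}^\infty \left(\epsilon \cdot tr(\bM^{-1})\right)^k$; adding $1$ to both sides yields the claimed inequality.

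The only point requiring a word of care — and it is minor — is convergence of the right-hand series: the left-hand series always converges since $\delta(\epsilon) < 1$, but $\epsilon \cdot tr(\bM^{-1})$ may exceed $1$, in which case the asserted upper bound is $+\infty$ and the inequality holds trivially; when $\epsilon \cdot tr(\bM^{-1}) < 1$ both sides are finite and the comparison is the ordinary term-by-term one. There is no genuinely hard step here: the substantive work was already absorbed into \Cref{lemma:ratio_IF_vs_my_known_dist} and \Cref{corl:bound_by_trace}, and this claim is simply their combination.
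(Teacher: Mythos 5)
Your proposal is correct and follows essentially the same route as the paper: apply \Cref{lemma:ratio_IF_vs_my_known_dist} to get the exact geometric series, then bound the ratio by $tr(\bM^{-1})$ — your citation of \Cref{corl:bound_by_trace} simply packages the two intermediate steps (passing through $\ba^T\bM^{-2}\ba/\ba^T\bM^{-1}\ba$ via \Cref{crol:minum1_of_aSa_is_LB} and \Cref{clm:UB_LB_ratio_aS2a_to_aS1a}) that the paper's proof writes out separately. Your explicit remark about the right-hand series possibly diverging (in which case the bound is trivially $+\infty$) is a small point of care the paper omits, and it is handled correctly.
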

\begin{proof}[Proof of \Cref{clm:LB_ratio_IF_vs_my_known_dist}]
\begin{align}
\frac{\ba^T\bM^{-1}\ba}{\ba^T\left({ \epsilon\bI+\bM}\right)^{-1}\ba } 
&\stackrel{(a)}{=} 1+ \sum_{k=1}^\infty{ \left( \epsilon \cdot\frac{ \ba^T\bM^{-1}\left(\bM + \epsilon \bI\right)^{-1}\ba}{\ba^T\bM^{-1}\ba} \right)  }^k \\
&\stackrel{(b)}{\leq} 1+ \sum_{k=1}^\infty{ \left(\epsilon \cdot  \frac{\ba^T\bM^{-2}\ba}{\ba^T\bM^{-1}\ba} \right)  }^k \\
& \stackrel{(c)}{\leq} 
1+ \sum_{k=1}^\infty{\left( \epsilon \cdot tr(\bM^{-1})  \right)}^k,
\end{align} 
where $(a)$ follows from \Cref{lemma:ratio_IF_vs_my_known_dist}, (b) follows from \Cref{crol:minum1_of_aSa_is_LB} and (c) is by \Cref{clm:UB_LB_ratio_aS2a_to_aS1a}.
\end{proof}

\begin{claim}\label{clm:find_P_st_C_smaller_then_t_WHP}
Let $d_{min}=d_{M_T}$ denote the minimal singular value of $\bS$, and let $F_{d_{min}}(t)$ denote the CDF of $d_{min}$, then, for any $P$ and $t$
\begin{align}
\Pr\left\{C(\bH,P)\leq t\right\}   \geq  u(t-1)\left[1- F_{d_{min}}\left(  \frac{M_T}{P\left(1-\frac{1}{t}\right)}  \right)\right].
\end{align}
 Hence, $\forall t>1$, a transmission power $P$ such that 
\begin{equation*}\Pr\{C(\bH,P)>t\}\end{equation*} 
is negligible can be found.
\end{claim}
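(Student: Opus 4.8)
The plan is to bound $C(\bH,P)$ from above by a function of the minimal singular value $d_{min}$ alone, and then translate that bound into a probability statement via $F_{d_{min}}$. Recall from \cref{eq:def_of_C} and the discussion around \cref{eq:LB_and_UB_eq_r_tight_for_high_P} that whenever $tr(\bS^{-1})<P$ we have the closed form $C(\bH,P)=\frac{1}{1-tr(\bS^{-1})/P}$, and that this is increasing in $tr(\bS^{-1})$. So $C(\bH,P)\leq t$ is equivalent (for $t>1$, which is why the step function $u(t-1)$ appears — for $t\leq 1$ the event is empty since $C\geq 1$ always) to $tr(\bS^{-1})\leq P(1-1/t)$.

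The next step is to control $tr(\bS^{-1})=\sum_{i=1}^{M_T}1/d_i$ by its largest term. Since all singular values are positive and $d_{min}=d_{M_T}\leq d_i$ for every $i$, we get $tr(\bS^{-1})=\sum_i 1/d_i \leq M_T/d_{min}$. Hence a sufficient condition for $tr(\bS^{-1})\leq P(1-1/t)$ is $M_T/d_{min}\leq P(1-1/t)$, i.e. $d_{min}\geq \frac{M_T}{P(1-1/t)}$. Therefore, for $t>1$,
\begin{align*}
\Pr\{C(\bH,P)\leq t\} \geq \Pr\left\{d_{min}\geq \frac{M_T}{P(1-1/t)}\right\} = 1-F_{d_{min}}\left(\frac{M_T}{P(1-1/t)}\right),
\end{align*}
and multiplying by $u(t-1)$ handles the trivial case $t\leq 1$, giving the claimed inequality.

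For the final sentence, I would argue that since $\bS=\bH^T\bH$ is almost surely positive definite, $d_{min}>0$ almost surely, so $F_{d_{min}}(x)\to 0$ as $x\to 0^+$. Fix $t>1$; then $\frac{M_T}{P(1-1/t)}\to 0$ as $P\to\infty$, so for any target $\eta>0$ one can choose $P$ large enough that $F_{d_{min}}\!\left(\frac{M_T}{P(1-1/t)}\right)<\eta$, which by the displayed inequality forces $\Pr\{C(\bH,P)>t\}<\eta$. (If a concrete rate is wanted, one can invoke the known smallest-eigenvalue density of a Wishart matrix, which behaves like a constant times $x^{(M_R-M_T+1)/2-1}$ near $0$, giving an explicit polynomial-in-$1/P$ bound.) The only mild subtlety — not really an obstacle — is being careful that the closed-form expression for $C$ is only valid on the convergence event $\{tr(\bS^{-1})<P\}$; but the sufficient condition $d_{min}\geq M_T/(P(1-1/t))$ with $t>1$ already implies $tr(\bS^{-1})\leq M_T/d_{min}\leq P(1-1/t)<P$, so we are automatically inside the region where the geometric series converges and the bound is legitimate.
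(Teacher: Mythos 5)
Your proof is correct and follows essentially the same route as the paper's: both rest on the bound $tr(\bS^{-1})=\sum_i 1/d_i \leq M_T/d_{min}$ and on summing the geometric series to reduce the event $\{C(\bH,P)\leq t\}$ to $\{d_{min}\geq \frac{M_T}{P(1-1/t)}\}$. Your event-inclusion phrasing is in fact a cleaner packaging of the paper's conditional-probability decomposition, and you correctly note the subtlety that the sufficient condition automatically places you in the convergence region of the series.
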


\begin{figure}
		\includegraphics[width=0.60\textwidth]{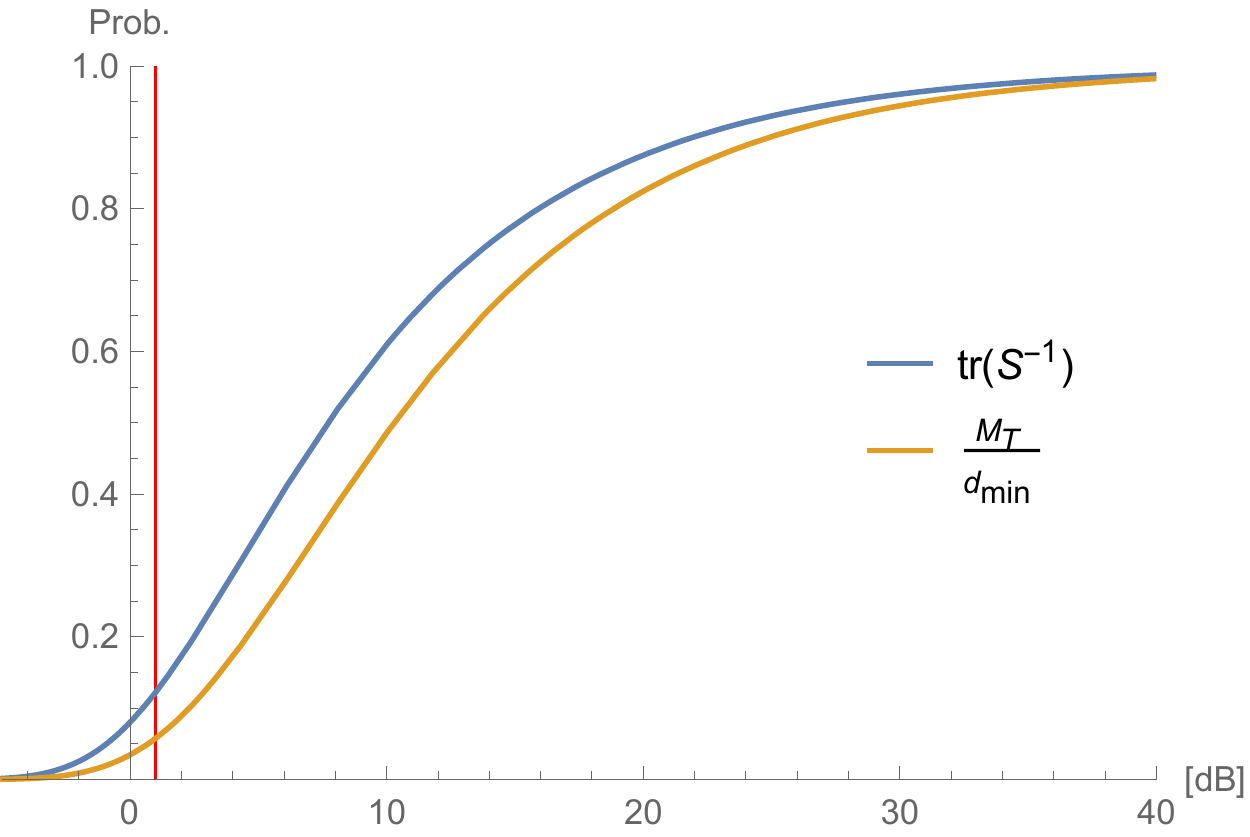}
	\caption{The distributions from simulations of $tr(\bS^{-1})$ and $\frac{M_T}{d_{M_T}}$ for $M_R=M_T=2$ and $P=1$ . The CDF behavior gets along with \cref{eq:bound_Tr_invS}. The vertical red line is the $P=1$.}
	\label{fig:nr_2_nt_2_combined_cdf}
\end{figure}

 \begin{proof}[Proof of \Cref{clm:find_P_st_C_smaller_then_t_WHP}]
Note that 
\begin{align}\label{eq:bound_Tr_invS}
tr(\bS^{-1}) = \sum_{i=1}^{M_T}\frac{1}{d_i} \leq \frac{M_T}{d_{M_T}}.
\end{align} 
Thus,
\begin{align*}
\Pr\left\{C(\bH,P)\leq t\right\} 
&\stackrel{(a)}{=} \Pr\left\{\sum_{k=0}^\infty \left(\frac{tr(\bS^{-1})}{P}\right)^k\leq t\right\} \\
&  \stackrel{(b)}{\geq}\Pr\left\{\sum_{k=0}^\infty \left(\frac{M_T}{P d_{min}}\right)^k\leq t\right\} \\
&= \p[\frac{M_T}{P d_{min}}<1]\p[\sum_{k=0}^\infty \left(\frac{M_T}{P d_{min}}\right)^k\leq t  \big{\vert}\frac{M_T}{P d_{min}}<1] \\
& +  \Pr\left\{ \frac{M_T}{P d_{min}}\geq 1 \right\}\underbrace{\Pr\left\{\sum_{k=0}^\infty \left(\frac{M_T}{P d_{min}}\right)^k\leq t  \big{\vert}\frac{M_T}{P d_{min}}\geq 1\right\}}_{0} \\
&\stackrel{(c)}{=}   \Pr\left\{ \frac{M_T}{P d_{min}}<1 \right\}\Pr\left\{\frac{1}{1-\frac{M_T}{P d_{min}}}\leq t  \big{\vert}\frac{M_T}{P d_{min}}<1\right\}\\
& = \Pr\left\{ d_{min}>\frac{M_T}{P} \right\} \Pr\left\{d_{min} \geq \frac{M_T}{P\left(1-\frac{1}{t}\right)}\big{\vert} d_{min}>\frac{M_T}{P} \right\}\\
&= \Pr\left\{d_{min} \geq \frac{M_T}{P\left(1-\frac{1}{t}\right)},d_{min}>\frac{M_T}{P} \right\}\\
&= \Pr\left\{d_{min} \geq \max\left(\frac{M_T}{P\left(1-\frac{1}{t}\right)},\frac{M_T}{P}\right) \right\}\\
&\stackrel{(d)}{\geq} u(t-1) \Pr\left\{d_{min} \geq \max\left(\frac{M_T}{P\left(1-\frac{1}{t}\right)},\frac{M_T}{P}\right) \right\}\\
&\stackrel{(e)}{=} u(t-1)  \Pr\left\{d_{min} \geq \frac{M_T}{P\left(1-\frac{1}{t}\right)} \right\}\\
&= u(t-1)\left[1- F_{d_{min}}\left(  \frac{M_T}{P\left(1-\frac{1}{t}\right)}  \right)\right],
\end{align*}
where (a) follows from \cref{eq:def_of_C}; (b) is due to \cref{eq:bound_Tr_invS}; (c) is by the sum of geometric series ; (d) is since the probability is a non negative function and (e) is because when $t\geq 1$, the left expression in the maximum function is no less then the right one.
\end{proof}	
\Cref{clm:find_P_st_C_smaller_then_t_WHP} can be exploited by assuming that at most two out of: $\{P,M_T,M_R\}$ are constant, and the non constant ones can be chosen. Since the distribution of the minimal singular value in \cite[Lemma 4.1]{edelman1988eigenvalues} is dependent in both $M_T$ and $M_R$ while \Cref{clm:find_P_st_C_smaller_then_t_WHP} is dependent also in $P$, the non constant parameters can always be chosen to fit any desired negligible probability.

\begin{claim}\label{claim:dist_of_matrix_form_LB}
Let $\mathcal{A}$ be a set of vectors. Then, Given $\mathcal{A}$, the lower bound has a Wishart distribution with the following parameters,
\begin{equation}
LB(\bH,\mathcal{A}) \sim \mathcal{W}(P({\mathcal{A}^T\mathcal{A}})^{-1},rank(\mathcal{A}),M_R-M_T+rank(\mathcal{A})).
\label{eq:dist_LB_eff_SNR_matrix_form}
\end{equation}
\end{claim}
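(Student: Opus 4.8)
The plan is to read $LB(\bH,\mathcal{A})$ as the matrix-valued version of \Cref{def:LB_def}, namely $LB(\bH,\mathcal{A})=P\bigl(\mathcal{A}^{T}\bS^{-1}\mathcal{A}\bigr)^{-1}$ with $\bS=\bH^{T}\bH$, where $\mathcal{A}$ is understood to be an $M_T\times r$ matrix of full column rank $r=rank(\mathcal{A})$ (full rank is needed for $\mathcal{A}^{T}\mathcal{A}$ and $\mathcal{A}^{T}\bS^{-1}\mathcal{A}$ to be invertible, which holds almost surely), and then to derive the claim from the classical distribution of a Schur complement of a white Wishart matrix --- the same circle of facts behind \Cref{corl:LB_distribution}, which is recovered here as the case $r=1$.

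First I would record that, since the entries of $\bH\in\R^{M_R\times M_T}$ are i.i.d.\ standard normal, $\bS=\bH^{T}\bH\sim\mathcal{W}(\bI_{M_T},M_T,M_R)$. Next I would bring $\mathcal{A}$ into canonical form by a QR-type step: pick an $M_T\times M_T$ orthogonal matrix $\bQ$ such that the top $r$ rows of $\bQ\mathcal{A}$ form an invertible (upper-triangular) $r\times r$ matrix $\bR$ and the remaining $M_T-r$ rows vanish. Since $\bQ$ is orthogonal, $\tilde{\bS}\defined\bQ\bS\bQ^{T}$ is again $\mathcal{W}(\bI_{M_T},M_T,M_R)$ and $\bQ\bS^{-1}\bQ^{T}=\tilde{\bS}^{-1}$, so, by the block structure of $\bQ\mathcal{A}$,
\[
\mathcal{A}^{T}\bS^{-1}\mathcal{A}=(\bQ\mathcal{A})^{T}\tilde{\bS}^{-1}(\bQ\mathcal{A})=\bR^{T}\,(\tilde{\bS}^{-1})_{11}\,\bR ,
\]
where $(\cdot)_{11}$ is the top-left $r\times r$ block. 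By the partitioned-inverse identity, $\bigl((\tilde{\bS}^{-1})_{11}\bigr)^{-1}$ equals the Schur complement $\tilde{\bS}_{11\cdot 2}\defined\tilde{\bS}_{11}-\tilde{\bS}_{12}\tilde{\bS}_{22}^{-1}\tilde{\bS}_{21}$, hence $\bigl(\mathcal{A}^{T}\bS^{-1}\mathcal{A}\bigr)^{-1}=\bR^{-1}\tilde{\bS}_{11\cdot 2}\bR^{-T}$.

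Then I would invoke the standard partitioned-Wishart result (cf.\ \cite{eaton}): for $\tilde{\bS}\sim\mathcal{W}(\bI_{M_T},M_T,M_R)$ the Schur complement satisfies $\tilde{\bS}_{11\cdot 2}\sim\mathcal{W}\bigl(\bI_{r},r,M_R-(M_T-r)\bigr)$. Conjugation by the fixed invertible matrix $\bR^{-1}$ transforms the scale matrix accordingly --- if $\bW\sim\mathcal{W}(\bSig,r,m)$ then $\bR^{-1}\bW\bR^{-T}\sim\mathcal{W}(\bR^{-1}\bSig\bR^{-T},r,m)$ --- so $\bigl(\mathcal{A}^{T}\bS^{-1}\mathcal{A}\bigr)^{-1}\sim\mathcal{W}\bigl((\bR^{T}\bR)^{-1},r,M_R-M_T+r\bigr)$, and multiplying by the scalar $P$ scales the scale matrix by $P$. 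Finally $\bR^{T}\bR=(\bQ\mathcal{A})^{T}(\bQ\mathcal{A})=\mathcal{A}^{T}\mathcal{A}$, which yields $LB(\bH,\mathcal{A})\sim\mathcal{W}\bigl(P(\mathcal{A}^{T}\mathcal{A})^{-1},rank(\mathcal{A}),M_R-M_T+rank(\mathcal{A})\bigr)$, i.e.\ \eqref{eq:dist_LB_eff_SNR_matrix_form}; the sanity check $r=1$ returns $\frac{P}{\|\ba\|^{2}}\chi^{2}_{M_R-M_T+1}$, which is \Cref{corl:LB_distribution}.

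The main obstacle is the one non-elementary ingredient: justifying, or citing with the exact parameters, the distribution of the Schur complement $\tilde{\bS}_{11\cdot 2}$ of a white Wishart --- that its degrees of freedom drop by $M_T-r$ while its scale block stays $\bI_{r}$ (its independence from the rest is available but not needed here). Everything else is routine: orthogonal and triangular changes of basis together with the covariance-scaling rule for Wishart matrices. A secondary point worth making explicit in the final write-up is the standing full-column-rank hypothesis on $\mathcal{A}$ --- equivalently, that $LB(\bH,\mathcal{A})$ is defined through a maximal linearly independent subset of its columns --- since otherwise the inverses appearing in \eqref{eq:dist_LB_eff_SNR_matrix_form} are undefined.
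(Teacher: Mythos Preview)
Your proposal is correct. The difference from the paper is one of packaging rather than substance: the paper's proof is a one-line invocation of \cite[Proposition~8.9]{eaton}, which directly states that if $\bS\sim\mathcal{W}(\bSig,M_T,M_R)$ then $(\mathcal{A}^T\bS^{-1}\mathcal{A})^{-1}\sim\mathcal{W}\bigl((\mathcal{A}^T\bSig^{-1}\mathcal{A})^{-1},rank(\mathcal{A}),M_R-M_T+rank(\mathcal{A})\bigr)$, followed by specialising $\bSig=\bI$ and scaling by $P$. You instead rederive that proposition from more primitive ingredients: an orthogonal reduction of $\mathcal{A}$ to $[\bR;0]$, the block-inverse/Schur-complement identity, the Schur-complement-of-Wishart law, and the covariance-transformation rule. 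What the paper's route buys is brevity; what yours buys is transparency --- it makes explicit where the degrees-of-freedom drop $M_R-(M_T-r)$ comes from and why the full-column-rank hypothesis on $\mathcal{A}$ is needed, a point the paper leaves implicit. Either is acceptable; if you want to match the paper's length you can simply cite the ready-made proposition, and if you want to be self-contained your argument is the standard way to prove it.
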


\begin{proof}[Proof of \Cref{claim:dist_of_matrix_form_LB}]
\begin{align}
&LB(\bH,\mathcal{A}) 
= P \cdot{\left(\mathcal{A}^T{(\bH\bH^T)}^{-1}\mathcal{A}\right)}^{-1}\\
&= P \cdot({\mathcal{A}^T \bS^{-1} \mathcal{A}})^{-1}\\ 
&\stackrel{(c)}{\sim} P \cdot\mathcal{W}(({\mathcal{A}^T\bSig^{-1}\mathcal{A}})^{-1},rank(\mathcal{A}),M_R-M_T+rank(\mathcal{A}))\\
&\sim P \cdot\mathcal{W}(({\mathcal{A}^T\bI\mathcal{A}})^{-1},rank(\mathcal{A}),M_R-M_T+rank(\mathcal{A}))\\
&\sim\mathcal{W}(P({\mathcal{A}^T\mathcal{A}})^{-1},rank(\mathcal{A}),M_R-M_T+rank(\mathcal{A})),
\end{align} 
where (c) is according to \cite[proposition 8.9]{eaton}.
\end{proof}
\subsection{Practical IF-Based Schemes}\label{sbsec:IF_block_schemes}

Let us denoted the zeros matrix with $a$ rows and $b$ columns by $O^{a \times b}$. And let the matrix $\mathcal{A}_i^{M_T}$ be 
\begin{align}
\mathcal{A}_i^{M_T} \defined 
\begin{bmatrix}
O^{k_1 \times 4};
\mathcal{A}^{(2)};
O^{k_2 \times 4} 
\end{bmatrix} \in \Z^{M_T \times 4}, 
\end{align}
where $k_1=2(i-1)$ and $k_2=M_T-2i$.

For example, by setting $M_T = 4$, we have that
\begin{align*}
\mathcal{A}_{1}^4 = 
\begin{bmatrix}
1 & 0 & 1 & 1\\
0 & 1 & 1 & -1\\
0 & 0 & 0 & 0\\
0 & 0 & 0 & 0
\end{bmatrix} 
\text{ , } 
\mathcal{A}_{2}^4 = 
\begin{bmatrix}
0 & 0 & 0 & 0\\
0 & 0 & 0 & 0\\
1 & 0 & 1 & 1\\
0 & 1 & 1 & -1
\end{bmatrix}.
\end{align*}
\begin{algorithm}
\caption{Find $\bA_{BB-IF}$ for Even $M_T$}
\label{alg:opt_a_BB_IF}
\SetKwFunction{Return}{return}\SetKwInOut{Input}{input}\SetKwInOut{Output}{output}
\LinesNumbered 

\Input{$\bH$ s.t $M_T \bmod 2 =0$}
\Output{$\bA_{BB-IF}$, a matrix for NB-IF}

$\bA_{NB-IF} \leftarrow  \emptyset$ \;
$i \leftarrow 1$\;
\While {$2i\leq M_T$}{
$\ba_i \leftarrow $  $\underset{\ba\in \mathcal{A}_{i}^{M_T}}{\operatorname{argmax}} \text{ } SNR_{eff}(\bH,\ba)$ // find best\;
$\ba_{i+1} \leftarrow $  $\underset{\ba\in \mathcal{A}_{i}^{M_T} \setminus \ba_i}{\operatorname{argmax}} \text{ } SNR_{eff}(\bH,\ba)$ // find $2^{nd}$ best\;
$\bA_{NB-IF} \leftarrow \left[\bA_{NB-IF}, \ba_i,\ba_{i+1}\right]$\; 
$i \leftarrow i+2$\;
}
\Return $\bA_{NB-IF}$\;

\end{algorithm}\DecMargin{1em}
\begin{remark}\label{rem:extension_to_3_size_block} 
\Cref{alg:opt_a_BB_IF} can be extended to the odd case by noting that every odd number $k \geq 3$ can be written as $k=2i+1 = 3 + 2(i-1)$, where $i \in \N$. Hence, any block matrix is built from $i-1$ blocks of $2 \times 2$ and one block of $3 \times 3$. The vectors for the last block are the three best linearly independent vector from \cref{eq:lastBlock3x3vectors}, with one or two non zero elements. Each non zero element is either $1$ or $-1$. I.e.,
\begin{equation}\label{eq:lastBlock3x3vectors}
\left\{\ba : \normSqr[\ba]\in \{1,2\}, \sum_{i=M_T-2}^{M_T}\normSqr[\ba^T \be_{i}]>0   \right\}.
\end{equation} 
\end{remark}

\begin{remark}
By choosing $\bA=\bI$ and allowing different rates, the MMSE is achieved, \cite[Section 3B]{zhan2014integer}. Note that the lower bound on NB-IF is lower bounded by the MMSE, since the identity matrix is one of the options being checked.
\end{remark}

\subsection{Proof Theorem \ref{thm:CDF_of_LB_is_bounded_by_time_sharing}}\label{subsection:proofThm3}
In order to prove \Cref{thm:CDF_of_LB_is_bounded_by_time_sharing} the performance of NB-IF is lower bounded. It is done by upper bounding the CDF of NB-IF using a new scheme, Distributed Selection of Vectors (DSV). Then, we show that DSV is lower bounded by the MMSE and an upgraded version of the ZF which lower bounds DSV as well.

\subsubsection{Distributed Selection of Vectors (DSV)}\label{subsc:DSV}
This scheme is based on distributed selection of vectors (DSV). I.e, the matrix $\bA$ is built by choosing the best $\ba_i$'s vector of coefficients out of the $i^{th}$ set; a set containing two vectors, where $i \in \{1,\ldots,M_T\}$. The vectors' sets are constructed such that no matter which two are chosen for $\bA$. $\bA$'s rank is always full, i.e., 2.

In NB-IF the decoder is allowed to use any block out of
$\begin{bmatrix} 1 & 0 \\ 0 & 1 \end{bmatrix} 
\begin{bmatrix} 1 & 1 \\ 0 & 1 \end{bmatrix}
\begin{bmatrix} 1 & 1 \\ 0 & -1 \end{bmatrix}
\begin{bmatrix} 0 & 1 \\ 1 & 1 \end{bmatrix}
\begin{bmatrix} 0 & 1 \\ 1 & -1 \end{bmatrix}
\begin{bmatrix} 1 & 1 \\ 1 & -1 \end{bmatrix}$ 
Note that row permutations  cause to  a permutation of the same rates vector and multiply a vector of coefficients by $-1$ does not change the rate at all, \cref{eq:IF_eq_snr_eff}, we can add more matrices to the NB-IF without changing its rate. When more than one matrix achieve the same rate, we randomly select one of them. Thus, we denoted  all the options for a single block in NB-IF by:
\begin{equation}
\mathcal{A}_{NB-IF} \defined \left\{\bA_i^2,\bA_i^{2*}\right\}_{i=1}^6,
\label{eq:def_of_As}
\end{equation} 
where
\begin{align*}
&\bA_1^2=\begin{bmatrix}
   1 & 0 \\
   0 & 1
\end{bmatrix}
\bA_2^2=\begin{bmatrix}
   1 & 1 \\
   0 & 1
\end{bmatrix}
\bA_3^2 =\begin{bmatrix}
   1 & 1 \\
   0 & -1
\end{bmatrix}\\
&\bA_4^2 =\begin{bmatrix}
   0 & 1 \\
   1 & 1
\end{bmatrix}
\bA_5^2 =\begin{bmatrix}
   0 & 1 \\
   1 & -1
\end{bmatrix}
\bA_6^2 =\begin{bmatrix}
   1 & 1 \\
   1 & -1
\end{bmatrix}                        \\
&\bA_1^{2*}=\begin{bmatrix}
   0 & 1 \\
   1 & 0
\end{bmatrix}
\bA_2^{2*}=\begin{bmatrix}
  1 & 1 \\
	1 & 0
	\end{bmatrix}
\bA_3^{2*}=\begin{bmatrix}
  1 & 1\\
 -1 & 0
\end{bmatrix}\\
&\bA_4^{2*} =\begin{bmatrix}
   1 & 0 \\
   1 & 1
\end{bmatrix}
\bA_5^{2*} =\begin{bmatrix}
   1 & 0 \\
   -1 & 1
\end{bmatrix}
\bA_6^{2*}=\begin{bmatrix}
  1 &1\\
 -1 & 1.
\end{bmatrix}
\end{align*}
Note that $\forall i$: $\bA_i^2$ and $\bA_i^{2*}$  are permutations of each other, because $\bA_i^{2*} = \bA_i^2 \cdot \begin{bmatrix}  0 & 1\\ 1 & 0\end{bmatrix}$, therefore, achieve a permutation of the same rates vector.

Let
\begin{align}
\ba_{i,1}^{M_T} &= \be_i^{M_T} \in \Z^{M_T} \label{eq:def_a_i_1}\\
\ba_{i,2}^{M_T} &= (-1)^{i+1}\be_i^{M_T} + \be_{i+(-1)^{i+1}}^{M_T} \in \Z^{M_T} \label{eq:def_a_i_2}\\
\mathcal{A}_i^{M_T}&= \{\ba_{i,1}^{M_T},\ba_{i,2}^{M_T} \} \in \Z^{M_T \times 2} \label{eq:def_A_i}, 
\end{align}
and let 
\begin{align}\label{eq:mathcal_A_def}
\mathcal{A}_{DSV}^{M_T} &\defined \mathcal{A}_1^{M_T} \times \mathcal{A}_2^{M_T} \cdots \times \mathcal{A}_n^{M_T}  \cdots \times \mathcal{A}_{M_T}^{M_T}\\
\bA^{M_T} &\in  \mathcal{A}_{DSV}^{M_T}.
\end{align}
Note that 
\begin{align}\label{eq:mathcalAA}
&\mathcal{A}_i^{M_T}  =\begin{bmatrix}
   0  &  0\\
	 \vdots & \vdots\\
   \frac{1}{2}\left(1+(-1)^{i+1}\right) & 1 \\
   \frac{1}{2}\left(1+(-1)^{i}\right) & (-1)^{i+1}\\
	 \vdots & \vdots\\
	0  &  0
\end{bmatrix}\\ 
& \Longrightarrow
 \left(\mathcal{A}_i^{M_T}\right)^T\mathcal{A}_i^{M_T}  =\begin{bmatrix}
  1 					& (-1)^{i+1} \\
  (-1)^{i+1} 	& 2
\end{bmatrix},
\end{align}
hence,
any $\bA^{M_T}$ chosen from $\mathcal{A}_{DSV}^{M_T}$, is a full rank matrix. Equivalently, we can see that the first and second column in each block can be chosen independently from $\mathcal{A}_1$ and $\mathcal{A}_2$ respectively, where
\begin{align*}
\mathcal{A}_1 &= \left\{\begin{bmatrix}1 \\ 0\end{bmatrix}, \begin{bmatrix}1 \\ 1\end{bmatrix}\right\} \text{ , }
\mathcal{A}_2 = \left\{\begin{bmatrix}0 \\ 1\end{bmatrix}, \begin{bmatrix}1 \\ -1\end{bmatrix}\right\}.
\end{align*}
Hence, the entire block can be chosen from
\begin{align*}
\mathcal{A}_1^2 \times \mathcal{A}_2^2 &=
\left\{
\begin{bmatrix}
   1 & 0 \\
   0 & 1
\end{bmatrix},
\begin{bmatrix}
   1 & 1 \\
   0 & -1
\end{bmatrix},
\begin{bmatrix}
   1 & 0 \\
   1 & 1
\end{bmatrix},
\begin{bmatrix}
   1 & 1 \\
   1 & -1
\end{bmatrix}      
\right\}\\
& = \left\{\bA_1^2,\bA_3^2,\bA_4^{*6},\bA_6^2   \right\}.
\end{align*}  
Since columns permutation only changes the order of the elements in the effective SNR vectors, we can say that this scheme is equivalent to choosing the best out of $\mathcal{A}_{DSV}$ for each block, where
\begin{align}
\mathcal{A}_{DSV} &\defined diag\left(\{\bA_1^2,\bA_1^{*2},\bA_3^2,\bA_3^{*2},\bA_4^2,\bA_4^{2*},\bA_6^2,\bA_6^{*2}   \}\right).  
\end{align}
For example, 
\begin{align*}
diag\left(\left\{\begin{bmatrix}
   1 & 2 \\
   3 & 4
\end{bmatrix},
\begin{bmatrix}
   5 & 6 \\
   7 & 8
\end{bmatrix},9\right\}\right) = 
\begin{bmatrix}
   1 & 2 &0 &0 &0 \\
   3 & 4 &0 &0 &0\\
	 0 & 0 &5 &6 &0\\
	 0 & 0 &7 &8 &0\\
	 0 & 0 &0 &0 &9
 \end{bmatrix}.
\end{align*}
Note that according to \cref{eq:def_of_As}, $\mathcal{A}_{DSV}\subset\mathcal{A}_{NB-IF}$. Hence, every achievable rate by DSV, is also achievable by NB-IF. For this reason, DSV lower bounds NB-IF.

\subsubsection{Proof Steps}\label{ssc:proof_steps}
The proof of \Cref{thm:CDF_of_LB_is_bounded_by_time_sharing} in based on DSV. To that end, we go through the following steps:
\begin{enumerate}
	\item  Show that DSV is least at good as the MMSE.
	\item  Showing that in DSV, the rate achieved by each block is distributed the same. Hence, we can focus our attention to a single block.
	\item  Lower bounding the effective SNR achieved by a selected vector of coefficients, and show that even the lower bound on DSV preforms at least the good as the maximum between the MMSE and an upgraded version of ZF.
	\item  Recalling that DSV is upper bounded by NB-IF, hence, the value of its CDF is never smaller than the CDF of NB-IF.
\end{enumerate}

\subsection*{Step 1:}
\begin{claim}\label{clm:MMSE_is_LB_to_DSV}
The MMSE is a lower bound for DSV.
\end{claim}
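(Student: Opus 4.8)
The plan is to establish a row-by-row (stream-by-stream) domination of DSV over MMSE for every realization of $\bH$, from which the ordering of rates and of the effective-SNR CDFs follows with no further work.

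First I recall what the two schemes are. By \cite[Section 3B]{zhan2014integer} (see also the remark preceding this claim), MMSE is exactly IF applied to $\bA=\bI$ with per-stream rates, so stream $i$ is decoded via the unit vector $\be_i^{M_T}$, at effective SNR $SNR_{eff}(\bH,\be_i^{M_T})$ and rate $R(\bH,\be_i^{M_T})=\frac{1}{2}\log_2 SNR_{eff}(\bH,\be_i^{M_T})$. On the DSV side, by the definitions \eqref{eq:def_a_i_1}--\eqref{eq:def_A_i} the $i$-th coefficient set is $\mathcal{A}_i^{M_T}=\{\ba_{i,1}^{M_T},\ba_{i,2}^{M_T}\}$ with $\ba_{i,1}^{M_T}=\be_i^{M_T}$, DSV picks $\ba_i=\argmax_{\ba\in\mathcal{A}_i^{M_T}}SNR_{eff}(\bH,\ba)$, and by \eqref{eq:mathcalAA} any such selection assembles into a full-rank block-diagonal matrix, so the scheme is always feasible; using the nested-lattice/successive-cancellation argument of \cite{ordentlich2013successive,nazer2011compute}, each chosen combination is decoded at its own rate $R(\bH,\ba_i)=\frac{1}{2}\log_2 SNR_{eff}(\bH,\ba_i)$.

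The comparison is then immediate: since $\be_i^{M_T}\in\mathcal{A}_i^{M_T}$ and $\ba_i$ maximizes the effective SNR over that set, $SNR_{eff}(\bH,\ba_i)\ge SNR_{eff}(\bH,\be_i^{M_T})$ for every $i$ and every $\bH$, hence $R(\bH,\ba_i)\ge R(\bH,\be_i^{M_T})$. Summing over $i$ gives $R_{DSV}(\bH)\ge R_{MMSE}(\bH)$ pointwise in $\bH$, hence also for the ergodic rate; and for the effective-SNR CDF, fixing a representative stream $i=1$, the event $\{SNR_{eff}(\bH,\ba_1)\le t\}$ is contained in $\{SNR_{eff}(\bH,\be_1^{M_T})\le t\}$, so $F_{DSV}(t)\le F_{MMSE}(t)$ for all $t$, i.e.\ the DSV effective SNR stochastically dominates that of MMSE. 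By the column-permutation invariance of the distribution of the i.i.d.\ Gaussian $\bH$ the same holds for every stream, which is also what Step~2 will exploit.

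I do not expect a genuine obstacle here: the content is entirely bookkeeping — ensuring that the DSV and MMSE effective SNRs (or rates) are compared stream-for-stream, that DSV is always a valid full-rank scheme so the per-stream rates are simultaneously achievable (guaranteed by \eqref{eq:mathcalAA}), and that DSV is granted the per-stream rate flexibility that lets it match, and elsewhere strictly beat, equal-rate IF. None of this requires new computation.
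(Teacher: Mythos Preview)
Your proposal is correct and follows the same idea as the paper: since $\ba_{i,1}^{M_T}=\be_i^{M_T}\in\mathcal{A}_i^{M_T}$ for every $i$, DSV always has the identity choice available, so its per-stream effective SNR (hence rate, hence CDF) dominates that of MMSE. The paper states this at the matrix level (choosing each $2\times 2$ block as $\bI_{2\times 2}$ yields $\bA=\bI$), while you phrase it stream-by-stream and also spell out the stochastic-dominance consequence for the CDFs; the content is the same.
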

\begin{proof}[Proof of \Cref{clm:MMSE_is_LB_to_DSV} ]
On one hand, we know that the MMSE is achieved by identity matrix, i.e., when $\bA=\bI$, as can be found in \cite{zhan2014integer}. On the other hand, note that if all the blocks are chosen as $\bI_{2 \times 2}$, then $\bA=\bI$. Therefore, the identity matrix is always checked as an option when DSV is being used. Accordingly, any rate achieved by MMSE can be achieved by DSV.
\end{proof}

\subsection*{Step 2:}
In this step we show that the index of the block in $\bA$ does not effect the distribution of the lower bound on the effective SNR. I.e., all the blocks are distributed the same. It is done by showing that the lower bound on even blocks and on on odd blocks is distributed the same. Then, we show that any channel with $(M_T,M_R)$ (transmitting , receiving) antennas can be reduced to a channel with $(2 , M_R-M_T+2)$ antennas, as long as $M_T\geq 2$ without changing the lower bound's behavior. Finally we prove that when $M_T=2$, the lower bound on the effective SNR of the first and second  blocks are identical, concluding that all the blocks can be lower bounded using the same distribution.

\begin{lemma}[Channel reduction]\label{lem:LB_for_each_eq_set_dist_identical}
Let  $2 \leq M_T \leq  M_R  \in \N$, $j \in \{1,\ldots,M_T\}$ and let
\begin{align*}
\begin{split}
\bH &\in \R^{M_T \times M_R}  \text { ; } \bS = \bH\bH^T\\
\tilde{\bH} &\in \R^{2 \times (M_R-M_T+2)} \text { ; } \tilde{\bS} = \tilde{\bH}\tilde{\bH}^T.
\end{split}
\end{align*}
be independent channels such that each element of $\bH,\tilde{\bH} \sim \mathcal{N}(0,1)$ i.i.d. Then, for any 
\begin{align*}
\tilde{\ba}_1 &= [a_{11},a_{12}]^T \in \R^2 \\ 
\tilde{\ba}_2 &=[a_{21},a_{22}]^T \in \R^2\\
\ba_1 &= [0,\ldots 0, a_{11},a_{12}, 0,\ldots 0 ]^T \in \R^{M_T} \\ 
\ba_2 &=[0,\ldots 0, a_{21},a_{22},0,\ldots 0 ]^T \in \R^{M_T}
\end{align*}
\begin{equation*}
\forall i\in\{1,2\}: LB(\bH,\ba_i)   \sim  LB(\tilde{\bH},\tilde{\ba}_i)
\end{equation*}
where $LB(\bH,\ba)$ is by \Cref{def:LB_def}.
\end{lemma}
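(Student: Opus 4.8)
The plan is to unpack the definition of $LB(\bH,\ba)$ and show that the relevant quadratic form $\ba_i^T \bS^{-1} \ba_i$ depends on $\bH$ only through a $2\times 2$ ``marginal'' object whose distribution matches that of the corresponding quantity built from $\tilde\bH$. Concretely, since $\ba_i$ has support only on two consecutive coordinates, say indices $j,j+1$, only a $2\times 2$ block of $\bS^{-1}$ is involved: writing $\bc_i \in \R^2$ for the nonzero part of $\ba_i$, we have $\ba_i^T \bS^{-1}\ba_i = \bc_i^T \bB\, \bc_i$, where $\bB$ is the $2\times 2$ principal submatrix of $\bS^{-1}$ indexed by $\{j,j+1\}$. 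Here $\bS = \bH\bH^T \in \R^{M_T\times M_T}$ is (the transpose convention of the lemma notwithstanding) a Wishart matrix $\mathcal{W}_{M_T}(\bI, M_R)$ since the rows of $\bH$ are i.i.d.\ $\mathcal{N}(0,\bI_{M_R})$ and $M_R \ge M_T$.

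The key tool is the standard fact about inverse Wishart marginals (this is exactly the kind of statement quoted earlier via \cite[proposition 8.9]{eaton}, used in \Cref{corl:LB_distribution} and \Cref{claim:dist_of_matrix_form_LB}): if $\bS \sim \mathcal{W}_{M_T}(\bI, M_R)$ and we partition the index set into a size-$2$ block and its complement, then the $2\times 2$ principal submatrix $(\bS^{-1})_{[2]}$ of $\bS^{-1}$ is distributed as the inverse of a $\mathcal{W}_{2}(\bI, M_R - M_T + 2)$ matrix. In other words, $\left((\bS^{-1})_{[2]}\right)^{-1} \sim \mathcal{W}_2(\bI, M_R-M_T+2)$. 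On the other hand, $\tilde\bS = \tilde\bH\tilde\bH^T$ with $\tilde\bH \in \R^{2\times(M_R-M_T+2)}$ having i.i.d.\ $\mathcal{N}(0,1)$ entries is itself exactly $\mathcal{W}_2(\bI, M_R-M_T+2)$, hence $\tilde\bS^{-1}$ has the same distribution as $(\bS^{-1})_{[2]}$. This identification does not depend on which block $\{j,j+1\}$ we picked — so in particular it holds for every block index, which is what ``all blocks are distributed the same'' amounts to; and it makes no reference to $M_T$ beyond the combination $M_R - M_T + 2$, giving the claimed channel reduction.

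To finish, I would note that $\tilde\ba_i$ is precisely the nonzero part $\bc_i$ of $\ba_i$ (this is how $\ba_1,\ba_2$ are built from $\tilde\ba_1,\tilde\ba_2$ in the statement), so
\[
LB(\bH,\ba_i) = \frac{P}{\ba_i^T\bS^{-1}\ba_i} = \frac{P}{\bc_i^T (\bS^{-1})_{[2]}\, \bc_i}
\quad\text{and}\quad
LB(\tilde\bH,\tilde\ba_i) = \frac{P}{\tilde\ba_i^T \tilde\bS^{-1}\tilde\ba_i} = \frac{P}{\bc_i^T\tilde\bS^{-1}\bc_i}.
\]
Since $(\bS^{-1})_{[2]} \stackrel{d}{=} \tilde\bS^{-1}$ as $2\times 2$ random matrices, applying the same deterministic map $\bM \mapsto P/(\bc_i^T \bM\, \bc_i)$ to both yields equality in distribution, for each $i\in\{1,2\}$ (note the claim is per-$i$, not joint, so no care about the coupling between $\bB$-entries is needed). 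The main obstacle is purely bookkeeping: matching the paper's transpose conventions ($\bS = \bH\bH^T$ here versus $\bS = \bH^T\bH$ elsewhere), and citing the inverse-Wishart marginal fact in the exact form needed — i.e.\ that the principal submatrix of the inverse is an inverse Wishart with shifted degrees of freedom $M_R-M_T+2$. Everything else is substitution.
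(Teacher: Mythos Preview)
Your proposal is correct and matches the paper's approach: the paper proves the lemma via \Cref{claim:one_eq_behavior_depends_only_in_Mr_minus_Mt,claim:channel_reduction,clm:first_and_sec_eq_set_behave_the_same}, of which the second is exactly your $2\times2$ inverse-Wishart marginal argument (citing the same Eaton proposition used in \Cref{claim:dist_of_matrix_form_LB}), and the first is its scalar specialization via \Cref{corl:LB_distribution}. Your single-step presentation in fact delivers the joint $2\times2$ equality in distribution that the paper establishes separately in \Cref{claim:channel_reduction}, but the underlying idea is identical.
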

In order to prove \Cref{lem:LB_for_each_eq_set_dist_identical}, \Cref{claim:one_eq_behavior_depends_only_in_Mr_minus_Mt,claim:channel_reduction,clm:first_and_sec_eq_set_behave_the_same} are needed.
\begin{claim}[Reduction per vector]\label{claim:one_eq_behavior_depends_only_in_Mr_minus_Mt}
\begin{equation*}
\forall i\in\{1,2\}: LB(\bH,\ba_i) \sim LB(\tilde{\bH},\tilde{\ba}_i).
\end{equation*}
\end{claim}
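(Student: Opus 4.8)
The plan is to reduce the claim to the explicit law of $LB$ already obtained in \Cref{corl:LB_distribution}, and then to exploit the fact that this law depends on the channel dimensions only through the difference $M_R-M_T$ and on the coefficient vector only through its Euclidean norm — both of which are preserved when one passes from $(\bH,\ba_i)$ to $(\tilde{\bH},\tilde{\ba}_i)$.

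First I would note that $\bS=\bH\bH^T$, with $\bH\in\R^{M_T\times M_R}$ having i.i.d.\ standard normal entries, is a central Wishart matrix of dimension $M_T$ with $M_R$ degrees of freedom — exactly the object underlying \Cref{corl:LB_distribution} (the earlier convention $\bS=\bH^T\bH$ with $\bH\in\R^{M_R\times M_T}$ produces the same distribution). Since $\ba_i$ is a fixed nonzero vector, \Cref{corl:LB_distribution} gives
\[
LB(\bH,\ba_i)\ \sim\ \Gamma\!\left(\frac{M_R-M_T+1}{2},\ \frac{2P}{\|\ba_i\|^2}\right).
\]
Applying the same corollary to $\tilde{\bH}\in\R^{2\times(M_R-M_T+2)}$, where now the ``receive'' dimension is $M_R-M_T+2$ and the ``transmit'' dimension is $2$, gives
\[
LB(\tilde{\bH},\tilde{\ba}_i)\ \sim\ \Gamma\!\left(\frac{(M_R-M_T+2)-2+1}{2},\ \frac{2P}{\|\tilde{\ba}_i\|^2}\right)\ =\ \Gamma\!\left(\frac{M_R-M_T+1}{2},\ \frac{2P}{\|\tilde{\ba}_i\|^2}\right).
\]
Finally, the definitions of $\ba_i$ and $\tilde{\ba}_i$ differ only by zero-padding, hence $\|\ba_i\|^2=a_{i1}^2+a_{i2}^2=\|\tilde{\ba}_i\|^2$; the two Gamma laws therefore have identical shape and scale, so $LB(\bH,\ba_i)\sim LB(\tilde{\bH},\tilde{\ba}_i)$ for each $i\in\{1,2\}$, which is the claim.

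If one prefers a self-contained derivation rather than citing \Cref{corl:LB_distribution}, the same chain works starting from the rank-one case of \cite[Proposition~8.9]{eaton} (equivalently \Cref{claim:dist_of_matrix_form_LB} with $rank(\mathcal{A})=1$): $\left(\ba^T\bS^{-1}\ba\right)^{-1}\sim\mathcal{W}\!\left(\|\ba\|^{-2},\,1,\,M_R-M_T+1\right)=\|\ba\|^{-2}\chi^2_{M_R-M_T+1}$, so $LB(\bH,\ba)=P\left(\ba^T\bS^{-1}\ba\right)^{-1}\sim\tfrac{P}{\|\ba\|^2}\chi^2_{M_R-M_T+1}$, and the only data entering this law are the degrees of freedom $M_R-M_T+1$ and the scale $P/\|\ba\|^2$, both unchanged by the reduction. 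I do not expect a real obstacle here: the only delicate point is the bookkeeping of which dimension plays the ``transmit'' and which the ``receive'' role after the transpose-convention switch, and the substantive observation is simply that the $\chi^2$ degrees of freedom track $M_R-M_T$ rather than $M_R$ and $M_T$ individually.
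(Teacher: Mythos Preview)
Your proposal is correct and follows essentially the same approach as the paper: both invoke \Cref{corl:LB_distribution} to identify each side as $\Gamma\!\left(\tfrac{M_R-M_T+1}{2},\,2P/\|\cdot\|^2\right)$, then observe that the shape parameter depends only on $M_R-M_T$ and that zero-padding preserves the norm. Your additional remark about the rank-one Wishart route is just the content underlying \Cref{corl:LB_distribution}, so it is not a genuinely different argument.
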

\begin{proof}[Proof of \Cref{claim:one_eq_behavior_depends_only_in_Mr_minus_Mt}]
For convenience let us denote $\tilde{M}_R=M_R-M_T+2$ and $\tilde{M}_T=2$. Then, according to \Cref{corl:LB_distribution},
\small\begin{align*}
&LB(\bH,\ba_i) \sim  \Gamma\left(\frac{M_R-M_T+1}{2},\frac{2P}{\normSqr[\ba_i]}\right)\\
 &\sim  \Gamma\left(\frac{[M_R-(M_T-2)]-[M_T-(M_T-2)]+1}{2},\frac{2P}{\normSqr[\ba_i]}\right)\label{eq:only_idx_matters}\\
&\sim  \Gamma\left(\frac{[M_R-M_T+2]-[2]+1}{2},\frac{2P}{\normSqr[\ba_i]}\right)\\
&\sim  \Gamma\left(\frac{\tilde{M}_R-\tilde{M}_T+1}{2},\frac{2P}{\normSqr[\tilde{\ba}_i]}\right)\\
&\sim LB(\tilde{\bH},\tilde{\ba}_i) 
\end{align*}\normalsize
\end{proof}

\Cref{claim:one_eq_behavior_depends_only_in_Mr_minus_Mt} shows that even if the channel is reduced to a $[2 \times (M_R-M_T+2)]$ channel size, the same lower bound can be used. However, this is not enough in order to claim that different blocks share the same lower bound, because the joint distribution of vectors from different blocks might be different. 
\begin{claim}[Reduction per block]\label{claim:channel_reduction}
Let \begin{align*}
\tilde{\ba}_1 &= [a_{11},a_{12}]^T \in \R^2 \\ 
\tilde{\ba}_2 &=[a_{21},a_{22}]^T \in \R^2\\
\ba_1 &= [0,\ldots 0, a_{11},a_{12}, 0,\ldots 0 ]^T \in \R^{M_T} \\ 
\ba_2 &=[0,\ldots 0, a_{21},a_{22},0,\ldots 0 ]^T \in \R^{M_T}\\
 \mathcal{A}&= [\ba_1,\ba_1] \in \R^{M_T \times 2}\\ 
\tilde{\mathcal{A}}&= [\tilde{\ba}_1,\tilde{\ba}_1] \in \R^{2 \times 2}. 
\end{align*}
Then, $LB(\bH,\mathcal{A}) \sim LB(\tilde{\bH},\tilde{\mathcal{A}})$,
where $LB(\bH,\mathcal{A}) = P{\left(\mathcal{A}^T{(\bH\bH^T)}^{-1}\mathcal{A}\right)}^{-1}$.
\end{claim}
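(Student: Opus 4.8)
The plan is to deduce the claim from \Cref{claim:dist_of_matrix_form_LB}, which already pins down the law of $LB(\bH,\mathcal{A})=P\bigl(\mathcal{A}^T(\bH\bH^T)^{-1}\mathcal{A}\bigr)^{-1}$ as a (scaled) Wishart matrix whose parameters involve $\bH$ only through the dimensions $M_T,M_R$ and through $rank(\mathcal{A})$, and involve $\mathcal{A}$ only through the Gram matrix $\mathcal{A}^T\mathcal{A}$. Thus the whole argument reduces to checking that the two Wishart parameter triples coincide; this is the natural block-level analogue of the per-vector reduction in \Cref{claim:one_eq_behavior_depends_only_in_Mr_minus_Mt}.

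First I would record that $\mathcal{A}$ and $\tilde{\mathcal{A}}$ have full column rank $2$: the construction of this section forces their Gram matrix to be $\begin{bmatrix}1 & (-1)^{i+1}\\ (-1)^{i+1} & 2\end{bmatrix}$ (see \cref{eq:mathcalAA}), whose determinant is $1\neq 0$. Next I would observe that zero-padding does not change inner products — the columns of $\mathcal{A}$ are precisely those of $\tilde{\mathcal{A}}$ with $M_T-2$ zeros inserted in the same coordinates — so $\mathcal{A}^T\mathcal{A}=\tilde{\mathcal{A}}^T\tilde{\mathcal{A}}$, and hence the scale matrices $P(\mathcal{A}^T\mathcal{A})^{-1}$ and $P(\tilde{\mathcal{A}}^T\tilde{\mathcal{A}})^{-1}$ are literally equal.

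It then remains to match the degrees of freedom. Applying \Cref{claim:dist_of_matrix_form_LB} to the $(M_T,M_R)$ channel gives $LB(\bH,\mathcal{A})\sim\mathcal{W}\bigl(P(\mathcal{A}^T\mathcal{A})^{-1},2,M_R-M_T+2\bigr)$, and applying it to the reduced channel with $(\tilde M_T,\tilde M_R)=(2,\,M_R-M_T+2)$ gives $LB(\tilde{\bH},\tilde{\mathcal{A}})\sim\mathcal{W}\bigl(P(\tilde{\mathcal{A}}^T\tilde{\mathcal{A}})^{-1},2,\tilde M_R-\tilde M_T+2\bigr)$, where $\tilde M_R-\tilde M_T+2=(M_R-M_T+2)-2+2=M_R-M_T+2$. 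Both triples are identical, so the two lower bounds have the same law.

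The step I expect to be the most delicate is not the computation but verifying that the hypotheses under which \Cref{claim:dist_of_matrix_form_LB} (and through it \cite[proposition 8.9]{eaton}) applies are in force for the padded matrix: that $\mathcal{A}$ has full column rank (handled above), that the dimension bookkeeping is correct under the transposed convention $\bH\in\R^{M_T\times M_R}$ so that $\bH\bH^T\in\R^{M_T\times M_T}$ is the Wishart matrix and $\mathcal{A}^T(\bH\bH^T)^{-1}\mathcal{A}\in\R^{2\times2}$, and that $M_R-M_T+2\ge 2$ so the resulting Wishart law is nondegenerate (which holds since $M_R\ge M_T$). One could instead bypass \Cref{claim:dist_of_matrix_form_LB} and argue directly that the law of $\mathcal{A}^T(\bH\bH^T)^{-1}\mathcal{A}$ is unchanged when $\bH$ is replaced by its restriction to the two coordinates supporting $\mathcal{A}$ together with $M_R-M_T$ additional white columns, but that amounts to re-deriving Eaton's proposition and is less economical than quoting it.
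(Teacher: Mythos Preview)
Your proposal is correct and follows essentially the same approach as the paper: both apply \Cref{claim:dist_of_matrix_form_LB} to each of $LB(\bH,\mathcal{A})$ and $LB(\tilde{\bH},\tilde{\mathcal{A}})$, then match the Wishart parameters by noting that zero-padding preserves $\mathcal{A}^T\mathcal{A}=\tilde{\mathcal{A}}^T\tilde{\mathcal{A}}$ (via \cref{eq:mathcalAA}) and that the degrees-of-freedom count $M_R-M_T+rank(\mathcal{A})$ is invariant under the shift $(M_T,M_R)\mapsto(2,M_R-M_T+2)$. Your write-up is in fact somewhat more careful than the paper's in spelling out the full-rank hypothesis and the dimension bookkeeping, but the argument is the same.
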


\begin{proof}[Proof of \Cref{claim:channel_reduction}]
For convenience let us denote $\tilde{M}_R=M_R-M_T+2$, $\tilde{M}_T=2$, $rank(\mathcal{A})=R$. Then,
\small\begin{align*}
&LB(\bH,\mathcal{A}) \stackrel{(a)}{\sim} \\
&\mathcal{W}\left(P\left({\mathcal{A}^T\mathcal{A}}\right)^{-1},rank(\mathcal{A}),M_R-M_T+rank(\mathcal{A})\right)\\
&\sim\mathcal{W}\left(P\left({\mathcal{A}^T\mathcal{A}}\right)^{-1},R,M_R-M_T+ R\right)\\
&\sim\mathcal{W}\left(P\left({\mathcal{A}^T\mathcal{A}}\right)^{-1},R,[M_R-M_T+2] - [2] +R\right)\\
&\stackrel{(b)}{\sim}\mathcal{W}\left(P\left({\tilde{\mathcal{A}}^T\tilde{\mathcal{A}}}\right)^{-1},rank(\tilde{\mathcal{A}}),\tilde{M_R} - \tilde{M_T} +rank(\tilde{\mathcal{A}})\right)\\
&\stackrel{(a)}{\sim} LB(\tilde{\bH},\tilde{\mathcal{A}})
,
\end{align*}\normalsize
\end{proof}
where (a) follows from \cref{claim:dist_of_matrix_form_LB} and (b) is by \cref{eq:mathcalAA}.

\Cref{claim:channel_reduction,claim:one_eq_behavior_depends_only_in_Mr_minus_Mt} represent the distributions of lower bounds for values of $M_T$ and $M_R$. Nevertheless, the values $M_T$ and $M_R$ of do not matter. The only two things that do matter are the difference between $M_R$ and $M_T$ and whether of not the index of the vector $i$ is even. Hence, instead of analyzing the distribution of each tuple; $(i, M_R, M_T)$, it is enough to analyze the distribution of $(\tilde{i},\tilde{M}_R,\tilde{M}_T)$ where, $\tilde{i}=\frac{3+(-1)^i}{2}$, $\tilde{M}_R=2$ and $\tilde{M}_T=M_R-M_T$.  I.e., it is enough to know the index of the vector inside the block.

\begin{claim}[Two transmitters behave the same]\label{clm:first_and_sec_eq_set_behave_the_same}
Let $\tilde{M}_T=2$ and $\tilde{M}_R$ be the amount of receiving antennas. Then, the lower bound on the effective SNR achieved by the first and by the second vectors' sets are distributed the same.
\end{claim}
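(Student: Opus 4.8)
The plan is to reduce the claim to the rotational invariance of the Wishart matrix $\bS=\bH\bH^T$. Specializing \eqref{eq:def_a_i_1}--\eqref{eq:def_A_i} to $\tilde M_T=2$, the first vectors' set is $\mathcal{A}_1=\{\be_1,\ \be_1+\be_2\}$ and the second is $\mathcal{A}_2=\{\be_2,\ \be_1-\be_2\}$, with $\be_1,\be_2$ the standard basis of $\R^2$. Since the ``effective SNR achieved by a set'' is a fixed measurable function (the maximum) of the pair $\bigl(LB(\bH,\ba):\ \ba\text{ in that set}\bigr)$, it suffices to prove the stronger identity
\[
\bigl(LB(\bH,\be_1),\ LB(\bH,\be_1+\be_2)\bigr)\ \sim\ \bigl(LB(\bH,\be_2),\ LB(\bH,\be_1-\be_2)\bigr),
\]
which is also exactly the form needed later in \Cref{thm:CDF_of_LB_is_bounded_by_time_sharing}, where $X=LB(\bH,\be_1)$ and $Y=LB(\bH,\be_1-\be_2)$ appear jointly.

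First I would record two elementary facts. (i) For $\tilde M_T=2\le\tilde M_R$, $\bH\in\R^{2\times\tilde M_R}$ has i.i.d.\ $\mathcal{N}(0,1)$ entries, hence $\bQ\bH\sim\bH$ for every orthogonal $\bQ\in\R^{2\times2}$ (left multiplication by $\bQ$ merely rotates the two rows and preserves the i.i.d.\ isotropic-Gaussian law of the columns); consequently $\bQ\bS\bQ^T=(\bQ\bH)(\bQ\bH)^T\sim\bS$, and therefore $\bQ^T\bS^{-1}\bQ\sim\bS^{-1}$. (ii) $LB(\bH,\ba)=P/(\ba^T\bS^{-1}\ba)$ is even in $\ba$, so $LB(\bH,-\ba)=LB(\bH,\ba)$. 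Combining these, for any finite family $\{\bv_j\}$ a one-line computation gives the joint identity $\bigl(LB(\bH,\bQ\bv_j)\bigr)_j\ \sim\ \bigl(LB(\bH,\bv_j)\bigr)_j$, since $(\bQ\bv_j)^T\bS^{-1}(\bQ\bv_j)=\bv_j^T\bQ^T\bS^{-1}\bQ\,\bv_j$.

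The remaining step is to exhibit an orthogonal $\bQ$ carrying $\mathcal{A}_1$ onto $\mathcal{A}_2$ up to sign. A coordinate permutation will not do — it fixes $\be_1+\be_2$ — so the right choice is the $90^\circ$ rotation $\bQ=\bigl[\begin{smallmatrix}0&-1\\1&0\end{smallmatrix}\bigr]$, for which $\bQ\be_1=\be_2$ and $\bQ(\be_1+\be_2)=-(\be_1-\be_2)$ (it preserves Euclidean norms, consistent with $\|\be_1\|=\|\be_2\|$ and $\|\be_1+\be_2\|=\|\be_1-\be_2\|$). Feeding $\{\be_1,\ \be_1+\be_2\}$ into the joint identity and applying fact (ii) to the second coordinate yields exactly the displayed equality, proving the claim. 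An alternative route, closer to the paper's toolkit, invokes \Cref{claim:dist_of_matrix_form_LB}: the Gram matrices $\mathcal{A}_1^T\mathcal{A}_1$ and $\mathcal{A}_2^T\mathcal{A}_2$ from \eqref{eq:mathcalAA} are congruent via $\mathbf{P}=\mathrm{diag}(1,-1)$, so $LB(\bH,\mathcal{A}_2)\sim\mathbf{P}\,LB(\bH,\mathcal{A}_1)\,\mathbf{P}$ as $2\times2$ Wishart matrices, and conjugation by a sign-diagonal leaves every diagonal entry of a matrix and of its inverse unchanged — and those diagonal entries are precisely the per-vector $LB$'s.

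I do not expect a serious obstacle; the computations are routine. The one content-bearing observation is that the symmetry linking the two sets is a rotation rather than a permutation of coordinates (forced by the opposite signs in $\be_1-\be_2$ versus $\be_1+\be_2$), together with the point that it is exactly the isotropy — the identity covariance of $\bH$ — that makes $\bS^{-1}$ invariant under that rotation. If later one wants a fully joint statement across several blocks (to combine with \Cref{lem:LB_for_each_eq_set_dist_identical}), one would additionally track independence of the blocks; for a single block, as stated here, the argument above suffices.
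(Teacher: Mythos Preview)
Your argument is correct and in fact cleaner than the paper's. The paper proves the claim by invoking \Cref{claim:dist_of_matrix_form_LB} and then running a case-by-case check on the Gram matrices $\bA^T\bA$ of the various $2\times2$ blocks $\bA_1^2,\bA_6^2,\bA_2^{2*},\bA_4^{2*},\bA_3^{2*},\bA_5^{2*}$, verifying in each case that swapping the roles of the two columns leaves the Gram matrix (hence the Wishart law) unchanged. Your route bypasses all of this: the single observation that $\bQ=\bigl[\begin{smallmatrix}0&-1\\1&0\end{smallmatrix}\bigr]$ maps $\{\be_1,\be_1+\be_2\}$ to $\{\be_2,-(\be_1-\be_2)\}$, together with the orthogonal invariance $\bQ^T\bS^{-1}\bQ\sim\bS^{-1}$ and the evenness of $LB(\cdot,\ba)$ in $\ba$, gives the full \emph{joint} equality of the two pairs in one line. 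This is strictly stronger than what the paper's case split delivers and is exactly what is consumed downstream in \Cref{thm:CDF_of_LB_is_bounded_by_time_sharing}. Your ``alternative route'' via the congruence $\mathcal{A}_2^T\mathcal{A}_2=\mathbf{P}(\mathcal{A}_1^T\mathcal{A}_1)\mathbf{P}$ with $\mathbf{P}=\mathrm{diag}(1,-1)$ is essentially a compressed version of the paper's argument; the only care needed there is that the per-vector $LB$'s are the diagonal entries of $(LB(\bH,\mathcal{A}))^{-1}$ rather than of $LB(\bH,\mathcal{A})$ itself, but since conjugation by a sign-diagonal preserves the diagonal of a matrix \emph{and} of its inverse, the conclusion stands.
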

\begin{proof}[Proof of \Cref{clm:first_and_sec_eq_set_behave_the_same}]
For any full ranked constant $\bA\in \Z^{2 \times 2}$, by \Cref{claim:dist_of_matrix_form_LB},
\begin{align*}LB(\tilde{\bH},\bA)\sim  \mathcal{W}\left(P{(\bA^T\bA)}^{-1},2,\tilde{M}_R\right). \end{align*}
Thus, the distribution is dependent only in the quadratic form $\bA^T\bA$.
Notice that
\begin{align*}
\left(\bA_1^2\right)^T\bA_1^2 =  \begin{bmatrix}1&0\\0&1 \end{bmatrix} ;
\left(\bA_6^2\right)^T\bA_6^2= \begin{bmatrix}2&0\\0&2 \end{bmatrix}.  
\end{align*}
Hence, when the norm of both vectors is the same, whether it's 1, as in $\bA_1^2 =\begin{bmatrix}   1 & 0 \\   0 & 1 \end{bmatrix}$ or two, as in $\bA_6^2 =\begin{bmatrix}   1 & 1 \\   1 & -1\end{bmatrix}$, the vectors in $\bA$ are orthogonal and the effective SNR distributes the same for both diagonal entries.

Let us assume that the first vector is $[1,1]^T$, can we distinguishes between choosing the second vector as $[0,1]^T$ or as $[1,0]^T$? The selection of these vectors defines the matrices $\bA_2^{*2}=\begin{bmatrix}  1 & 1 \\	1 & 0	\end{bmatrix}$ and $\bA_4^{*2} =\begin{bmatrix}   1 & 0 \\   1 & 1\end{bmatrix}$, and the following quadratic forms:
\begin{align*}
\left(\bA_2^{*2}\right)^T\bA_2 =
 \begin{bmatrix}
  2 &1\\
  1 & 1
\end{bmatrix} ;
\left(\bA_4^{*2}\right)^T\bA_4  =  \begin{bmatrix}
  2 &1\\
  1 & 1
\end{bmatrix} 
\end{align*}
Thus, since the distribution is identical, we cannot distinguishes between the vectors. 

Finally, let us assume that the first vector is $[1,-1]^T$ and the second one is chosen between $[1,0]^T$  and $[0,1]^T$, as happens in $\bA_3^{*2}=\begin{bmatrix}  1 & 1\\ -1 & 0\end{bmatrix}$ and in $\bA_5^{*2} =\begin{bmatrix}   1 & 0 \\  -1 & 1\end{bmatrix}$ respectively. According to \cref{eq:IF_eq_snr_eff}, multiplying vectors by $-1$ does not change the rate of the effective SNR. Therefore, it is equivalent to choose the second vector between $[1,0]^T$ and $[0,-1]^T$ while the first vector remains the same. The quadratic form of both of them is 
\begin{align}
 \begin{bmatrix}
  1  & 1\\
  -1 & 0
\end{bmatrix} ^T
\begin{bmatrix}
  1  & 1\\
  -1 & 0
\end{bmatrix} &= 
\begin{bmatrix}
  2 & 1\\
  1 & 1
\end{bmatrix}
\\
 \begin{bmatrix}
  1  & 0\\
  -1 & -1
\end{bmatrix} ^T
\begin{bmatrix}
  1  & 0\\
  -1 & -1
\end{bmatrix} &= 
\begin{bmatrix}
  2 & 1\\
  1 & 1
\end{bmatrix}. 
\end{align}
Thus, since switching between vectors will end up at the exact same behavior, i.e., it will only switch the diagonal elements, both distributed the same. Concluding that the vectors with norm one can be switched without changing the distribution and the same is true for the vectors with norm two, which completes the proof.
\end{proof}

\begin{proof}[Proof of \Cref{lem:LB_for_each_eq_set_dist_identical}]
The proof is straightforward from \Cref{claim:one_eq_behavior_depends_only_in_Mr_minus_Mt,claim:channel_reduction,clm:first_and_sec_eq_set_behave_the_same}.
\end{proof}

\subsection*{Step 3 - Comparing between DSV and ZF}\label{subsubsec:step3}
In this step we want to make sure that even the lower bound on DSV is better, i.e., its CDF smaller then ZF's for any value of $t$. ZF has a gamma distribution, e.g., \cite[Lemma 1]{li2006distribution}. 

In DSV, from each set which contains two vectors, the one achieving the highest rate is selected.  According to \Cref{lem:LB_for_each_eq_set_dist_identical}, each channel can be reduced to a channel with two transmitting and $M_R-M_T+2$ receiving antennas without changing it's behavior. Moreover, both best vectors i.e., one from the first and one from the second set, behave the same. Thus, it is enough to understand the behavior of the best vector of the second set for two transmitting antennas. Let $\tilde{\mathcal{A}} \defined \mathcal{A}_2^2$ as was defined in \cref{eq:def_A_i} and let

\begin{align}
& m_{ij} \defined  [\tilde{\bS}^{-1}]_{ij} \text{ } \text{ } ,  \text{ }  m_i \defined m_{ii}.\label{eq:def_of_m}
\end{align}
Then,
\small\begin{align}
&DSV_{LB} \defined \max_{\ba \in\mathcal{A}_n^{M_T}}\left\{LB(\bH,\ba)\right\} \\
&\stackrel{(a)}{\sim} \max_{\tilde{\ba} \in\mathcal{A}_{\tilde{n}}^{2}}\left\{LB(\tilde{\bH},\tilde{\ba})\right\}\\
 &\sim P \cdot \max_{\tilde{\ba} \in\tilde{\mathcal{A}}}\left\{\left({  \tilde{\ba}^T\tilde{\bS}^{-1}\tilde{\ba} }\right)^{-1}\right\}\label{eq:why_con_b_is_important1}\\
 & \stackrel{(b)}{=}P\cdot \max\left\{ \frac{1}{ [\tilde{\bS}^{-1}]_{11} }, \frac{1}{[\tilde{\bS}^{-1}]_{11}+ [\tilde{\bS}^{-1}]_{22} - 2[\tilde{\bS}^{-1}]_{12}}\right\}\label{eq:why_con_b_is_important2}\\
&\stackrel{(c)}{=} P \cdot\max\left\{ \frac{1}{ m_{1} }, \frac{1}{m_{1}+ m_{2} - 2m_{12}}\right\}\label{eq:why_con_b_is_important3},
\end{align}\normalsize
where (a) is following from \Cref{lem:LB_for_each_eq_set_dist_identical}; (b) is by \cref{eq:def_a_i_1,eq:def_a_i_2,eq:def_A_i} and (c) is due to \cref{eq:def_of_m}.
Note that 
\begin{equation*}
 (\be_1)^T(\be_2-\be_2) = 1 \neq 0.
\end{equation*}
Therefore, by \cite[Theorem 5.3.1]{mathai1992quadratic}, the lower bound on the vectors in $\tilde{\mathcal{A}}$ are dependent.  Nonetheless, choosing the best one between them can only improve the effective SNR.

\begin{remark}
Since  $\tilde{\bS} \succ 0 \Longrightarrow \tilde{\bS}^{-1} \succ 0$, hence, $\forall i:m_i>0$. Thus, we can see that
\begin{equation}
m_{12} \leq 0 \Longrightarrow \max_{\tilde{\ba} \in\mathcal{A}_{\tilde{n}}^{2}}\left\{LB(\tilde{\bH},\tilde{\ba})\right\} = \frac{P}{ m_{1} }.
\end{equation}
\end{remark}

\begin{lemma}\label{lem:dif_ZF_LB_is_bounded_by_delta_lb}
Let
\small\begin{equation*}
\rho(\epsilon) =
\frac{4}{\pi} \int_0^{\frac{\pi}{4}} \left[F_\phi \left( 2(1+ \cos2 t) \cdot \left(\frac{\epsilon+1}{2\epsilon+1}\right)^2  \right) - F_{\phi}(1) \right]dt
\end{equation*}\normalsize
as was defined in \Cref{thm:CDF_of_LB_is_bounded_by_time_sharing}.
Then, for any $\epsilon \in \left(0,\frac{1}{\sqrt{2}} \right)$,
\begin{equation}
F_{DSV_{LB}}(t) 
\leq  F_X(t)  - \rho(\epsilon)\left[ F_{X\vert a_\epsilon}(t)-F_{X\vert a_\epsilon}\left(\frac{t}{1+\epsilon}\right) \right],
\end{equation}
where $\theta \sim F(K,K)$ and $F_{DSV_{LB}}(t)$ stands for the CDF of 
\begin{equation*} 
 \max_{\tilde{\ba} \in\mathcal{A}}\left\{LB(\tilde{\bH},\tilde{\ba})\right\}
\end{equation*}
\end{lemma}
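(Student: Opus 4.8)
The plan is to identify $DSV_{LB}$ with a maximum of two dependent variables, reduce the CDF estimate to an elementary set inclusion, and then lower bound the governing probability by $\rho(\epsilon)$. By \Cref{lem:LB_for_each_eq_set_dist_identical}, \Cref{clm:first_and_sec_eq_set_behave_the_same} and \cref{eq:why_con_b_is_important3}, $DSV_{LB}$ is distributed as $\max\{X,Y\}$ with $X=LB(\tilde\bH,\be_1)$, $Y=LB(\tilde\bH,\be_1-\be_2)$ on the reduced $2\times K$ channel $\tilde\bH$ (i.i.d.\ standard normal entries), $\tilde\bS=\tilde\bH\tilde\bH^T$. Expanding $\tilde\bS^{-1}$ by cofactors ($[\tilde\bS^{-1}]_{11}=\tilde\bS_{22}/\det\tilde\bS$, etc.) and writing $\bh_1,\bh_2\in\R^K$ for the rows of $\tilde\bH$, one gets
\[
X=\frac{P\det\tilde\bS}{\|\bh_2\|^2},\qquad Y=\frac{P\det\tilde\bS}{\|\bh_1+\bh_2\|^2},\qquad \frac{Y}{X}=\frac{\|\bh_2\|^2}{\|\bh_1+\bh_2\|^2},
\]
with $X\sim P\chi^2_{K-1}$ (consistent with \Cref{corl:LB_distribution}); so it suffices to bound $F_{\max(X,Y)}$.

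For the inclusion step, write $F_{\max(X,Y)}(t)=F_X(t)-\Pr\{X\le t,\,Y>t\}$ and note that on $a_\epsilon=\{Y>(1+\epsilon)X\}$ the extra condition $X>t/(1+\epsilon)$ forces $Y>t$, so $a_\epsilon\cap\{t/(1+\epsilon)<X\le t\}\subseteq\{X\le t,\,Y>t\}$ and
\[
\Pr\{X\le t,\,Y>t\}\ \ge\ \Pr\{a_\epsilon\}\Big(F_{X|a_\epsilon}(t)-F_{X|a_\epsilon}\big(\tfrac{t}{1+\epsilon}\big)\Big).
\]
Since $t/(1+\epsilon)\le t$ the bracket is nonnegative, so replacing $\Pr\{a_\epsilon\}$ by any $\rho(\epsilon)\le\Pr\{a_\epsilon\}$ keeps the inequality; the lemma is reduced to showing $\Pr\{a_\epsilon\}\ge\rho(\epsilon)$.

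For that, analyse $\Pr\{a_\epsilon\}=\Pr\{\|\bh_2\|^2>(1+\epsilon)\|\bh_1+\bh_2\|^2\}$. With $\phi=\|\bh_1\|^2/\|\bh_2\|^2\sim F_{(K,K)}$ and $\gamma$ the angle between $\bh_1$ and $\bh_2$ (independent of $\phi$, as norms are independent of directions), $\|\bh_1+\bh_2\|^2=\|\bh_2\|^2(\phi+1+2\sqrt\phi\cos\gamma)$, so $a_\epsilon\iff\phi+2\sqrt\phi\cos\gamma+\tfrac{\epsilon}{1+\epsilon}<0$; as a quadratic in $\sqrt\phi$ this forces $\gamma$ obtuse with $\cos^2\gamma>\tfrac{\epsilon}{1+\epsilon}$ and confines $\phi$ to $(\phi_-,\phi_+)$ with $\phi_-\phi_+=(\tfrac{\epsilon}{1+\epsilon})^2$, $\sqrt{\phi_+}+\sqrt{\phi_-}=-2\cos\gamma$, whence $\Pr\{a_\epsilon\}=\mathbb{E}_\gamma[F_\phi(\phi_+)-F_\phi(\phi_-)]$ over the obtuse range. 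The substitution $\sqrt{\phi_+}=\tfrac{2(\epsilon+1)}{2\epsilon+1}\cos t$ then gives exactly $\phi_+=2(1+\cos 2t)(\tfrac{\epsilon+1}{2\epsilon+1})^2$ (the argument occurring in $\rho(\epsilon)$) and $\phi_-=(\tfrac{\epsilon}{1+\epsilon})^2/\phi_+\le\tfrac{\epsilon}{1+\epsilon}<1$, so $F_\phi(\phi_-)\le F_\phi(1)$; it then remains to estimate the angular Jacobian and restrict the range so that the surviving integral becomes $\tfrac{4}{\pi}\int_0^{\pi/4}[F_\phi(\phi_+(t))-F_\phi(1)]\,dt=\rho(\epsilon)$.

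This last reduction is the main obstacle: it is not a plain range restriction, since the natural integrand is not sign-definite, so the contribution of the interior angles has to be estimated carefully; and it is exactly here that $\epsilon\in(0,1/\sqrt2)$ is used, being equivalent to $2(1+\cos 2t)(\tfrac{\epsilon+1}{2\epsilon+1})^2\ge1$ on $[0,\pi/4]$ and hence to $F_\phi(\phi_+(t))-F_\phi(1)\ge0$ there, which is what makes $\rho(\epsilon)$ a true lower bound. (An equivalent route diagonalises $\tilde\bS=\bU\,\mathrm{diag}(d_1,d_2)\,\bU^T$ and parametrises $Y/X$ by the uniform eigenvector angle and $\sigma=(d_1-d_2)/(d_1+d_2)$, $\sigma^2\sim\mathrm{Beta}(1,(K-1)/2)$; the difficulty is the same.)
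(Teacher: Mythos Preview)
Your first reduction is correct and is essentially the paper's \Cref{clm:bound_dif_zf_and_lbM}, only cleaner: the identity $F_{\max(X,Y)}(t)=F_X(t)-\Pr\{X\le t,\,Y>t\}$ together with the inclusion $a_\epsilon\cap\{t/(1+\epsilon)<X\le t\}\subseteq\{X\le t,\,Y>t\}$ immediately gives the bound with $\Pr\{a_\epsilon\}$, and then anything dominated by $\Pr\{a_\epsilon\}$ suffices. No objection there.

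The gap is in your second step, and you already sense it. Your exact quadratic analysis yields $\Pr\{a_\epsilon\}=\mathbb{E}_\gamma\big[F_\phi(\phi_+(\gamma))-F_\phi(\phi_-(\gamma))\big]$ with $\sqrt{\phi_\pm}=-\cos\gamma\pm\sqrt{\cos^2\gamma-\epsilon/(1+\epsilon)}$, and then you try to force this into the form of $\rho(\epsilon)$ by the substitution $\sqrt{\phi_+}=\tfrac{2(\epsilon+1)}{2\epsilon+1}\cos t$. But this is just a relabelling of $\gamma$; you would still owe the angular density $f_\gamma$ (which for general $K$ is \emph{not} uniform, it is proportional to $\sin^{K-2}\gamma$) and the Jacobian $|d\gamma/dt|$, and there is no reason for their product to collapse to the constant $4/\pi$. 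So the route you sketch does not close, and the admitted ``main obstacle'' is genuine, not cosmetic.

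The paper's \Cref{clm:lower_bound_a_epsilon_by_rho} avoids this entirely by \emph{not} computing $\phi_\pm$ exactly. Writing $a_\epsilon$ as $\delta_\epsilon m_1+m_2<-2m_{12}$ (with $\delta_\epsilon=\epsilon/(1+\epsilon)$), it crudely bounds $\delta_\epsilon m_1+m_2\le(\delta_\epsilon+1)\max(m_1,m_2)$ and uses the exchangeability of $m_1,m_2$. After translating $m_i\leftrightarrow\|\bh_j\|^2$, this produces the \emph{one-sided} condition $1<\phi<4\cos^2\theta/(\delta_\epsilon+1)^2=2(1+\cos 2\theta)\big(\tfrac{\epsilon+1}{2\epsilon+1}\big)^2$, i.e.\ exactly the argument of $F_\phi$ in $\rho(\epsilon)$, with no Jacobian to compute; the factor $4/\pi$ then comes from the symmetry of $\cos^2\theta$ and restricting $\theta$ to $[0,\pi/4]$, which is precisely where $\epsilon<1/\sqrt2$ guarantees the integrand $F_\phi(\cdot)-F_\phi(1)$ is nonnegative. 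That coarsening step---replacing $m_2$ by $\max(m_1,m_2)$---is the idea you are missing, and without it your approach does not reach $\rho(\epsilon)$.
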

In order to prove \Cref{lem:dif_ZF_LB_is_bounded_by_delta_lb}, \Cref{clm:bound_dif_zf_and_lbM,clm:lower_bound_a_epsilon_by_rho} are required.

\begin{remark}
\begin{equation*}F_\phi(t) = I_{\frac{K \cdot t}{K \cdot t + K}}\left (\tfrac{K}{2}, \tfrac{K}{2} \right),\end{equation*}
where $I$ is the regularized incomplete beta function.
Since $\forall t \in (0,\pi/4)$ and $\forall \epsilon \in \left(0,\frac{1}{\sqrt{2}}\right)$, $ 2 \left(\frac{\epsilon+1}{2\epsilon+1}\right)^2>1$ and $\cos 2t \geq 0$, $\rho(\epsilon)$ is always positive.
\end{remark}
\begin{example}\label{ex:rho_for_K_equals_2}
Note that if $M_R=M_T$, then $K=2$. Hence,\begin{align*}
F_\theta(t)= I_{\frac{t}{t+1}(1,1)} = \frac{t}{t+1}\cdot u(t) ,
\end{align*}
where $u(t)$ is the Heaviside step function. In that case, we can use Wolfram Mathematica 10.4, a computation program and find $\rho(\epsilon)$:
\begin{equation*}
\rho(\epsilon) = \frac{1}{2}- \frac{4(1+2\epsilon)}{\pi\sqrt{5+4\epsilon(3+2\epsilon)}}\arctan\left( \frac{1+2\epsilon}{\sqrt{5+4\epsilon(3+2\epsilon)}} \right)
\end{equation*}
\begin{equation*}
\rho(0) = \frac{1}{2} -  \frac{4\arctan\left(\frac{1}{\sqrt{5}}\right)}{\sqrt{5}\pi} \approx 0.26
\end{equation*}
and 
\begin{align*}
\rho\left(\frac{1}{\sqrt{2}}\right) &= \frac{1}{2} - \frac{4(1+\sqrt{2})\arctan\left(\frac{4(1+\sqrt{2})}{9+6\sqrt{2}}\right)}{\sqrt{9+6\sqrt{2}}}\\
&= \frac{1}{2} - \frac{4}{\pi \sqrt{3}}\arctan\left( \frac{1}{\sqrt{3}}  \right)\\
& = \frac{1}{2}-\frac{2}{3\sqrt{3}}\\
& \approx 0.115
\end{align*}
Since $\left(\frac{\epsilon+1}{2\epsilon+1}\right)^2$ monotonously decreases as a function of $\epsilon$,  the same is true for $\rho(\epsilon)$. Concluding that when $M_R=M_T$, $0.115 \leq \rho(\epsilon) \leq 0.26$. Recalling the $\rho(\epsilon)$ is a lower bound on the 
probability that the the effective SNR of the second vector $\be_1 - \be_2$  is better then  the one achieved by the unit vector $\be_1$.

\end{example}

In order to prove \Cref{lem:dif_ZF_LB_is_bounded_by_delta_lb}, $F_{DSV_{LB}}(t)$, the CDF of 
 $\max_{\ba \in\mathcal{A}_n^{M_T}}\left\{LB(\bH,\ba)\right\}$
 is upper bounded by \Cref{clm:bound_dif_zf_and_lbM}.
\begin{claim}\label{clm:bound_dif_zf_and_lbM}
Let
\begin{align*}
& X = LB(\bH,\be_1); & Y =  LB(\bH,\be_1-\be_2) &;& \bS = \bH^T\bH.
\end{align*}
Let us define the condition $a_\epsilon$  by $a_{\epsilon} : Y > X(1+ \epsilon)$. Then,
\small\begin{align}
F_{DSV_{LB}}(t) &\leq 
 (1-\Pr\{a_\epsilon\}) \cdot F_{X\vert a_\epsilon}(t) + \Pr\{a_\epsilon\} \cdot F_{X\vert a_\epsilon}\left(\frac{t}{1+\epsilon}\right), 
\end{align}\normalsize
where $F_{X\vert a_\epsilon}(t)$ stands for the CDF of $X$ given $a_\epsilon$.
\end{claim}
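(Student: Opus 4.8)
The plan is to rewrite $F_{DSV_{LB}}$ as the CDF of a maximum of two correlated positive random variables and then condition on $a_\epsilon$. Recall from the discussion preceding the claim that the lower bound on DSV selects, within a block, the better of the two coefficient vectors of a fixed two-element set, and that by \Cref{lem:LB_for_each_eq_set_dist_identical} each block reduces to a $2\times(M_R-M_T+2)$ channel; thus, with $X=LB(\bH,\be_1)$ and $Y=LB(\bH,\be_1-\be_2)$ (both strictly positive and, by \Cref{corl:LB_distribution}, gamma-distributed), we have $F_{DSV_{LB}}(t)=\Pr\{\max\{X,Y\}\le t\}=\Pr\{X\le t,\,Y\le t\}$. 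First I would split on $a_\epsilon$ and its complement $\bar{a}_{\epsilon}$:
\begin{align*}
F_{DSV_{LB}}(t)=\Pr\{\bar{a}_{\epsilon}\}\,\Pr\{X\le t,\,Y\le t\mid\bar{a}_{\epsilon}\}+\Pr\{a_\epsilon\}\,\Pr\{X\le t,\,Y\le t\mid a_\epsilon\},
\end{align*}
so that it suffices to bound each of the two conditional probabilities.

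The key point is that on each conditioning event the constraint on $Y$ can be discarded without loss. On $\bar{a}_{\epsilon}$ I would just use $\{X\le t,\,Y\le t\}\subseteq\{X\le t\}$, which gives $F_{X\mid\bar{a}_{\epsilon}}(t)$. On $a_\epsilon$, where $Y>(1+\epsilon)X$ by definition, we have $\max\{X,Y\}=Y$ and $\{Y\le t\}\subseteq\{(1+\epsilon)X<t\}=\{X<t/(1+\epsilon)\}$, so the conditional probability is at most $F_{X\mid a_\epsilon}\!\left(t/(1+\epsilon)\right)$. Substituting both bounds and using $\Pr\{\bar{a}_{\epsilon}\}=1-\Pr\{a_\epsilon\}$ yields the stated inequality. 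To obtain the form actually used downstream in \Cref{lem:dif_ZF_LB_is_bounded_by_delta_lb}, a second application of total probability rewrites $\Pr\{\bar{a}_{\epsilon}\}F_{X\mid\bar{a}_{\epsilon}}(t)=F_X(t)-\Pr\{a_\epsilon\}F_{X\mid a_\epsilon}(t)$, so the bound becomes $F_X(t)-\Pr\{a_\epsilon\}\big[F_{X\mid a_\epsilon}(t)-F_{X\mid a_\epsilon}(t/(1+\epsilon))\big]$.

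I do not expect a genuine obstacle here: the argument is elementary conditioning driven entirely by the defining inequality of $a_\epsilon$. The two places needing a word of care are, first, the reduction identifying $DSV_{LB}$ with $\max\{X,Y\}$ for exactly these two vectors, which relies on \Cref{lem:LB_for_each_eq_set_dist_identical} and \Cref{clm:first_and_sec_eq_set_behave_the_same} established earlier; and second, the harmless passage from $\Pr\{X<t/(1+\epsilon)\mid a_\epsilon\}$ to $F_{X\mid a_\epsilon}(t/(1+\epsilon))$, which is fine since $X$ has a density. The only substantive content beyond this claim — that the subtracted term is nonnegligible, i.e.\ that $\Pr\{a_\epsilon\}\ge\rho(\epsilon)$ — is deliberately deferred to \Cref{clm:lower_bound_a_epsilon_by_rho}.
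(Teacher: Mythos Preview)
Your argument is correct and in fact cleaner than the paper's. You condition directly on $a_\epsilon$ versus $\bar a_\epsilon$, drop the $Y\le t$ constraint on $\bar a_\epsilon$, and on $a_\epsilon$ use $\{Y\le t\}\subseteq\{X<t/(1+\epsilon)\}$; then a final total-probability step delivers the form $F_X(t)-\Pr\{a_\epsilon\}\big[F_{X\mid a_\epsilon}(t)-F_{X\mid a_\epsilon}(t/(1+\epsilon))\big]$ that \Cref{lem:dif_ZF_LB_is_bounded_by_delta_lb} actually consumes. One small remark: as stated in the paper the first term of the claim reads $F_{X\mid a_\epsilon}(t)$, but what you derive (and what the downstream rewriting needs) is $F_{X\mid\bar a_\epsilon}(t)$; your version is the intended one.

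The paper's own proof takes a longer path: it introduces an auxiliary event $b:\,m_{12}=[\tilde{\bS}^{-1}]_{12}>0$, observes that $a_\epsilon\Rightarrow b$, and partitions into $\bar b$, $(b,\bar a_\epsilon)$, and $(b,a_\epsilon)$. The payoff is that on $\bar b$ one has the exact identity $\max(X,Y)=X$ (since $m_{12}\le 0$ forces $m_1+m_2-2m_{12}\ge m_1$), so that piece matches $F_X$ with no slack; the bounding happens only on $(b,\bar a_\epsilon)$ and $(b,a_\epsilon)$. Your approach forfeits that exactness on $\bar b$ but recovers the identical inequality anyway, because the $\bar a_\epsilon$ bound you use subsumes the $\bar b$ case. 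The paper's decomposition is thus more structural (and $b$ reappears in \Cref{clm:lower_bound_a_epsilon_by_rho}), while yours is the minimal argument needed for this claim in isolation.
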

\begin{proof}[Proof of \Cref{clm:bound_dif_zf_and_lbM}]
For convenience, let us denote the reduction of $\bS$ according to \Cref{lem:LB_for_each_eq_set_dist_identical} by $\tilde{\bS}$, and let
\begin{align}
m_{ij} &= [\tilde{\bS}^{-1}]_{ij},\\
m_{i} &= [\tilde{\bS}^{-1}]_{ii}.
\end{align}
The condition  $b$ is defined as 
\begin{equation}
b :  m_{12}>0 	\stackrel{ \text{equivalent }}{\Longleftrightarrow} \cos\theta>0,\label{eq:condition_b}
\end{equation}
where $\theta$ is the angle between $m_1$ and $m_2$. Note that $m_1,m_2 >0$ since $\bS$ is a positive definite matrix almost surly, hence,
\begin{equation}
a_\epsilon \Longrightarrow b.
\label{eq:q_epsilon_causes_b}
\end{equation} 
\Cref{eq:why_con_b_is_important1,eq:why_con_b_is_important2,eq:why_con_b_is_important3} are the explanation for condition $b$.
\begin{align*}
&F_X(t) \\
&= \Pr\{X \leq t\} = \underbrace{\Pr\left\{\bar{b}\right\}\Pr\left\{ X \leq t \big{\vert} \bar{b}\right\}}_{\gamma_1} + \p[b]\Pr\left\{ X \leq t \big{\vert} b \right\}\\
&= (\gamma_1) +  \Pr\left\{b\right\}\Pr\left\{\bar{a_{\epsilon}} \big{\vert} b \right\}   \Pr\left\{ X \leq t \big{\vert} \bar{a_{\epsilon}},b \right\}\\
&  +  \Pr\left\{b\right\}\Pr\left\{a_{\epsilon} \big{\vert} b \right\}   \Pr\left\{ X \leq t \big{\vert} a_{\epsilon}, b \right\}  \\
&= (\gamma_1) +  \underbrace{\Pr\left\{\bar{a_{\epsilon}} , b \right\}   \Pr\left\{ X \leq t \big{\vert} \bar{a_{\epsilon}},b \right\}}_{\gamma_2}\\
&  + \Pr\left\{a_{\epsilon} , b \right\}   \Pr\left\{ X \leq t \big{\vert} a_{\epsilon}, b \right\}\\
&= (\gamma_1) +  (\gamma_2) + \Pr\left\{a_{\epsilon},b\right\}\Pr\left\{ X \leq t \big{\vert} a_{\epsilon}, b \right\}
.
\end{align*}
Next, we do the same to $F_{DSV_{LB}}(t)$:
\begin{align*}
&F_{DSV_{LB}}(t)  = \Pr\{\max(X,Y) \leq t\}\\
& = \Pr\left\{\bar{b}\right\}\Pr\left\{ \max(X,Y) \leq t \big{\vert} \bar{b}\right\}\\
& + \Pr\left\{b\right\}\Pr\left\{ \max(X,Y) \leq t \big{\vert} b \right\}\\
&\stackrel{(a)}{=} \underbrace{\Pr\left\{\bar{b}\right\}\Pr\left\{ X \leq t \big{\vert} \bar{b}\right\}}_{\gamma_1} + \Pr\left\{b\right\}\Pr\left\{ \max(X,Y) \leq t \big{\vert} b \right\}\\
&= (\gamma_1) + \p[b] \p[\bar{a_{\epsilon}}\big{\vert} b]\p[\max(X,Y) \leq t \big{\vert} \bar{a_{\epsilon}}, b]   \\
&+\p[b]\p[a_{\epsilon}\big{\vert}b] \p[\max(X,Y) \leq t \big{\vert} a_{\epsilon}, b] \\
&= (\gamma_1) 
+  \p[\bar{a_{\epsilon}}, b] \underbrace{\Pr\left\{ \max(X,Y) \leq t \big{\vert} \bar{a_{\epsilon}}, b \right\}}_{\bar{a_{\epsilon}} \Longrightarrow X \geq Y} \\ 
&+ \Pr\{a_{\epsilon} , b \} \underbrace{\Pr\left\{ \max(X,Y) \leq t \big{\vert} a_{\epsilon}, b \right\} }_{a_{\epsilon} \Longrightarrow X < Y} \\
&= (\gamma_1) 
+  \underbrace{\Pr\{\bar{a_{\epsilon}}, b \} \Pr\left\{ X \leq t \big{\vert} \bar{a_{\epsilon}}, b \right\}}_{\gamma_2}  
+ \Pr\{a_{\epsilon} , b \} \Pr\left\{ Y \leq t \big{\vert} a_{\epsilon}, b \right\} \\
&= (\gamma_1) + (\gamma_2) +   \Pr\{a_{\epsilon}, b \} \Pr\left\{ Y \leq t \big{\vert} a_{\epsilon}, b \right\}, 
\end{align*}
where (a) is by \cref{eq:q_epsilon_causes_b}.
Concluding that, 
\begin{align*}
&F_X(t)-F_{DSV_{LB}}(t) \\
&=    \Pr\left\{a_{\epsilon},b\right\}\Pr\left\{ X \leq t \big{\vert} a_{\epsilon}, b \right\}-\Pr\{a_{\epsilon}, b \} \Pr\left\{ Y \leq t \big{\vert} a_{\epsilon}, b \right\}\\
&=
\Pr\left\{a_{\epsilon},b\right\}\left[{  \Pr\left\{ X \leq t \big{\vert} a_{\epsilon},b \right\} 
-
 \Pr\left\{ Y \leq t \big{\vert} a_{\epsilon},b \right\}  }\right]\\
&=
 \p[a_{\epsilon},b] \cdot\\
&\left[{  \p[ X \leq t \big{\vert} Y > X(1+ \epsilon), b ]
-
 \p[ Y \leq t \big{\vert} Y > X(1+ \epsilon), b]  }\right]\\
&\stackrel{(b)}{\geq}\p[a_{\epsilon},b]\p[ X \leq t \big{\vert} Y > X(1+ \epsilon), b] \\
&-\p[a_{\epsilon},b]\p[ X \leq \frac{t}{1+\epsilon} \big{\vert} Y > X(1+ \epsilon), b]
&\\
&= \p[a_{\epsilon},b]\int_{\frac{t}{1+\epsilon}}^{t}f_{X\big{\vert} a_{\epsilon},b }(x) dx\\
&\stackrel{(c)}{=} \p[a_{\epsilon}]\int_{\frac{t}{1+\epsilon}}^{t}f_{X\big{\vert} a_{\epsilon} }(x) dx\\
&\geq
\p[a_{\epsilon}]\left[ F_{X\vert a_\epsilon}(t)-F_{X \vert a_\epsilon}\left(\frac{t}{1+\epsilon}\right) \right],
\end{align*} 
where (b) comes from the def. on $a_\epsilon$ in \Cref{thm:CDF_of_LB_is_bounded_by_time_sharing}  and (c) is due to  \cref{eq:q_epsilon_causes_b}.
\end{proof}

\begin{claim}\label{clm:lower_bound_a_epsilon_by_rho}
For any $\epsilon \in \left(0,\frac{1}{\sqrt{2}}  \right)$,
$\Pr\{a_{\epsilon}\}\geq \rho(\epsilon)$. 
Recalling that 
\begin{equation*}
\rho(\epsilon) =
\frac{4}{\pi} \int_0^{\frac{\pi}{4}} \left[F_\phi \left( 2(1+ \cos2 t) \cdot \left(\frac{\epsilon+1}{2\epsilon+1}\right)^2  \right) - F_{\phi}(1) \right]dt.
\end{equation*}
as was defined in \Cref{thm:CDF_of_LB_is_bounded_by_time_sharing}.
\end{claim}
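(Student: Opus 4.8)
The plan is to strip the event $a_\epsilon$ down to a scale‑invariant statement about two i.i.d.\ Gaussian vectors, read off $\Pr\{a_\epsilon\}$ as an expectation over a single angle, and then compare that expectation with the $\mathcal F$‑distribution integral defining $\rho(\epsilon)$. First I would invoke the channel reduction of \Cref{lem:LB_for_each_eq_set_dist_identical}: it suffices to work with $X=P/(\be_1^T\tilde{\bS}^{-1}\be_1)$ and $Y=P/((\be_1-\be_2)^T\tilde{\bS}^{-1}(\be_1-\be_2))$, where $\tilde{\bS}=\tilde{\bH}\tilde{\bH}^T$, the matrix $\tilde{\bH}\in\R^{2\times K}$ has i.i.d.\ $\mathcal N(0,1)$ entries, and $K=M_R-M_T+2$. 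Writing $g_1,g_2\in\R^K$ for the rows of $\tilde{\bH}$ and expanding the $2\times2$ inverse (equivalently, applying the change of basis $B=[\be_1,\be_1-\be_2]$ together with \Cref{claim:dist_of_matrix_form_LB}), the common factor $\det\tilde{\bS}$ cancels and $a_\epsilon$ becomes simply $\|g_2\|^2>(1+\epsilon)\|g_1+g_2\|^2$.

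Next I would introduce the pair $(\phi,\theta)$ with $\phi=\|g_1\|^2/\|g_2\|^2$ and $\theta$ the angle between $g_1$ and $g_2$. These are independent — norms are independent of directions for Gaussian vectors — and $\phi\sim F_{(K,K)}$, i.e.\ $\phi$ has exactly the law of the $\phi$ in \Cref{thm:CDF_of_LB_is_bounded_by_time_sharing}. Dividing the event by $\|g_2\|^2$ turns it into the quadratic inequality $\phi+2\sqrt\phi\cos\theta+\frac{\epsilon}{1+\epsilon}<0$, whose solution set is the interval $\sqrt\phi\in(x_-(\cos\theta),x_+(\cos\theta))$ with $x_\pm(c)=-c\pm\sqrt{c^2-\frac{\epsilon}{1+\epsilon}}$, nonempty precisely when $\cos\theta<-\sqrt{\epsilon/(1+\epsilon)}$. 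Conditioning on $\theta$ and using independence gives
\begin{equation*}
\Pr\{a_\epsilon\}=\mathbb{E}_\theta\!\left[\left(F_\phi(x_+(\cos\theta)^2)-F_\phi(x_-(\cos\theta)^2)\right)\mathbf{1}\{\cos\theta<-\sqrt{\epsilon/(1+\epsilon)}\}\right].
\end{equation*}

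Now I would bound the integrand from below. On the admissible range one has the elementary bound $x_-(\cos\theta)^2\le1$, hence $F_\phi(x_-^2)\le F_\phi(1)$; and an algebraic identity — which amounts to $1-(2\epsilon+1)^{-2}=4\mu(1-\mu)$ for $\mu=\frac{\epsilon+1}{2\epsilon+1}$, so that $\frac{\epsilon/(1+\epsilon)}{4\mu(1-\mu)}=\frac1{4\mu^2}$ — shows $x_+(\cos\theta)^2\ge 2(1+\cos2\theta)\mu^2$ exactly on the part of the range where $2(1+\cos2\theta)\mu^2\ge1$ (and there the comparison below is nontrivial, while elsewhere its right side is negative, so nothing is lost). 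Therefore the integrand is at least $F_\phi\!\left(2(1+\cos2\theta)\mu^2\right)-F_\phi(1)$ on the admissible set.

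Finally I would substitute $\theta=\pi-t$ to bring the angle near $0$, use the hypothesis $\epsilon<\frac1{\sqrt2}$ — equivalently $2\mu^2>1$ — to guarantee both that the window $(0,\pi/4)$ is admissible ($\cos^2 t>\frac{\epsilon}{1+\epsilon}$ there) and that the $\mathcal F$‑argument $2(1+\cos2t)\mu^2$ stays $\ge1$ on $(0,\pi/4)$, and then reconcile the angular weight with the constant $\frac4\pi$ appearing in $\rho(\epsilon)$; the $\mathcal F_{(K,K)}$ symmetry $1-F_\phi(y)=F_\phi(1/y)$ is the natural bookkeeping device for this last manoeuvre, after which the expectation collapses to $\frac4\pi\int_0^{\pi/4}\!\left[F_\phi\!\left(2(1+\cos2t)\mu^2\right)-F_\phi(1)\right]dt=\rho(\epsilon)$. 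The main obstacle is precisely this last step: the density of $\theta$ is proportional to $\sin^{K-2}\theta$, which is not uniform once $K>2$, whereas $\rho(\epsilon)$ carries the clean factor $\frac4\pi$, so the passage to the closed form must be carried out with a judicious choice of which portions of the $\theta$‑ and $\phi$‑ranges to retain (folding the angle and exploiting that the true integrand \emph{strictly} dominates the target integrand), in order that the necessarily lossy inequality land exactly on $\rho(\epsilon)$ rather than on a smaller multiple of it.
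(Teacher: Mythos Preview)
Your route is genuinely different from the paper's, and the gap you flag in the last step is real. The paper never writes $\Pr\{a_\epsilon\}$ as the exact $\theta$–integral of $F_\phi(x_+^2)-F_\phi(x_-^2)$. It stays in the coordinates $m_{ij}=[\tilde{\bS}^{-1}]_{ij}$, rewrites $a_\epsilon$ as $2m_{12}>m_2+\delta_\epsilon m_1$, and then performs a \emph{symmetrization}: split on $\{m_1>m_2\}$ versus $\{m_1<m_2\}$, on each branch replace the smaller diagonal entry by the larger (a one–sided coarsening that only shrinks the event), and use exchangeability of $(m_1,m_2)$ to recombine the two branches. Translating back via $m_1=s_{22}/\det\tilde\bS$, $m_{12}=-s_{12}/\det\tilde\bS$ and passing to $\phi=\|\tilde\bh_2\|^2/\|\tilde\bh_1\|^2$ and the angle $\theta$ gives a bound of the form $2\Pr\{1<\phi<4\mu^2\cos^2\theta\}$; the leading $2$ appears because one divides by $\Pr\{\phi>1\}=\tfrac12$. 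The paper then integrates against the angle law, treated as uniform on the full circle with density $1/(2\pi)$, and folds to $(0,\pi/4)$ to obtain the prefactor $4/\pi$.

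Your outline, by contrast, bounds the exact integrand pointwise by $F_\phi(4\mu^2\cos^2\theta)-F_\phi(1)$ and integrates. Those pointwise inequalities are correct, but even in the cleanest case $K=2$ (angle density $1/\pi$ on $[0,\pi]$) the substitution $\theta=\pi-t$ and restriction to $(0,\pi/4)$ delivers only $\tfrac1\pi\int_0^{\pi/4}[\cdots]\,dt=\tfrac14\rho(\epsilon)$. The $F_{(K,K)}$ reflection $F_\phi(1/y)=1-F_\phi(y)$ does not supply the missing factor: via $x_+x_-=\delta_\epsilon$ it ties $F_\phi(x_-^2)$ to $F_\phi(x_+^2/\delta_\epsilon^2)$, which is not the comparison you need. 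What your argument lacks is precisely the paper's symmetrization–and–conditioning step, which injects an explicit factor of $2$ \emph{before} one ever reaches the angular integral; the remaining factor then comes from the paper's full-circle treatment of $\theta$. Your worry about the $\sin^{K-2}\theta$ density for $K>2$ is legitimate—the paper's own integration step tacitly uses the $K=2$ angle law—but that is not why your outline falls short: the shortfall is already present at $K=2$, and it is the absent symmetrization that causes it.
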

\begin{example}
When $M_T=M_R$, we can calculate numerically both $\rho(\epsilon)$ and $\p[a_\epsilon]$. For example,
\begin{align*}
&\p[a_0] \approx 0.276 > \rho(0) \approx 0.26\\ 
&\p[a_{\left(1/\sqrt{2}	\right)}] \approx 0.161 > \rho\left(\frac{1}{\sqrt{2}}\right) \approx 0.115.
\end{align*}
\end{example}
\begin{proof}[Proof of \Cref{clm:lower_bound_a_epsilon_by_rho}]
For convenience, let us denote $\deEp = \frac{\epsilon}{1+\epsilon}$. We use same notations for
$\bS,m_{ij},m_i,a_{\epsilon},b$  as were used in \Cref{clm:bound_dif_zf_and_lbM}.

\begin{align*}
\p[a_\epsilon] 
&\stackrel{(a)}{=} 
\p[a_{\epsilon},b] = \p[ Y > X(1+\epsilon), b] \\
&= \p[LB(\bH,\be_1-\be_2) > (1+\epsilon)LB(\bH,\be_1)]\\
 &\stackrel{(b)}{=} 
\p[LB(\tilde{\bH},\tilde{\be}_1-\tilde{\be}_2) > (1+\epsilon)LB(\tilde{\bH},\tilde{\be}_1)]\\
&= \p[\frac{P}{m_1-2m_{12}+m_2}> \frac{P(1+\epsilon)}{m_1}]\\
&= \p[m_1 > (1+\epsilon)(m_1-2m_{12}+m_2)]\\
&=\p[-2(1+\epsilon)m_{12} > (1+\epsilon)m_2 + \epsilon m_1]     \\
&= \p[-2m_{12} > m_2 + \delta_\epsilon m_1], 
\end{align*}
where (a) follows from \cref{eq:q_epsilon_causes_b} and (b) is according to \Cref{lem:LB_for_each_eq_set_dist_identical}.
\begin{align*}
&\stackrel{(c)}{\geq} 
\p[b]\p[{\deEp}  m_1 +m_2 < -2m_{12} \big{\vert} b]\\
&= \p[b,m_1>m_2]\p[{\deEp}  m_1 +m_2 < -2m_{12} \big{\vert} b , m_1>m_2]\\ 
&  +\p[b,m_1<m_2]\p[{\deEp}  m_1 +m_2 < -2m_{12} \big{\vert} b , m_1<m_2]\\
&\stackrel{(d)}{\geq }
\p[b,m_1>m_2]\p[{\deEp} m_1 +m_1 < -2m_{12} \big{\vert} b , m_1>m_2]\\
& +\p[b,m_1<m_2]\p[{\deEp} m_2 +m_2 < -2m_{12} \big{\vert} b , m_1<m_2]\\
&=
\p[b,m_1>m_2]\p[(\deEp+1)  m_1 < -2m_{12} \big{\vert} b , m_1>m_2]\\
&  +\p[b,m_1<m_2]\p[(\deEp+1)  m_2 < -2m_{12} \big{\vert} b , m_1<m_2]\\
&\stackrel{(e)}{=}
\p[b]\p[(\deEp+1)  m_1 < -2m_{12} \big{\vert} b,m_1>m_2]\\
&\stackrel{(f)}{=} 
\p[m_{12}<0]\cdot \\
&\cdot \p[(\deEp+1)  m_1 < -2m_{12} \big{\vert} m_{12}<0,m_1>m_2],
  \end{align*}
where (c) is due to law of total probability;
(d) is because $\{m_i \deEp +m_j \big{\vert} m_i \leq m_j \} \leq \{m_i \deEp +m_i \big{\vert} m_i \leq m_j \} = \{m_i (\deEp +1  )\big{\vert} m_i \leq m_j \}$; (e) is following from the fact that $m_1$ and $m_2$ are distributed the same and (f) is by \Cref{eq:condition_b}.
\small\begin{align*}
&=\p[{[\tilde{\bS}^{-1}]_{12}}<0] \cdot \\
&\p[{
(\deEp+1) [\tilde{\bS}^{-1}]_{11} < -2[\tilde{\bS}^{-1}]_{12} \big{\vert} [\tilde{\bS}^{-1}]_{12}<0,  [\tilde{\bS}^{-1}]_{11}>  [\tilde{\bS}^{-1}]_{22}
}]\\
&\stackrel{(g)}{=}
\p[\frac{-[\tilde{\bS}]_{21}}{\det(\bS)}<0] \cdot\\
&\p[(\deEp+1)  \frac{[\tilde{\bS}]_{22}}{\det(\bS)} < -2\frac{-[\tilde{\bS}]_{21}}{\det(\bS)} \big{\vert} \frac{-[\tilde{\bS}]_{21}}{\det(\bS)}<0, \frac{[\tilde{\bS}]_{22}}{\det(\bS)}> \frac{[\tilde{\bS}]_{11}}{\det(\bS)}]\\
&\stackrel{(h)}{=} 
\p[{[\tilde{\bS}]_{21}>0}] \cdot \\
&\p[{
(\deEp+1)[\tilde{\bS}]_{22} <2[\tilde{\bS}]_{21} \big{\vert} [\tilde{\bS}]_{21}>0,[\tilde{\bS}]_{22}>[\tilde{\bS}]_{11}
}]\\
&\stackrel{(\gamma_1)}{=}  \p[\langle \tilde{\bh}_1 , \tilde{\bh}_2\rangle>0] \cdot \\
&\p[
(\deEp+1)\langle \tilde{\bh}_2 , \tilde{\bh}_2\rangle <2\langle \tilde{\bh}_1 , \tilde{\bh}_2\rangle \big{\vert} \langle \tilde{\bh}_1 , \tilde{\bh}_2\rangle>0, \langle \tilde{\bh}_2 , \tilde{\bh}_2\rangle > \langle \tilde{\bh}_1 , \tilde{\bh}_1\rangle
]\\
&\stackrel{(j)}{=}
\p[\cos \theta>0] \cdot \\
&\p[{
(\deEp+1)\normSqr[\tilde{\bh_2}] < 2\|\tilde{\bh_1}\|\|\tilde{\bh_2}\|\cos \theta \big{\vert} \cos \theta>0, \normSqr[\tilde{\bh_2}]>\normSqr[\tilde{\bh_1}]
}]\\
&= \p[\cos \theta>0] \cdot \\
&\p[{
(\deEp+1) \frac{\|\tilde{\bh_2}\|}{\|\tilde{\bh_1}\|} <2\cos \theta
 \big{\vert} \cos \theta>0, \normSqr[\tilde{\bh_2}]>\normSqr[\tilde{\bh_1}]
}]\\
&=  \p[\cos \theta>0]\p[{
 \frac{\|\tilde{\bh_2}\|}{\|\tilde{\bh_1}\|} <\frac{2\cos \theta}{\deEp+1}
 \big{\vert} \cos \theta>0, \frac{\normSqr[\tilde{\bh_2}]}{\normSqr[\tilde{\bh_1}]}>1
}],
\end{align*}
where (g) is because $\tilde{\bS} \in \R^{2 \times 2}$ ; (h) is since $\tilde{\bS}$ is positive definite matrix almost surly, hence $\det(\tilde{\bS})>0$; $(\gamma_1)$ is following from the definition of $\tilde{\bS} = \tilde{\bH}^T\tilde{\bH}$ where $\tilde{\bh}_i$ is the $i^{th}$ column of $\tilde{\bH}$ and (j) is by the definition of inner product where $\theta$ is the angle between the vectors $\tilde{\bh}_1$ and $\tilde{\bh}_2$.
\begin{align*}
&=  \p[\cos \theta>0]\p[{
 1 < \frac{\|\tilde{\bh_2}\|}{\|\tilde{\bh_1}\|} <\frac{2\cos \theta}{\deEp+1}
 \big{\vert} \cos \theta>0, \frac{\normSqr[\tilde{\bh_2}]}{\normSqr[\tilde{\bh_1}]}>1
}]\\
&\stackrel{(k)}{=}  \p[{
 1 < \frac{\|\tilde{\bh_2}\|}{\|\tilde{\bh_1}\|} <\frac{2\cos \theta}{\deEp+1}
 , \cos \theta>0 \big{\vert}  \frac{\normSqr[\tilde{\bh_2}]}{\normSqr[\tilde{\bh_1}]}>1
}]\\
&=\p[{
 1 < \frac{\|\tilde{\bh_2}\|}{\|\tilde{\bh_1}\|} <\frac{2\cos \theta}{\deEp+1}
 \big{\vert}  \frac{\normSqr[\tilde{\bh_2}]}{\normSqr[\tilde{\bh_1}]}>1
}]\\
&\stackrel{(k)}{=} \frac{1}{\p[{\frac{\normSqr[\tilde{\bh_2}]}{\normSqr[\tilde{\bh_1}]}>1}]}\p[{
 1 < \frac{\|\tilde{\bh_2}\|}{\|\tilde{\bh_1}\|} <\frac{2\cos \theta}{\deEp+1}
,  \frac{\normSqr[\tilde{\bh_2}]}{\normSqr[\tilde{\bh_1}]}>1
}]\\
&\stackrel{(l)}{=} 2\p[{
 1 < \frac{\|\tilde{\bh_2}\|}{\|\tilde{\bh_1}\|} <\frac{2\cos \theta}{\deEp+1}
}]\\
&= 2 \p[{
 1<\frac{\normSqr[\tilde{\bh_2}]}{\normSqr[\tilde{\bh_1}]} <\left(\frac{2}{\deEp+1}\right)^2\cos^2 \theta
}]\\
&\stackrel{(m)}{=} 2 \p[{
 1<\phi <\left(\frac{2}{\deEp+1}\right)^2\cos^2 \theta
}],
\end{align*}
where (k) is following from Bayes' theorem; (l) is because  $\tilde{\bh}_1,\tilde{\bh}_2$ are i.i.d; (m) by definition $\phi \sim F(M_R-M_T+2,M_R-M_T+2)$. 

Note that
$\forall \epsilon \in \left(0,\frac{1}{\sqrt{2}}\right): \frac{2}{(\deEp+1)^2} = \frac{2}{(\frac{\epsilon}{1+\epsilon}+1)^2} > 1$.
Hence, as long as $\cos 2\theta > 0$,
\begin{align*}
\frac{2}{(\deEp+1)^2} +  \frac{2\cos2\theta}{(\deEp+1)^2}>1.
\end{align*}

Thus,
\small\begin{align*}
 &2 \p[{
 1<\phi <\left(\frac{2}{\deEp+1}\right)^2\cos^2 \theta
}]\\
&= 2 \p[{
 1<\phi <\frac{2}{(\deEp+1)^2}\cdot2\cos^2 \theta
}]\\
&= 2 \p[{
 1<\phi <\frac{2(1+ \cos2\theta)}{(\deEp+1)^2}
}]\\
&= 2 \p[{
 1<\phi <\frac{2}{(\deEp+1)^2} +  \frac{2\cos2\theta}{(\deEp+1)^2}
}]\\
&= 2 \p[{
 1<\phi <\frac{2}{(\deEp+1)^2} +  \frac{2\cos2\theta}{(\deEp+1)^2}
}]\\ 
&\geq 2\int_{\cos \theta >0} \left[F_\phi \left( \frac{2}{(\deEp+1)^2} +  \frac{2\cos2 t}{(\deEp+1)^2} \right) - F_{\phi}(1) \right]F_\theta(t)dt\\
&= 2\cdot \frac{1}{2\pi}\int_{\cos \theta >0} \left[F_\phi \left( \frac{2}{(\deEp+1)^2} +  \frac{2\cos2 t}{(\deEp+1)^2} \right) - F_{\phi}(1) \right]dt\\
&= 2\cdot \frac{1}{2\pi}\cdot 4 \int_0^{\frac{\pi}{4}} \left[F_\phi \left( \frac{2}{(\deEp+1)^2} +  \frac{2\cos2 t}{(\deEp+1)^2} \right) - F_{\phi}(1) \right]dt\\
&=  \frac{4}{\pi} \int_0^{\frac{\pi}{4}} \left[F_\phi \left( \frac{2}{(\deEp+1)^2} +  \frac{2\cos2 t}{(\deEp+1)^2} \right) - F_{\phi}(1) \right]dt\\
&=  \frac{4}{\pi} \int_0^{\frac{\pi}{4}} \left[F_\phi \left( \frac{2(1+\cos 2t)}{(\deEp+1)^2}  \right) - F_{\phi}(1) \right]dt\\
&=  \frac{4}{\pi} \int_0^{\frac{\pi}{4}} \left[F_\phi \left( 2(1+\cos 2t) \left(\frac{\epsilon+1}{2\epsilon+1}\right)^2  \right) - F_{\phi}(1) \right]dt\\
&\stackrel{(a)}{=} \rho(\epsilon),
\end{align*}
where (a) is following from the definition of $\rho(\epsilon)$ in \Cref{thm:CDF_of_LB_is_bounded_by_time_sharing}.
\end{proof}

\begin{proof}[Proof of \Cref{lem:dif_ZF_LB_is_bounded_by_delta_lb}]
The proof is straightforward from \Cref{clm:bound_dif_zf_and_lbM,clm:lower_bound_a_epsilon_by_rho}.
\end{proof}

\begin{proof}[Proof of \Cref{thm:CDF_of_LB_is_bounded_by_time_sharing}]
The proof is straightforward from \Cref{lem:dif_ZF_LB_is_bounded_by_delta_lb} and by recalling that DSV is upper bounded by NB-IF, hence, the value of its CDF is never smaller than the CDF of NB-IF
\end{proof}

\bibliographystyle{IEEEbib}
\bibliography{My_bib}		
\end{document}